\DeclareSymbolFont{matha}{OML}{txmi}{m}{it}% txfonts
\DeclareMathSymbol{\varv}{\mathord}{matha}{118}
\newtheorem{thm}{Theorem}[section] 
\newtheorem{prop}[thm]{Proposition} 
\newtheorem{lem}[thm]{Lemma}
\theoremstyle{definition}
\theoremstyle{remark} 
\newtheorem{rem}{Remark}[section]  
\def\eqref#1{(\ref{#1})} 
\newcommand {\mat}      [1] {\left[\begin{array}{#1}}
\newcommand {\rix}          {\end{array}\right]}
\newcommand {\de}      [1] {\left|\begin{array}{#1}}
\newcommand {\nt}          {\end{array}\right|}
\newcommand{\bstar}       {\begin{eqnarray*}}
\newcommand{\estar}       {\end{eqnarray*}}
\newcommand{\eqn}       {\begin{eqnarray}}
\newcommand{\enn}       {\end{eqnarray}}
\newcommand{\eq}[1]   {\begin{equation}\label{#1}}
\newcommand{\en}      {\end{equation}}
\begin{document}
%\author{} 

\begin{titlepage}  
\title{ Local Hard-Sphere Poisson-Nernst-Planck Models for Ionic Channels with Permanent Charges }  
 \date{\today}
 \author{Weishi Liu\footnote{Department of Mathematics, University of Kansas, 
Lawrence, KS 66045 ({\tt wsliu@ku.edu}).  
} \; and Hamid Mofidi\footnote{Department of Mathematics, University of Iowa, Iowa City, IA 52242 ({\tt hamid-mofidi@uiowa.edu}).} }
\end{titlepage}

\maketitle  

%\tableofcontents
%\newpage

%%%%%%%%%%%%%%%%%%%%%%%%%%%%%%%%%%%%%%
%%%%%%%%%%%%%%%%%%%%%%%%%%%%%%%%%%%%%%%%
\begin{abstract}
The main goal of this work is to examine the qualitative effect of ion sizes via a steady-state boundary value problem.
We study a one-dimensional version of a Poisson-Nernst-Planck system with a local hard-sphere potential model for ionic flow through a membrane channel with fixed boundary ion concentrations and electric potentials. A complete set of integrals for the inner system is illustrated that delivers information for boundary and internal layers. In addition, a group of simultaneous equations appears in the construction of singular orbits. 
The research aims to set up a simple formation defined by the profile of permanent charges with two mobile ion species, one positively charged, cation, and one negatively charged, anion.
A local hard-sphere potential that depends pointwise on ion concentrations is included in the model to estimate ion-size impacts on the ionic flow.  The analysis is built on the geometric singular perturbation theory, particularly on specific structures of this concrete model.
For 1:1 ionic mixtures, we first conduct precise mathematical analysis and derive a matching system of nonlinear algebraic equations.
We then extend the results by directing on a critical case where the current is zero.
Treating the ion sizes as small parameters, we derive an approximation of zero-current fluxes for general values of permanent charge. We then focus on small values of permanent charge to obtain more concrete outputs. 
We will also examine ion size effects on the flow rate of matter for the zero-current case.
\end{abstract}

\noindent
{\bf Key words.} Ionic flows, Local hard-sphere PNP, GSP for PNP, Zero-current fluxes
%%%%%%%%%%%%%%%%%%%%%%%%%%%%%%%%%%%%%%%%%%

\section{Introduction.}
\setcounter{equation}{0}
The migration of ions through open ion channels is one of the most remarkable physical problems performed by living cells. Cells are enveloped by lipid membranes that are almost impermeable to physiological ions (mostly Na$^+$, K$^+$, Ca$^{2+}$, and Cl$^-$). One mechanism for ions to travel across these membranes is through open ion channels, which are proteins found in cell membranes that produce holes in the membrane to enable cells to interact with each other and with the outer surface to transform signals and to conduct tasks together \cite{BNVHEG,Eis00,Hil01,Hille89}.

A primary role of ion channels is to manage the permeability of membranes for a given species of ions, and to choose the types of ions, and to help and modulate the diffusion of ions across cell membranes \cite{CE,Eis,GNE,IR}. One typically determines the permeation and selectivity features of ion channels from the current-voltage (I-V) relations measured experimentally. Individual fluxes transfer more data than the current; however, measuring them has been discouraging and costly \cite{JEL19,SNE01,VGK10}.

Device equations are most useful when they predict complex behaviors realistically while using only a few parameters with fixed values. Fortunately, electrodiffusion allows productive behavior with simple device equations and a fixed set of parameters. One may describe the diverse (technologically important) behavior of transistors by simple conservation laws and constitutive relations, the Poisson-Nernst-Planck (PNP) equations using fixed values of parameters.
The simplest PNP system is the classical PNP (cPNP) system \cite{EL07}. The classical PNP treats dilute ionic mixtures, where no ionto-
ion interactions are involved. More sophisticated models have also been studied in \cite{BKSA09,EHL10,EL17,HBRR18}, etc.

The underlying assumption in the derivation of cPNP (classical PNP) systems is a ``dilute'' hypothesis so that one can treat ions as point charges. In particular, cPNP systems treat ions with the equal valences essentially the same except it associates different diffusion coefficients to ion species with varying sizes of ion.
This over-simplification of cPNP systems could cause a severe defect to model in many biological situations. 
 Many vital features of ion channels, such as
selectivity, rely on ion the sizes critically. For example, Na$^+$ (sodium) and K$^+$ (potassium), have the same valences (number of charges per particle); however, they have significantly distinctive biological characteristics due chiefly to their different sizes. That is the inconsistency in their ionic sizes that allows some channels to prefer Na$^+$ to K$^+$, and some to prefer K$^+$ to Na$^+$.

To consider ion sizes, one needs to include the excess chemical potential, beyond the ideal, in the model. To value the ion size effects, one may combine an element to the PNP-type model, which is a hard-sphere (HS) potential. The PNP system, combined with Density Functional Theory (DFT) for hard-sphere potentials of ion species, serves the purpose for this consideration and has been studied computationally with significant improvements \cite{GNE,JL12,SL18}.

%%%%%%%%%%%%%%%%%%%%%%%%%%%%%%%%%%%%%%%%%

 Including HS potential models of the excess electrochemical potential is the first step toward more reliable modeling and is required to account for ion size effects on the physiology of ionic flows. There are two classes of models for HS potentials, local and nonlocal. Local models for HS potentials, like the model defined in \eqref{assump} used in this paper, depend pointwise on ion concentrations, while nonlocal models are aimed as functionals of ion concentrations (see the DFT-PNP system  defined in \cite{JL12}). The PNP systems
with ionic sizes have been examined computationally for ion channels, and have recorded magnificent progress \cite{LLYZ13}.

One of the earliest local models for HS potentials was proposed by Bikerman \cite{B42}, which contains an ion size effect of mixtures but is not ion-specific. The HS potential is assumed to be the same for different ion species. Local models undoubtedly have the benefit of simplicity relative to nonlocal ones. 
In \cite{LLYZ13}, the authors study a one-dimensional version of a PNP-type system with a local model for the HS potential. The problem here in this work has fundamentally the same setting as in \cite{LLYZ13}; however, it includes nonzero permanent charges to the system, which makes it much more intricate. More models can be seen in \cite{HEL10}.

This paper is formed as follows. The local hardsphere PNP model for ionic 
flows with nonzero permanent charges is introduced in Section \ref{Sec.PNP} to set the platform for examinations in the following sections. Section \ref{Sec.GSPT} is the main body of this work where we apply the GSP theory on the PNP system to turn the BVP into a connecting system to acquire a nonlinear algebraic system of equations, called the matching system. In Section \ref{Sec.Ionsize-effects}, using the matching system and applying the zero-current condition, we derive zero-current fluxes that will lead to some interesting results. We then explore the qualitative impact of ion sizes to obtain some relevant results. In particular, we study the effects of ion size on zero-current fluxes. In Section \ref{Sec.Conclusion}, we express the importance and productivity of the work. We also bring up some future works we are working on and one may also take into consideration.

\section{Poisson-Nernst-Planck systems with a local HS model for ionic flows}\label{Sec.PNP}
\setcounter{equation}{0}

\subsection{A one-dimensional PNP-type system.} 
Since the ion channels have thin cross-sections compared to their lengths, three-dimensional PNP type models can be virtually viewed as one-dimensional models standardized over the interval [0, 1], where the interior and the exterior of the channel are bound. A genuine one-dimensional  PNP model for ionic flows of $n$ ion species is (\cite{EL07,LW10,NE98})

\begin{equation}\label{PNP}
\begin{array}{l}
  \dfrac{1}{h(x)} \dfrac{\partial}{\partial x}\Big( \varepsilon_r(x) \varepsilon_0 h(x) \dfrac{\partial \Phi}{\partial x} \Big) = - e\Big(\sum_{j=1}^n z_j c_j + Q(x) \Big), \\\\
 \dfrac{\partial c_k}{\partial t} +\dfrac{1}{h(x)} \dfrac{\partial \mathcal{J}_k}{\partial x} = 0, \quad -\mathcal{J}_k = \dfrac{1}{k_B T} D_k(x) h(x) c_k \dfrac{d\mu_k}{dx}, \quad k=1,2, ..., n,
  \end{array}
  \end{equation}

where $e$ is the elementary charge, $k_B$ is the Boltzmann constant, $T$ is the absolute temperature; $\Phi$ is the electric potential, $Q(x)$ is the permanent charge of the channel, $\varepsilon_r(x)$ is the relative
dielectric coefficient, $\varepsilon_0$ is the vacuum permittivity; $h(x)$ is the area of the cross section of the channel over the point $x$; and for the $k-$th ion species, $c_k$ is the concentration, $z_k$ is the valence (the number of charges per particle), $\mu_k$ is the electrochemical potential, $\mathcal{J}_k$ is the flux density, and $D_k(x)$ is the diffusion coefficient. The boundary conditions are, for $k = 1, 2, ..., n,$

\[
\Phi(t,0) = \mathcal{V}, \quad c_k(t,0)= l_k >0 ;  \quad \Phi(t,1) =0,  \quad c_k(t,1)=r_k >0.
\]

%%%%%%%%%%%%%%%%%%%%%%%%%%%%%%%%%%%%%%%%%%%%
\medskip
\noindent {\bf \underline{Excess potential and a local HS model.}}
We discuss an extension of the geometric singular perturbation approach for classical PNP models to include ion size effects on ionic flow properties.
Many properties of electrodiffusion  are sensitive to ion sizes. Microscopic chemical properties do depend on ion sizes in addition to others -ionic radius is an essential factor in the periodic table of chemical elements.

A one-dimensional ion specific model for hard-sphere potential is
\begin{equation}\label{hsphere} 
\dfrac{1}{k_B T} \mu_k^{HS} = -\ln \Big(1- \sum_j d_j c_j \Big) + \dfrac{d_k \sum_j c_j}{1- \sum_j d_j c_j}
\end{equation}
where $d_j$ is the ionic diameter of $j$th ion species. The factor $d_k$ in the second term makes the model ion specific. This model for $\mu_k^{HS}$ is the local version of the nonlocal model of Percus-Yevick approximation....
%%%%%%%%%%%%%%%%%%%%%%%%%%%%%%%%%%%%%%%%%%%
%\noindent {\bf Setup of the problem.}
For definiteness, we take the following setting:
\begin{itemize}
\item[(A1).] We consider two ion species ($n=2$) with $z_1 > 0> z_2$.
\item[(A2).] For the electrochemical potential $\mu_k$, in addition to the ideal component $\mu_k^{id}$, we also include the local hard-sphere potential $\mu_k^{HS}$ in \eqref{hsphere}.
\item[(A3).] The relative dielectric coefficient and the diffusion coefficient are constants, that is, $\varepsilon_r(x)= \varepsilon_r$ and $D_k(x)= D_k$.
\end{itemize}

%%%%%%%%%%%%%%%%%%%%%%%%%%%%%%%%%%%%%%%%%%%%%%%%
\subsection{The steady-state boundary value problem and assumptions. }
Under the assumptions $(A1)-(A3)$, the steady-state system of PNP model in \eqref{PNP} is
\begin{equation}\label{assump}
\begin{array}{l}
  \dfrac{1}{h(x)} \dfrac{d}{dx}\Big( \varepsilon_r(x) \varepsilon_0 h(x) \dfrac{d\Phi}{dx} \Big) = - e(z_1c_1 + z_2 c_2)- Q(x), \\\\
  \dfrac{d\mathcal{J}_k}{dx} = 0, \quad -\mathcal{J}_k = \dfrac{1}{k_B T} D_k(x) h(x) c_k \dfrac{d\mu_k}{dx}, \quad k=1,2,
  \end{array}
  \end{equation}
with the following boundary conditions
\begin{equation}\label{BC}
\Phi(0) = \mathcal{V}, \quad c_k(0)= l_k >0 ;  \quad \Phi(1) =0,  \quad c_k(1)=r_k >0.
\end{equation}
We now make the dimensionless re-scaling in \eqref{assump},
$$
\phi = \dfrac{e}{k_B T} \Phi, \quad V= \dfrac{e}{k_B T} \mathcal{V}, \quad
\varepsilon^2= \dfrac{\varepsilon_r \varepsilon_0 k_B T}{e^2}, \quad J_k = \dfrac{\mathcal{J}_k}{D_k}.
$$
Then, for $k=1,2,$
\begin{align*}
-J_k&= - \dfrac{\mathcal{J}_k}{D_k} = \dfrac{1}{k_BT}h(x)c_k\dfrac{d\mu_k^{id}}{dx} + \dfrac{1}{k_BT} h(x)c_k \dfrac{d\mu_k^{HS}}{dx} \\
&= z_k h(x) c_k \dfrac{d\phi}{dx} + h(x) \dfrac{dc_k}{dx} + \dfrac{h(x)c_k}{k_B T}\dfrac{d\mu_k^{HS}}{dx}.
\end{align*}
Note also that,
$$
\varepsilon_r \varepsilon_0 \dfrac{d\Phi}{dx}= \varepsilon^2 \dfrac{e^2}{k_B T}\dfrac{d\Phi}{dx}= \varepsilon^2 \dfrac{e^2}{k_B T} \dfrac{k_B T}{e}\dfrac{d\phi}{dx}= \varepsilon^2 e \dfrac{d\phi}{dx}. 
$$
Therefore, system \eqref{assump} becomes
\begin{equation}\label{1dPNPdim}
 \begin{array}{l}
\dfrac{\varepsilon^2}{h(x)} \dfrac{d}{dx}\Big( h(x) \dfrac{d\phi}{dx} \Big) = -z_1c_1 - z_2 c_2- Q(x), \quad \dfrac{dJ_1}{dx}=\dfrac{dJ_2}{dx}= 0,\\\\
h(x) \dfrac{dc_1}{dx}+ z_1 h(x) c_1 \dfrac{d\phi}{dx} + \dfrac{h(x)c_1}{k_B T}\dfrac{d\mu_1^{HS}}{dx} = -J_1,\\\\
h(x) \dfrac{dc_2}{dx}+ z_2 h(x) c_2 \dfrac{d\phi}{dx} + \dfrac{h(x)c_2}{k_B T}\dfrac{d\mu_2^{HS}}{dx} = -J_2.
\end{array}
\end{equation} 
It follows directly from \eqref{hsphere} for the local hard-sphere potential $\mu_k^{HS}$ for two ion species that
\begin{equation}\label{hsphere2}
   \begin{array}{l}
\dfrac{1}{k_B T}\dfrac{d\mu_1^{HS}}{dx} = \dfrac{d_1(2+d_1(c_2-c_1) - 2d_2c_2)}{(1-d_1c_1 - d_2c_2)^2}\dfrac{dc_1}{dx}+ \dfrac{d_1+d_2-d_1^2c_1-d_2^2c_2}{(1-d_1c_1 - d_2c_2)^2}\dfrac{dc_2}{dx},\\\\
\dfrac{1}{k_B T}\dfrac{d\mu_2^{HS}}{dx} = \dfrac{d_1+d_2- d_1^2c_1-d_2^2c_2 }{(1-d_1c_1 - d_2c_2)^2}\dfrac{dc_1}{dx}+ \dfrac{d_2(2+d_2(c_1-c_2) - 2d_1c_1)}{(1-d_1c_1 - d_2c_2)^2}\dfrac{dc_2}{dx}.
\end{array}
\end{equation}
Substituting \eqref{hsphere2} into system \eqref{1dPNPdim}, we obtain
\begin{equation}\label{PNP2}
   \begin{array}{l}
 \dfrac{\varepsilon^2}{h(x)} \dfrac{d}{dx}\Big( h(x) \dfrac{d\phi}{dx} \Big) = -z_1c_1 - z_2 c_2- Q(x), \quad \dfrac{dJ_1}{dx}=\dfrac{dJ_2}{dx}= 0,\\
\dfrac{dc_1}{dx} = -f_1(c_1, c_2; d_1, d_2) \dfrac{d\phi}{dx} - \dfrac{1}{h(x)} g_1(c_1, c_2, J_1, J_2; d_1, d_2),\\
\dfrac{dc_2}{dx} = -f_2(c_1, c_2; d_1, d_2) \dfrac{d\phi}{dx} - \dfrac{1}{h(x)} g_2(c_1, c_2, J_1, J_2; d_1, d_2),
\end{array}
\end{equation}
with the boundary conditions,
\begin{equation}\label{BC2}
\phi(0) = V, \quad c_k(0)= l_k >0 ;  \quad \phi(1) =0,  \quad c_k(1)=r_k >0,
\end{equation}
where $f_j=f_j(c_1, c_2; d_1, d_2),~g_j=g_j(c_1, c_2; J_1, J_2; d_1, d_2)$ for $j=1,2$ are defined as follow,
$$
   \begin{array}{l}
f_1 =  z_1c_1 - (d_1 + d_2 - d_1^2c_1 - d_2^2c_2)(z_1c_1 + z_2 c_2)c_1-z_1(d_1 - d_2)c_1^2,\\
f_2 =  z_2c_2 - (d_1 + d_2 - d_1^2c_1 - d_2^2c_2)(z_1c_1 + z_2 c_2)c_2+z_2(d_1 - d_2)c_2^2,\\
g_1 =  ((1-d_1c_1)^2 + d_2^2 c_1c_2) J_1 - c_1(d_1+d_2 - d_1^2c_1 - d_2^2c_2)J_2,\\
g_2 =  ((1-d_2c_2)^2 + d_1^2 c_1c_2) J_2- c_2(d_1+d_2 - d_1^2c_1 - d_2^2c_2)J_1.
\end{array}
$$
%%%%%%%%%%%%%%%%%%%%%%%%%%%%%%%%%%%%%%
%%%%%%%%%%%%%%%%%%%%%%%%%%%%%%%%%%%%%%
Recall that $d_1$ and $d_2$ are the diameters of the two ion species. While $d_1 >0$ and $d_2>0$ are small, their ratio is of order $O(1)$. We thus set, for some $\lambda >0$,
$$
d_1=d, \quad \text{and} \quad d_2=\lambda d.
$$
%%%%%%%%%%%%%%%%%%%%%%%%%%%%%%%%%%%%%%%%%%%
Hence, $f_j, g_j$, for $j=1,2$, in above, become,
\begin{equation}\label{fg2}
   \begin{array}{l}
f_1(c_1, c_2, d) = z_1c_1 - \Big(2z_1c_1+ (1+\lambda)z_2c_2 \Big)c_1d + (c_1+\lambda^2c_2)(z_1c_1+z_2c_2) c_1d^2 ,\\
f_2(c_1, c_2, d) = z_2c_2 - \Big(2\lambda z_2c_2+ (1+\lambda)z_1c_1 \Big)c_2d + (c_1+\lambda^2c_2)(z_1c_1+z_2c_2)c_2d^2 ,\\
g_1(c_1, c_2, J_1, J_2, d) = J_1 - \Big( 2J_1+ (1+\lambda) J_2\Big) c_1d + (c_1+\lambda^2 c_2)(J_1+J_2) c_1d^2,\\
g_2(c_1, c_2, J_1, J_2, d) = J_2 - \Big( 2 \lambda J_2+ (1+\lambda) J_1\Big) c_2d + (c_1+\lambda^2 c_2)(J_1+J_2) c_2d^2.
\end{array}
\end{equation}
%%%%%%%%%%%%%%%%%%%%%%%%%%%%%%%%%%%
We assume the permanent charge $Q(x)$ is given by a piecewise constant function. The permanent charge $Q$ is piecewise constant  with one nonzero region; that is, for a partition $0<a<b<1$ of   $[0,1]$, 
 \begin{align}\label{Q}
 Q(x)=\left\{\begin{array}{ll}
 Q_1=Q_3=0, & x\in (0,a)\cup (b,1),\\
 Q_2, & x\in (a,b),
 \end{array}\right.
 \end{align}   
 where $Q_2$ is a constant.
It is easy to extend the work for $Q$ with multiple regions of nonzero constants.
%%%%%%%%%%%%%%%%%%%%%%%%%%%%%%%%%%%%
For convenience, we set for $k=0,1$,
\begin{equation}\label{Abbrev}
\begin{aligned}
I_{k} :=& z_1J_{1k}+z_2J_{2k}, \quad T_{k} := J_{1k}+ J_{2k}, \quad \Lambda_k := J_{1k} + \lambda J_{2k}, \\
 \sigma(y) :=& (z_1-z_2)z_1c_{10}(y) - z_2Q, \quad H(x) := \int_0^x \dfrac{1}{h(s)}ds, \\
Z_0(y) :=&  \dfrac{z_1z_2 {I}_0 Q}{\sigma(y) }, \quad   w(a,b)= a+\lambda b + \frac{\lambda z_1 -z_2}{z_1 -z_2}(a+b).
\end{aligned}
\end{equation}

%%%%%%%%%%%%%%%%%%%%%%%%%%%%%%%%%%%%%%%%%%%%%%%%

\noindent To end this section, we treat $\varepsilon > 0$ small as a singular perturbation parameter and $d_k's$ as regular perturbation parameters, and rewrite system \eqref{PNP2} into a standard form for singularly perturbed systems and convert the boundary value problem \eqref{PNP2} and \eqref{fg2} to a connected problem.\\
Denote derivative with respect to $x$ by overdot and introduce $u= \varepsilon \overset{.}{\phi}$
and $\tau =x$. System \eqref{PNP2} becomes
\begin{equation}  \label{Sloweps1}
\begin{aligned}
\varepsilon \overset{.}{\phi} &= u, \quad \varepsilon \overset{.}{u} = -z_1c_1 -z_2c_2 - Q(\tau)- \varepsilon \dfrac{h_{\tau}(\tau)}{h(\tau)}u,\\
\varepsilon \overset{.}{c}_1 &= -f_1 u - \dfrac{\varepsilon}{h(\tau)} g_1,\quad 
\varepsilon \overset{.}{c}_2 = -f_2 u - \dfrac{\varepsilon}{h(\tau)} g_2,\\
\overset{.}{J_1} &= \overset{.}{J_2} = 0, \quad \overset{.}{\tau}=1,
 \end{aligned}
\end{equation}
that is called the {\em slow system}.
 Let $B_l$ and $B_r$ be the subsets of the phase space $\mathbb{R}^7$ defined by

\[
\begin{array}{l}
B_l= \{(V,u,l_1, l_2, J_1, J_2, 0) \in \mathbb{R}^7 : \textrm{arbitrary}~u,J_1, J_2 \},\\
B_r= \{(0,u,r_1, r_2, J_1, J_2, 1) \in \mathbb{R}^7 : \textrm{arbitrary}~u,J_1, J_2 \},
\end{array}
\]
where $V$, $l_1$, $l_2$, $r_1$ and $r_2$ are given in \eqref{BC2}. Then the original boundary value problem is equivalent to a connecting problem, namely, finding a solution of \eqref{Sloweps1} from $B_l$ to $B_r$.
By setting $\varepsilon =0$ in system \eqref{Sloweps1}, we obtain the slow manifold,
$$
\mathscr{Z} = \Big\{ u=0,~ z_1c_1+ z_2c_2 + Q=0 \Big\}.
$$
For $\varepsilon > 0$, the rescaling $x=\varepsilon \xi$ of the independent variable $x$ give rise to the fast system, 
\begin{equation} \label{feps02}
\begin{array}{l}
\phi' = u, \quad u' = -z_1c_1 -z_2c_2 - Q - \varepsilon \dfrac{h_{\tau}(\tau)}{h(\tau)}u,\\
c_1' = -f_1u - \dfrac{\varepsilon}{h(\tau)} g_1, \quad 
c_2' = -f_2u - \dfrac{\varepsilon}{h(\tau)} g_2,\\
J_1' = J_2' = 0, \quad \tau'=\varepsilon,
\end{array}
\end{equation}
where prime denotes the derivative with respect to the fast variable $\xi$. The limiting fast system is,
\begin{equation}\label{fzero02}
\begin{array}{l}
\phi' = u, \quad u' = -z_1c_1 -z_2c_2-Q,\\
c_1' = -f_1 u,\quad c_2' = -f_2 u,\\
J_1' = J_2' = 0, \quad \tau'=0.
\end{array}
\end{equation}

%%%%%%%%%%%%%%%%%%%%%%%%%%%%%%%%
The following Lemma can be directly verified.
\begin{lem}\label{0nh}
The slow manifold $\mathscr{Z}$ is normally hyperbolic for any permanent charge $Q$.
\end{lem}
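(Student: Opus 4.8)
The plan is to linearize the limiting fast flow \eqref{fzero02} along $\mathscr{Z}$, read off its spectrum, and thereby reduce normal hyperbolicity to a single sign condition. Since $J_1'=J_2'=\tau'=0$ in \eqref{fzero02}, the fast flow is effectively the four-dimensional $(\phi,u,c_1,c_2)$-system with $(J_1,J_2,\tau)$ held as parameters, and $\mathscr{Z}$ is a manifold of its equilibria. On $\mathscr{Z}$ one has $u=0$, so every entry $\partial_{c_j}(f_k u)=(\partial_{c_j}f_k)\,u$ vanishes there; moreover $Q$ enters only through $u'=-z_1c_1-z_2c_2-Q(\tau)$, and away from the jump points $\tau=a,b$ of \eqref{Q} the coefficient $Q(\tau)$ is locally constant, so $\partial_\tau(-Q)=0$. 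Hence the Jacobian restricted to $\mathscr{Z}$ is independent of $Q$, the vectors $\partial_\phi,\partial_{J_1},\partial_{J_2},\partial_\tau$ lie in its kernel, and the remaining block, in the variables $(u,c_1,c_2)$, is
\[
M \;=\; \begin{pmatrix} 0 & -z_1 & -z_2 \\ -f_1 & 0 & 0 \\ -f_2 & 0 & 0 \end{pmatrix},
\qquad \det(M-\mu I) \;=\; -\mu\bigl(\mu^{2}-(z_1 f_1 + z_2 f_2)\bigr).
\]
So the full Jacobian has the eigenvalue $0$ with multiplicity exactly $5=\dim\mathscr{Z}$, together with $\pm\sqrt{z_1 f_1 + z_2 f_2}$, for \emph{every} $Q$; consequently $\mathscr{Z}$ is normally hyperbolic --- one attracting and one repelling fast direction --- precisely when $z_1 f_1 + z_2 f_2>0$ on the physically admissible part of $\mathscr{Z}$, i.e. where $c_1,c_2>0$ and $d_1c_1+d_2c_2<1$.

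The remaining task is therefore to show $z_1 f_1 + z_2 f_2>0$ on that region; note this expression involves neither $Q$, nor $h$, nor the $J_k$. Reading off \eqref{1dPNPdim}--\eqref{hsphere2}, the vector $(f_1,f_2)^{\top}$ is exactly the solution of $B\,(c_1',c_2')^{\top}=-(z_1c_1,z_2c_2)^{\top}\phi'-h^{-1}(J_1,J_2)^{\top}$, where $B=I+\operatorname{diag}(c_1,c_2)\,A$ and $A=\bigl(\tfrac{1}{k_BT}\partial\mu_i^{HS}/\partial c_j\bigr)_{i,j}$ is the symmetric matrix whose entries appear in \eqref{hsphere2}; one checks this solution coincides with the polynomial vector in \eqref{fg2}. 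With $D=\operatorname{diag}(c_1,c_2)$ one has $(z_1c_1,z_2c_2)^{\top}=D(z_1,z_2)^{\top}$ and $D^{-1}B=D^{-1}+A$, hence $(f_1,f_2)^{\top}=B^{-1}(z_1c_1,z_2c_2)^{\top}=(D^{-1}+A)^{-1}(z_1,z_2)^{\top}$ and
\[
z_1 f_1 + z_2 f_2 \;=\; (z_1,z_2)\,(D^{-1}+A)^{-1}(z_1,z_2)^{\top}.
\]
Since $(z_1,z_2)\neq(0,0)$, it suffices to prove the symmetric $2\times2$ matrix $D^{-1}+A$ is positive definite on the admissible region.

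I would verify this through the two leading principal minors. The $(1,1)$-entry is $1/c_1+A_{11}$ with $A_{11}=d_1\bigl(2+d_1(c_2-c_1)-2d_2c_2\bigr)/(1-d_1c_1-d_2c_2)^2$; here $2+d_1(c_2-c_1)-2d_2c_2=2(1-d_2c_2)+d_1(c_2-c_1)>0$ because $d_1c_1<1-d_2c_2\le 2(1-d_2c_2)+d_1c_2$, so $A_{11}>0$ and the entry is positive. For the determinant, $\det(D^{-1}+A)=\det(D^{-1}B)=\det B/(c_1c_2)$, and clearing the common denominator $(1-d_1c_1-d_2c_2)^2$ in the $A_{ij}$ and simplifying yields the identity $\det B=(1-d_1c_1-d_2c_2)^{-2}>0$; equivalently, $D^{-1}+A$ is the Hessian of the strictly convex ideal-plus-hard-sphere free energy density $k_BT\bigl[c_1\ln c_1+c_2\ln c_2-c_1-c_2-(c_1+c_2)\ln(1-d_1c_1-d_2c_2)\bigr]$. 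This gives $z_1 f_1+z_2 f_2>0$ and the lemma.

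The only step that is not a one-line check is this last one: positivity of $\det B$ on the whole admissible region. It is a genuine point rather than a formality, because the hard-sphere block $A$ is by itself \emph{indefinite} once $d_1\neq d_2$ (indeed $\det A=-(d_1-d_2)^2/(1-d_1c_1-d_2c_2)^2\le 0$), so one really does need the ideal part $D^{-1}$ to restore definiteness; a short symbolic computation, or the convexity remark above, closes it. If one is content with normal hyperbolicity only in the small-ion-size regime the paper works in, the sign is immediate by continuity from the classical limit $d_k\to0$, where $z_1 f_1+z_2 f_2\to z_1^2c_1+z_2^2c_2>0$.
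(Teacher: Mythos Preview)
Your argument is correct. The paper itself offers no proof of this lemma beyond the line ``can be directly verified,'' so there is no competing approach to compare with; your linearization of the limiting fast system \eqref{fzero02} along $\mathscr{Z}$ and the eigenvalue count $\{0^{(5)},\pm\sqrt{z_1f_1+z_2f_2}\}$ is exactly the direct verification one would expect, and your handling of the block structure (the $\phi,J_1,J_2,\tau$ columns of the Jacobian vanish on $\mathscr{Z}$, leaving the $3\times3$ block $M$) is accurate.

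What you add beyond the paper is the positivity of $z_1f_1+z_2f_2$ on the \emph{entire} admissible region $\{c_1,c_2>0,\ d_1c_1+d_2c_2<1\}$, via the identification $(f_1,f_2)^\top=(D^{-1}+A)^{-1}(z_1,z_2)^\top$ and the positive definiteness of $D^{-1}+A$ (equivalently, convexity of the ideal-plus-hard-sphere free energy). This is genuinely more than the paper needs or asserts: the paper works throughout in the small-$d$ regime, and indeed later computes $z_1f_1+z_2f_2$ on each slow manifold only to the accuracy required for the reduced dynamics (e.g., the exact identity $z_1f_1+z_2f_2=z_1(z_1-z_2)c_1$ on $\mathscr{Z}_1$ just before \eqref{1limslow}, and the expansion $(z_1-z_2)z_1c_1-z_2Q+O(d)$ on $\mathscr{Z}_2$ just before \eqref{51}). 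Your observation that the small-$d$ case follows immediately from the classical limit $z_1f_1+z_2f_2\to z_1^2c_1+z_2^2c_2>0$ is therefore already sufficient for the paper's purposes; the full convexity argument is a pleasant bonus that makes the lemma hold uniformly in $d$.
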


%\begin{proof}
%It can be directly verified.
%\end{proof}

%%%%%%%%%%%%%%%%%%%%%%%%%%%%%%

\section{GSP Theory for the connecting problem.}\label{Sec.GSPT}
\setcounter{equation}{0}

We apply the general geometric singular perturbation Theory (GSP) to construct singular orbits for the connecting problem \cite{Hek,Jones95,JK94,Kuehn15,Liu00}.
We will first construct singular orbits on each sub-interval $[x_{j-1}, x_j]$ where $Q(x)$ is constant and then match them at jump points $x=x_j's$ of $Q(x)$. To do so, we pre-assign the values of $\phi,~c_k's$ at $x_j$ for $j=1,2$,
\[
\phi(x_j)= \phi^{[j]},\quad c_1(x_j)=c_1^{[j]}, \quad c_2(x_j)=c_2^{[j]},
\]
with given $\phi^{[0]}=V$ and $c_1^{[0]}=l_1, \quad c_2^{[0]}=l_2$ at $x=0$, and $\phi^{[3]}=0$ and $c_1^{[3]}=r_1, \quad c_2^{[3]}=r_2$ at $x=1$.
Note that we have introduced $6$ unknown variables. For $j=0,1,2,3,$ introduce the sets
$$
B_j= \{ (\phi, u, c_1, c_2, J, \tau)~:~~\phi=\phi^{[j]},~ c_1=c_1^{[j]},~ c_2=c_2^{[j]}, ~\tau = x_j \}.
$$
Note that $B_0=B_l$ and $B_3=B_r$. The next step is to construct singular orbits over each interval $[x_{j-1}, x_j]$ for the connecting problem between $B_{j-1}$ and $B_j$. Finally, we match the singular orbits at each $x_j$ to obtain singular orbits over the whole interval $[0,1]$. For a singular orbit on the whole interval $[0,1]$, we require that
$$
J_1^{l}=J_1^{m}=J_1^{r}, \quad J_2^{l}=J_2^{m}=J_2^{r}, \quad u^{a,l}=u^{a,m}, \quad u^{b,m}=u^{b,r}.
$$
This consists of six conditions. The number of conditions is exactly the same as the number of unknown values in the above preassigned values. 

%%%%%%%%%%%%%%%%%%%%%%%%%%%%%%%%%%%%%%%%
%%%%%%%%%%%%%%%%%%%%%%%%%%%%%%%%%%%%%%%%

\subsection{A singular orbit on $[0,a]$ where $Q(x)=0$.}\label{sec-sing-0a}

Here we construct singular orbits for the connecting problem from $B_{0}$ to $B_1$. Each such an orbit will consist of two boundary layers $\Gamma^{l}$ at $x=0,~\Gamma^{a,l}$ at $x=a$, and a regular layer $\Lambda_1$ over the interval $[0,a]$.
\medskip

%%%%%%%%%%%%%%%%%%%%%%%%%%%%%%%%%%%%%%%%%%%
\noindent {\bf \underline{Dynamics and Boundary/Internal Layers on $[0,a]$.}}  By setting $\varepsilon =0$ in system \eqref{Sloweps1} with $Q=0$, the slow manifold is,
$
\mathscr{Z}_1 = \{ u=0,~ z_1c_1+ z_2c_2 =0 \}.
$
For $\varepsilon > 0$, the rescaling $x=\varepsilon \xi$ of the independent variable $x$ give rise to
\begin{equation} \label{feps01}
\begin{array}{l}
\phi' = u, \quad u' = -z_1c_1 -z_2c_2 - \varepsilon \dfrac{h_{\tau}(\tau)}{h(\tau)}u,\\
c_1' = -f_1 u - \dfrac{\varepsilon}{h(\tau)} g_1, \quad
c_2' = -f_2 u - \dfrac{\varepsilon}{h(\tau)} g_2,\\
J_1' = J_2' = 0, \quad \tau'=\varepsilon.
\end{array}
\end{equation}
where prime denotes the derivative with respect to the variable $\xi$. The limiting fast system is,
\begin{equation}\label{fzero01}
\begin{array}{l}
\phi' = u, \quad u' = -z_1c_1 -z_2c_2,\\
c_1' = -f_1u, \quad
c_2' = -f_2 u,\\
J_1' = J_2' = 0, \quad \tau'=0.
\end{array}
\end{equation}
The set of equilibria of \eqref{fzero01} is precisely $\mathscr{Z}_1$. From lemma \eqref{0nh}, when $Q=Q_1=0$, the slow manifold $\mathscr{Z}_1$ is {normally hyperbolic} for system \eqref{fzero01}.

We denote the stable \big(resp. unstable\big) manifold of $\mathscr{Z}_1$ by $W^s(\mathscr{Z}_1)$ \big(resp. $W^u(\mathscr{Z}_1)$\big). Let $M^{l}$ be the collection of orbits from $B_{0}$ in forward time under the flow of system \eqref{feps01} and $M^{a,l}$ be the collection of orbits from $B_1$ in backward time under the flow of system \eqref{feps01}. Then, for a singular orbit connecting $B_{0}$ to $B_1$, the boundary layer at $\tau =x = 0$ must lie in $N^{l} = M^{l} \cap W^s(\mathscr{Z}_1)$ and the boundary layer at $\tau = x =a$ must lie in $N^{a,l} = M^{a,l} \cap W^u(\mathscr{Z}_1)$.  
We look for solutions
$$
\Gamma(\xi;d)= \big(\phi(\xi;d), u(\xi;d), c_1(\xi;d), c_2(\xi;d), J_1(d), J_2(d), \tau\big)
$$
of system \eqref{fzero01} of the form
\begin{equation}\label{dsys}
\begin{array}{l}
\phi(\xi;d) = \phi_0(\xi) + \phi_1(\xi) d + o(d), \quad u(\xi;d) = u_0(\xi) + u_1(\xi) d + o(d),\\
c_1(\xi;d) = c_{10}(\xi) + c_{11}(\xi) d + o(d), \quad c_2(\xi;d) = c_{20}(\xi) + c_{21}(\xi) d + o(d),\\
J_1(d)=J_{10}+J_{11}d+o(d), \quad J_2(d)=J_{20}+J_{21}d+o(d).
\end{array}
\end{equation}
%%%%%%%%%%%%%%%%%%%%%%%%%%%%
%%%%%%%%%%%%%%%%%%%%%%%%%%%%
Substituting \eqref{dsys} into system \eqref{fzero01}, we obtain, for the zeroth order in $d$,
\begin{equation}\label{f0th0a}
\begin{array}{l}
\phi_0'=u_0, \quad u_0'= -z_1c_{10}-z_2c_{20},\\
c_{10}'= -z_1c_{10}u_0,\quad c_{20}'= -z_2c_{20}u_0,\\
J_{10}'=J_{20}'=0, \quad \tau'=0,
\end{array}
\end{equation}
and for the first order in $d$,
\begin{equation}\label{f1st0a}
\begin{array}{l}
\phi_1'=u_1, \quad u_1'= -z_1c_{11}-z_2c_{21},\\
c_{11}'= -z_1c_{11}u_0  -z_1c_{10}u_1 + \Big(2z_1 c_{10} + (1+ \lambda)z_2 c_{20}  \Big)c_{10}u_0,\\
c_{21}'= -z_2c_{21}u_0  -z_2c_{20}u_1 + \Big((1+\lambda)z_1 c_{10} + 2\lambda z_2 c_{20}\Big)c_{20}u_0,\\
J_{11}'=J_{21}'=0.
\end{array}
\end{equation}

%%%%%%%%%%%%%%%%%%%%%%%%%%%%%%%%%%%%

\begin{lem}\label{lem0th1}
The zeroth order system \eqref{f0th0a} has a complete set of first integrals, 
$$
\begin{array}{l}
\mathcal{H}_{10,l} = e^{z_1 \phi_0} c_{10}, \quad \mathcal{H}_{20,l} = e^{z_2 \phi_0} c_{20}, \quad \mathcal{H}_{30,l}=J_{10}, \quad \mathcal{H}_{40,l}=J_{20},\\
 \mathcal{H}_{50,l}=c_{10}+c_{20}-\dfrac{1}{2}u_0^2, \quad \mathcal{H}_{60,l}=\tau,
\end{array}
$$
and the first order system \eqref{f1st0a} has a complete set of first integrals,
$$
\begin{array}{l}
\mathcal{H}_{11,l} = z_1 \phi_1 + c_{11}/c_{10} + 2c_{10} + (\lambda+1)c_{20},\quad
\mathcal{H}_{21,l} = z_2 \phi_1 +c_{21}/c_{20}+ 2\lambda c_{20} + (\lambda+1)c_{10},\\
\mathcal{H}_{31,l}=u_0u_1 - c_{11} -c_{21} - (\lambda+1) c_{10}c_{20} - c_{10}^2 - \lambda c_{20}^2  ,\quad
\mathcal{H}_{41,l}=J_{11}, \quad \mathcal{H}_{51,l}=J_{21}.
\end{array}
$$
\begin{proof}
It can be verified directly from \eqref{f0th0a} and \eqref{f1st0a}. 
\end{proof}
\end{lem}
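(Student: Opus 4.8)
The plan is to verify directly that each proposed function is constant along solutions of the respective system, i.e.\ that its derivative with respect to $\xi$ vanishes identically when the system equations are substituted, and then to check that in each case the six (resp.\ five) functions are functionally independent so that the set is complete. This is entirely computational, so the writeup should be short.

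For the zeroth order system \eqref{f0th0a}, I would compute as follows. For $\mathcal{H}_{10,l}=e^{z_1\phi_0}c_{10}$, differentiate: $\big(e^{z_1\phi_0}c_{10}\big)' = z_1\phi_0' e^{z_1\phi_0}c_{10} + e^{z_1\phi_0}c_{10}' = z_1 u_0 e^{z_1\phi_0}c_{10} + e^{z_1\phi_0}(-z_1 c_{10}u_0) = 0$, using $\phi_0'=u_0$ and $c_{10}'=-z_1 c_{10}u_0$; the computation for $\mathcal{H}_{20,l}$ is identical with $z_2$. The integrals $\mathcal{H}_{30,l}=J_{10}$, $\mathcal{H}_{40,l}=J_{20}$, $\mathcal{H}_{60,l}=\tau$ are immediate since $J_{10}'=J_{20}'=\tau'=0$. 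For the energy-type integral, $\big(c_{10}+c_{20}-\tfrac12 u_0^2\big)' = c_{10}'+c_{20}'-u_0 u_0' = -z_1 c_{10}u_0 - z_2 c_{20}u_0 - u_0(-z_1 c_{10}-z_2 c_{20}) = 0$, using $u_0'=-z_1c_{10}-z_2c_{20}$. Independence is clear from the Jacobian: $\mathcal{H}_{30,l},\mathcal{H}_{40,l},\mathcal{H}_{60,l}$ pin down $J_{10},J_{20},\tau$, and the remaining three involve $(\phi_0,u_0,c_{10},c_{20})$ with the gradients of $\mathcal{H}_{10,l},\mathcal{H}_{20,l},\mathcal{H}_{50,l}$ linearly independent on the open set where $c_{10},c_{20}>0$, giving a full set of six on the seven-dimensional phase space (one dimension being the flow direction).

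For the first order system \eqref{f1st0a}, the verification is the same in spirit but messier because the equations for $c_{11}',c_{21}'$ carry the inhomogeneous ion-size terms. For $\mathcal{H}_{11,l}=z_1\phi_1 + c_{11}/c_{10} + 2c_{10} + (\lambda+1)c_{20}$, I would differentiate term by term: $z_1\phi_1' = z_1 u_1$; $\big(c_{11}/c_{10}\big)' = c_{11}'/c_{10} - c_{11}c_{10}'/c_{10}^2 = c_{11}'/c_{10} + z_1 c_{11}u_0/c_{10}$; then substitute $c_{11}'$ from \eqref{f1st0a} so that $c_{11}'/c_{10} = -z_1 c_{11}u_0/c_{10} - z_1 u_1 + \big(2z_1 c_{10}+(1+\lambda)z_2 c_{20}\big)u_0$; the $z_1 c_{11}u_0/c_{10}$ terms cancel and the $-z_1 u_1$ cancels $z_1 u_1$, leaving $\big(2z_1 c_{10}+(1+\lambda)z_2 c_{20}\big)u_0$; finally $\big(2c_{10}+(\lambda+1)c_{20}\big)' = 2(-z_1 c_{10}u_0) + (\lambda+1)(-z_2 c_{20}u_0)= -\big(2z_1 c_{10}+(1+\lambda)z_2 c_{20}\big)u_0$, which cancels the surviving term, so the total is zero. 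The computation for $\mathcal{H}_{21,l}$ is the mirror image with $(z_1,c_{10},c_{11},2)\leftrightarrow(z_2,c_{20},c_{21},2\lambda)$. For $\mathcal{H}_{31,l}=u_0 u_1 - c_{11}-c_{21}-(\lambda+1)c_{10}c_{20}-c_{10}^2-\lambda c_{20}^2$, differentiate: $(u_0 u_1)' = u_0' u_1 + u_0 u_1' = (-z_1 c_{10}-z_2 c_{20})u_1 + u_0(-z_1 c_{11}-z_2 c_{21})$; and $(c_{11}+c_{21})' = \big(-z_1 c_{11}u_0 - z_1 c_{10}u_1 + (2z_1 c_{10}+(1+\lambda)z_2 c_{20})c_{10}u_0\big) + \big(-z_2 c_{21}u_0 - z_2 c_{20}u_1 + ((1+\lambda)z_1 c_{10}+2\lambda z_2 c_{20})c_{20}u_0\big)$; and the remaining quadratic terms differentiate to $\big((\lambda+1)(c_{10}'c_{20}+c_{10}c_{20}') + 2c_{10}c_{10}' + 2\lambda c_{20}c_{20}'\big) = -u_0\big((\lambda+1)(z_1+z_2)c_{10}c_{20} + 2z_1 c_{10}^2 + 2\lambda z_2 c_{20}^2\big)$. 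I would then collect all terms: the $u_1$-linear terms $(-z_1 c_{10}-z_2 c_{20})u_1$ cancel against $+z_1 c_{10}u_1 + z_2 c_{20}u_1$ coming from $-(c_{11}+c_{21})'$; the $u_0 c_{11}, u_0 c_{21}$ terms cancel similarly; and the remaining $u_0$ times quadratic-in-$c$ terms are designed to cancel precisely — this bookkeeping is the one place that takes genuine care, and it is where I expect the main (though still routine) effort to lie. The integrals $\mathcal{H}_{41,l}=J_{11}$, $\mathcal{H}_{51,l}=J_{21}$ are trivial. Independence of the five follows as before: $J_{11},J_{21}$ are coordinates, and on the fiber over a fixed point of the zeroth order flow the three functions $\mathcal{H}_{11,l},\mathcal{H}_{21,l},\mathcal{H}_{31,l}$ have independent gradients in $(\phi_1,u_1,c_{11},c_{21})$ (their linear parts being $z_1 d\phi_1 + d c_{11}/c_{10}$, $z_2 d\phi_1 + d c_{21}/c_{20}$, $u_0\,d u_1 - d c_{11} - d c_{21}$, which are independent whenever $c_{10},c_{20}\neq 0$), accounting for the full complement. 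Since everything reduces to substitution and cancellation, I would simply state that the claims "can be verified directly," perhaps displaying the one nontrivial cancellation for $\mathcal{H}_{31,l}$, and leave the rest to the reader.
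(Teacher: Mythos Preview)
Your proposal is correct and follows exactly the approach the paper itself takes: the paper's proof is the single sentence ``It can be verified directly from \eqref{f0th0a} and \eqref{f1st0a},'' and what you have written is precisely that direct verification spelled out in full, together with a brief independence check that the paper omits. There is nothing to add.
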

Recall that we are interested in the solutions of $\Gamma^{l}(\xi;d) \subset N^{l} = M^{l} \cap W^s(\mathscr{Z}_1)$ with $\Gamma^{l}(0;d) \in B_{0}$, and $\Gamma^{a,l}(\xi;d) \subset N^{a,l} = M^{a,l} \cap W^u(\mathscr{Z}_1)$ with $\Gamma^{a,l}(0;d) \in B_{1}$.
\begin{prop}\label{propHS0a}
Assume that $d\geq 0$ is small.\\
(i) The stable manifold $W^s(\mathscr{Z}_1)$ intersects $B_0$ transversally at points
$$
\big(V, u_0^{l}+ u_1^{l}d+ o(d), l_1, l_2, J_1(d), J_2(d), 0\big),
$$
and the $\omega-$limit set of $N^{l}=M^{l}\cap W^s(\mathscr{Z}_1)$ is
$$
\omega\big(N^{l}\big)=\Big\{ \big( \phi_0^{l}+\phi_1^{l}d+o(d), 0, c_{10}^{l}+c_{11}^{l}d+o(d), c_{20}^{l}+c_{21}^{l}d+o(d), J_1(d), J_2(d), 0\big)\Big\},
$$
where $J_k(d)=J_{k0}+J_{k1}d+o(d),~~k=1,2,$ can be arbitrary and
\[
\begin{array}{l}
\phi_0^{l} = V - \frac{1}{z_1 -z_2}\ln \frac{-z_2 l_2}{z_1l_1}, \quad z_1c_{10}^{l}=-z_2c_{20}^{l}=\big(z_1l_1 \big)^{\frac{-z_2}{z_1-z_2}}\big(-z_2l_2 \big)^{\frac{z_1}{z_1-z_2}},\\
u_0^{l}= sgn(z_1l_1+z_2l_2)\sqrt{2\Big(l_1+l_2+ \frac{z_1 -z_2}{z_1z_2} \big(z_1l_1 \big)^{\frac{-z_2}{z_1-z_2}}\big(-z_2l_2 \big)^{\frac{z_1}{z_1-z_2}} \Big)},\\
\phi_1^{l}= \frac{1-\lambda}{z_1 -z_2} \big(l_1+l_2-c_{10}^{l}-c_{20}^{l} \big),\\
z_1c_{11}^{l}=-z_2c_{21}^{l}=z_1c_{10}^{l}\Big(w(l_1, l_2) + \frac{2(\lambda z_1 -z_2)}{z_2}c_{10}^{l} \Big),\\
u_1^{l} =\dfrac{1}{u_0^{l}}\Big( (l_1+l_2)(l_1+\lambda l_2) - (c_{10}^{l}+c_{20}^{l})(c_{10}^{l}+\lambda c_{20}^{l}) - c_{11}^{l} -c_{21}^{l} \Big),
\end{array}
\]
where, $w(l_1, l_2) $ was defined in \ref{Abbrev}.\\
(ii) The unstable manifold $W^u(\mathscr{Z}_1)$ intersects $B_1$ transversally at points
$$
\big(\phi_0^a + \phi_1^a d + o(d), u_0^{a,l}+ u_1^{a,l}d+ o(d), c_{10}^{a} + c_{11}^{a}d +o(d), c_{20}^{a}  + c_{21}^{a}d +o(d), J_1(d), J_2(d), a\big),
$$
and the $\alpha-$limit set of $N^{a,l}=M^{a,l}\cap W^u(\mathscr{Z}_1)$ is
$$
\alpha\big(N^{a,l}\big)=\Big\{ \big( \phi_0^{a,l}+\phi_1^{a,l}d+o(d), 0, c_{10}^{a,l}+c_{11}^{a,l}d+o(d), c_{20}^{a,l}+c_{21}^{a,l}d+o(d), J_1(d), J_2(d), a\big)\Big\},
$$
where $J_k(d)=J_{k0}+J_{k1}d+o(d),~~k=1,2,$ can be arbitrary and
\begin{equation}\label{prop1:22}
\begin{array}{l}
\phi_0^{a,l} = \phi_0^{a} - \frac{1}{z_1 -z_2}\ln \dfrac{-z_2 c_{20}^{a}}{z_1c_{10}^{a}}, \quad z_1c_{10}^{a,l}=-z_2c_{20}^{a,l}=\big(z_1c_{10}^{a} \big)^{\frac{-z_2}{z_1-z_2}}\big(-z_2c_{20}^{a} \big)^{\frac{z_1}{z_1-z_2}},\\
u_0^{a,l}= -sgn(z_1c_{10}^{a}+z_2c_{20}^{a})\sqrt{2\Big(c_{10}^{a}+c_{20}^{a}+ \frac{z_1 -z_2}{z_1z_2} \big(z_1c_{10}^{a} \big)^{\frac{-z_2}{z_1-z_2}}\big(-z_2c_{20}^{a} \big)^{\frac{z_1}{z_1-z_2}} \Big)},\\
\phi_1^{a,l} = \phi_1^{a} + \dfrac{1}{z_1-z_2  }\Big(  \dfrac{c_{11}^{a}}{c_{10}^{a}} - \dfrac{c_{21}^{a}}{c_{20}^{a}} +(1-\lambda )\big(c_{10}^{a}+c_{20}^{a} -c_{10}^{a,l} -c_{20}^{a,l} \big) \Big),\\
z_1c_{11}^{a,l}=-z_2c_{21}^{a,l}=z_1c_{10}^{a,l}\Big(\frac{1}{z_1-z_2}\big( z_1 \frac{c_{21}^{a}}{c_{20}^{a}} -z_2 \frac{c_{11}^{a}}{c_{10}^{a}} \big) + w (c_{10}^{a}, c_{20}^{a}) + \frac{2(\lambda z_1 -z_2)}{z_2}c_{10}^{a,l} \Big),\\
u_1^{a,l}= \dfrac{1}{u_0^{a,l}}\Big( ( c_{10}^{a}+ c_{20}^{a})(c_{10}^{a}+ \lambda  c_{20}^{a}) - ( c_{10}^{a,l}+ c_{20}^{a,l})(c_{10}^{a,l}+ \lambda  c_{20}^{a,l})\\
\qquad \qquad \qquad +  c_{11}^{a}+c_{21}^{a} - c_{11}^{a,l}- c_{21}^{a,l} \Big).
\end{array}
\end{equation}
\end{prop}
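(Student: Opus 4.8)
The plan is to exploit the complete sets of first integrals established in Lemma \ref{lem0th1} together with the explicit form of the sets $B_0$ and $B_1$ and the structure of the slow manifold $\mathscr{Z}_1$. For part (i), I would first work at zeroth order in $d$. A point on $N^l$ is determined by a zeroth-order fast orbit lying in $W^s(\mathscr{Z}_1)$ and starting in $B_0$, i.e. with $\phi_0=V$, $c_{10}=l_1$, $c_{20}=l_2$ at $\xi=0$. Since $\mathcal{H}_{10,l}=e^{z_1\phi_0}c_{10}$ and $\mathcal{H}_{20,l}=e^{z_2\phi_0}c_{20}$ are conserved, their values are pinned down by the left boundary data; as $\xi\to\infty$ the orbit limits onto $\mathscr{Z}_1$, where $u_0=0$ and $z_1c_{10}+z_2c_{20}=0$. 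Combining the two conserved quantities with the landing condition $z_1c_{10}^l=-z_2c_{20}^l$ yields a single equation for $c_{10}^l$, giving the stated formula $z_1c_{10}^l=(z_1l_1)^{-z_2/(z_1-z_2)}(-z_2l_2)^{z_1/(z_1-z_2)}$, and then $\phi_0^l$ from either of $\mathcal{H}_{10,l},\mathcal{H}_{20,l}$. The value $u_0^l$ at the entry point $\xi=0$ comes from the energy-type integral $\mathcal{H}_{50,l}=c_{10}+c_{20}-\tfrac12 u_0^2$: equate its value on $B_0$ (with $u_0=u_0^l$, $c_{k0}=l_k$) with its value on $\mathscr{Z}_1$ (with $u_0=0$, $c_{k0}=c_{k0}^l$), solve for $u_0^l$, and fix the sign by the direction of the fast flow toward $\mathscr{Z}_1$, which gives $\mathrm{sgn}(z_1l_1+z_2l_2)$ (transversality of the intersection with $B_0$ follows because the stable fibers foliate a neighborhood of $\mathscr{Z}_1$ and $B_0$ is a graph over the slow variables with the $u$-direction free).

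Next I would push to first order in $d$. The first-order orbit lies in the stable fiber of the first-order corrected slow manifold; using the first-order integrals $\mathcal{H}_{11,l}, \mathcal{H}_{21,l}, \mathcal{H}_{31,l}$ in exactly the same spirit. Because $c_{k1}$ appears in $\mathcal{H}_{11,l},\mathcal{H}_{21,l}$ divided by $c_{k0}$, evaluating these integrals at $\xi=0$ (where $c_{k1}=0$ since $B_0$ fixes $c_k(0)=l_k$ exactly, so the $d$-coefficient vanishes) and at the landing point on $\mathscr{Z}_1$ gives two linear equations for $(\phi_1^l, c_{11}^l, c_{21}^l)$; the landing constraint $z_1 c_{11}^l = -z_2 c_{21}^l$ (the $d$-expansion of $z_1c_{10}+z_2c_{20}+\varepsilon(\dots)=0$ restricted to $\mathscr Z$, which at $\varepsilon=0$ reads $z_1 c_1+z_2 c_2=0$) supplies the third equation, and solving the $3\times 3$ linear system produces the displayed $\phi_1^l$, $z_1c_{11}^l$, $c_{21}^l$. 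Finally $u_1^l$ is read off from $\mathcal{H}_{31,l}=u_0u_1-c_{11}-c_{21}-(\lambda+1)c_{10}c_{20}-c_{10}^2-\lambda c_{20}^2$ by equating its value at $\xi=0$ (where $u_1=u_1^l$, $c_{k1}=0$, $c_{k0}=l_k$) with its value on $\mathscr{Z}_1$ (where $u_0=0$), then dividing by $u_0^l$; this reproduces the stated expression after rewriting $l_1\!\cdot\! l_1+\lambda l_2\!\cdot\! l_2+(\lambda+1)l_1l_2=(l_1+l_2)(l_1+\lambda l_2)$ and similarly for the $c^l$-terms. The statement that $J_k(d)$ is arbitrary is immediate because $J_{k0},J_{k1}$ are themselves first integrals ($\mathcal{H}_{30,l},\mathcal{H}_{40,l},\mathcal{H}_{41,l},\mathcal{H}_{51,l}$) and nothing in the boundary-layer problem constrains them — they are matched only later at the global level.

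For part (ii) the argument is structurally identical but run in backward time and anchored at $B_1$ instead of $B_0$. Now the "boundary" data at $\xi=0$ are $\phi=\phi_0^a+\phi_1^a d+\dots$, $c_k=c_{k0}^a+c_{k1}^a d+\dots$ rather than prescribed constants, so the $d$-coefficients $c_{k1}^a$ are nonzero and carry through into the evaluation of $\mathcal{H}_{11,l},\mathcal{H}_{21,l}$; this is exactly why the formulas for $\phi_1^{a,l}$ and $c_{11}^{a,l}$ in \eqref{prop1:22} contain the extra terms $c_{11}^a/c_{10}^a$, $c_{21}^a/c_{20}^a$ compared with part (i). One uses the $\alpha$-limit (backward flow toward $\mathscr{Z}_1$) in place of the $\omega$-limit, which flips the sign in $u_0^{a,l}=-\mathrm{sgn}(z_1c_{10}^a+z_2c_{20}^a)\sqrt{\dots}$, and otherwise repeats: zeroth order from $\mathcal{H}_{10,l},\mathcal{H}_{20,l},\mathcal{H}_{50,l}$ plus the landing relation $z_1c_{10}^{a,l}=-z_2c_{20}^{a,l}$; first order from $\mathcal{H}_{11,l},\mathcal{H}_{21,l},\mathcal{H}_{31,l}$ plus $z_1c_{11}^{a,l}=-z_2c_{21}^{a,l}$. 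Transversality of the intersection of $W^u(\mathscr{Z}_1)$ with $B_1$ is again a dimension count using normal hyperbolicity from Lemma \ref{0nh} and the fact that $B_1$ is a graph with the $u$-slot free.

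The routine but lengthy part is the algebraic bookkeeping: untangling the $3\times3$ linear systems for the first-order quantities and massaging the answers into the compact $w(\cdot,\cdot)$-notation of \eqref{Abbrev}; this is mechanical and I would only indicate it. The genuine content — and the one point that needs care — is the matching of each first integral's value between the entry point ($\xi=0$, lying in $B_0$ or $B_1$) and the limiting point on $\mathscr{Z}_1$, i.e. justifying that the conserved quantities of the limiting fast system \eqref{fzero01} extend continuously to the equilibrium manifold and that the boundary layer in $N^l$ (respectively $N^{a,l}$) actually reaches $\mathscr{Z}_1$ in the limit. This rests on normal hyperbolicity (Lemma \ref{0nh}): the stable and unstable fibers are well-defined and the orbits in $N^l, N^{a,l}$ are precisely stable/unstable fibers, so their $\omega$- and $\alpha$-limits are single points on $\mathscr{Z}_1$, and continuity of the $\mathcal H$'s does the rest. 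I expect the only real obstacle to be keeping the sign conventions straight (the two $\mathrm{sgn}$'s, the forward-versus-backward orientation) and confirming that the square-root arguments are nonnegative, which follows from $z_1>0>z_2$ and positivity of the concentrations.
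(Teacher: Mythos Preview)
Your proposal is correct and follows essentially the same route as the paper: cite the zeroth-order results from the earlier works, then for the first-order terms evaluate the integrals $\mathcal{H}_{11,l},\mathcal{H}_{21,l},\mathcal{H}_{31,l}$ at $\xi=0$ and at the limit on $\mathscr{Z}_1$, use $\phi_1(0)=c_{11}(0)=c_{21}(0)=0$ together with the landing constraint $z_1c_{11}^{l}+z_2c_{21}^{l}=0$ to solve for $\phi_1^{l},c_{11}^{l},c_{21}^{l}$, and read off $u_1^{l}$ from $\mathcal{H}_{31,l}$; part (ii) is the same computation anchored at $B_1$ with the nonzero $\phi_1^a,c_{11}^a,c_{21}^a$ carried along. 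Your write-up is in fact more explicit than the paper's (which is quite terse), but the method is identical.
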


%%%%%%%%%%%%%%%%%%%%%%%%%%%%%%%%%%%%%%%%%%

\begin{proof}
The stated result for the zeroth order system has been obtained in \cite{EL07, Liu05, Liu09}. For the first order system, from first integrals in Lemma \eqref{lem0th1} we establish the results for $\phi_1^{l}, c_{11}^{l}, c_{21}^{l}$ and $u_1^{l}$ for system \eqref{f1st0a}. We emphasize that those for  $\phi_1^{a,l}, c_{11}^{a,l}, c_{21}^{a,l}$ and $u_1^{a,l}$ are different here because we have extra terms $c_{11}^{a},~c_{21}^{a}$ and $\phi_{1}^{a}$. 
One should also note that $\phi_1(0)=c_{11}(0)=c_{21}(0)=0$. Using the first integrals $H_{11,l}$ and $H_{21,l}$, since $H_{j1,l}(\xi)=H_{j1,l}(0),~~j=1,2$, then
taking the limit as $\xi \to \infty$, we have
$$
\begin{aligned}
&c_{11}^{l}= c_{10}^{l} \Big( 2l_1+ (\lambda +1) l_2 - 2 c_{10}^{l} - (\lambda+ 1) c_{20}^{l} - z_1 \phi_1^{l}\Big),\\
&c_{21}^{l}= c_{20}^{l} \Big( 2\lambda l_2+ (\lambda +1) l_1 - 2\lambda c_{20}^{l} - (\lambda+ 1) c_{10}^{l} - z_2 \phi_1^{l}\Big).
\end{aligned}
$$
When $\xi \to \infty$, over the slow manifold $\mathscr{Z}_1$, where $z_1c_1+z_2c_2=0$, it follows from  \eqref{dsys} that $z_1c_{10}^{l}+z_2c_{20}^{l}=z_1c_{11}^{l}+z_2c_{21}^{l}=0$. Hence, from two above equations, $\phi_1^{l}$ will be obtained.
Then, $c_{11}^{l}$ and $c_{21}^{l}$ follow directly.
To derive the formula for $u_1^{l} = u_1(0)$, in view of $H_{31,l}(0) = H_{31,l}(\infty)$, we have
$$
u_0^{l}u_1^{l} - (\lambda +1 ) l_1l_2 -l_1^2 - \lambda l_2^2 = -c_{11}^{l} - c_{21}^{l} -(\lambda +1) c_{10}^{l}c_{20}^{l} -(c_{10}^{l})^2 - \lambda (c_{20}^{l})^2.
$$
The formula for $u_1^{l}$ follows directly.
%%%%%%%%%%%%%%%%%%%%%%%%%%%%%%%%%%%%%%%%%%%

\noindent Similarly, we find $c_{11}^{a,l}, c_{21}^{a,l}, \phi_{1}^{a,l}$ and $u_{1}^{a,l}$. 
Using the first integrals $H_{11,l}$ and $H_{21,l}$ in Lemma \ref{lem0th1}, since $H_{j1,l}(\xi)=H_{j1,l}(\infty)$ for $j=1,2$, taking the limit as $\xi \to 0$ one has,
$$
\begin{aligned}
&c_{11}^{a,l}= c_{10}^{a,l} \Big(z_1\phi_1^{a}+\frac{c_{11}^{a}}{c_{10}^{a}}+ 2c_{10}^{a}+ (\lambda +1) c_{20}^{a} - 2 c_{10}^{a,l} - (\lambda+ 1) c_{20}^{a,l} - z_1 \phi_1^{a,l}\Big),\\
&c_{21}^{a,l}= c_{20}^{a,l} \Big(z_2\phi_1^{a}+\frac{c_{21}^{a}}{c_{20}^{a}}+ 2\lambda c_{20}^{a}+ (\lambda +1) c_{10}^{a} - 2\lambda c_{20}^{a,l} - (\lambda+ 1) c_{10}^{a,l} - z_2 \phi_1^{a,l}\Big).
\end{aligned}
$$
Note that when $\xi \to 0$, we are on the slow manifold $\mathscr{Z}_1$ where $z_1c_1+z_2c_2=0$; hence \eqref{dsys} implies $z_1c_{10}^{a,l}+z_2c_{20}^{a,l}=z_1c_{11}^{a,l}+z_2c_{21}^{a,l}=0$. 
Then the formula for $\phi_1^{a,l}$ will be obtained.
To find $c_{11}^{a,l}$, we just need to substitute $\phi_1^{a,l}$ into the equation for $c_{11}^{a,l}$ in above.
We can find the formula for $u_1^{a,l} = u_1(\infty)$, directly from $H_{31,l}(0) = H_{31,l}(\infty)$.
\end{proof}

%%%%%%%%%%%%%%%%%%%%%%%%%%%%%%%%%%%%%%%%%%
\noindent {\bf \underline{Limiting slow dynamics and regular layers on $[0,a]$.}}
%%%%%%%%%%%%%%%%%%%%%%%%%%%%%%%%%%%%%%%%%%
Now we construct the regular layer on $\mathscr{Z}_1$ that connects $\omega(N^{l})$ and $\alpha(N^{a,l})$. Note that, for $\varepsilon =0$, system \eqref{Sloweps1} with $Q=0$ loses most information. To fix this problem, we follow the idea in \cite{EL07, Liu05, Liu09} and make a re-scaling $u=\varepsilon p$ and $-z_2c_2=z_1c_1+ \varepsilon q$ in system \eqref{Sloweps1}. In terms of the new variables, system \eqref{Sloweps1} with $Q=0$ becomes
\[
\begin{array}{l}
\overset{.}{\phi} = p, \quad \varepsilon\overset{.}{p} = q- \varepsilon \dfrac{h_{\tau}(\tau)}{h(\tau)}p, \quad \varepsilon \overset{.}{q}= (z_1f_1 +z_2f_2) p + \dfrac{z_1g_1+z_2g_2}{h(\tau)}  \\
 \overset{.}{c}_1 = -f_1p - \dfrac{g_1}{h(\tau)}, \quad \overset{.}{J_1} = \overset{.}{J_2} = 0, \quad \overset{.}{\tau}=1,
\end{array}
\]
where, for $k=1,2,$
$$
f_k=f_k\Big( c_1, -\dfrac{z_1c_1+\varepsilon q}{z_2};d, \lambda d \Big)\quad \text{and}\quad 
g_k=g_k\Big( c_1, -\dfrac{z_1c_1+\varepsilon q}{z_2}, J_1, J_2;d, \lambda d \Big),
$$
were defined in \eqref{fg2}.
This system is also a singular perturbation problem. Its limiting slow system is,
\begin{equation}\label{1limslow}
\begin{array}{l}
q = 0, \quad p = - \dfrac{\sum_{j=1}^2 z_jg_j\big(c_1,-\frac{z_1}{z_2}c_1, J_1, J_2; d, \lambda d \big)}{z_1(z_1-z_2)h(\tau)c_1}, \quad
\overset{.}{\phi}=p,\\
\overset{.}c_1= -f_1\big(c_1,-\frac{z_1}{z_2}c_1; d, \lambda d \big)p - \dfrac{1}{h(\tau)} g_1\big(c_1,-\frac{z_1}{z_2}c_1, J_1, J_2; d, \lambda d \big),\\
\overset{.}{J_1} = \overset{.}{J_2} = 0, \quad \overset{.}{\tau}=1.
\end{array}
\end{equation}
Note that to get the expression to the bottom of $p$, we have used \eqref{fg2}, that is,
$$
z_1f_1\Big(c_1, -\frac{z_1}{z_2}c_1; d, \lambda d \Big) + z_2 f_2\Big(c_1, -\frac{z_1}{z_2}c_1; d, \lambda d \Big)= z_1\big(z_1- z_2 \big)c_1.
$$
From system \eqref{1limslow}, the slow manifold is
$$
\mathcal{S}= \Big\{ q=0,~p= -\dfrac{\sum_{j=1}^2 z_jg_j\big(c_1,-\frac{z_1}{z_2}c_1, J_1, J_2; d, \lambda d \big)}{z_1(z_1 -z_2) h(\tau) c_1} \Big\}.
$$
Therefore, the limiting slow system on $\mathcal{S}$ is
%%%%%%%%%%%%%%%%%%%%%%%%%%%%%%%%%%%%%%%%%%%%%%
\begin{equation}\label{1limslow:SM}
\begin{array}{l}
\overset{.}{\phi}=p,\\
\overset{.}c_1= -f_1\big(c_1,-\frac{z_1}{z_2}c_1; d, \lambda d \big)p - \dfrac{1}{h(\tau)} g_1\big(c_1,-\frac{z_1}{z_2}c_1, J_1, J_2; d, \lambda d \big),\\
\overset{.}{J_1} = \overset{.}{J_2} = 0, \quad \overset{.}{\tau}=1.
\end{array}
\end{equation}
%%%%%%%%%%%%%%%%%%%%%%%%%%%%%%%%%%%%%%%%%%%%%%%
As for the layer problem, we look for solutions of \eqref{1limslow:SM} of the form
\begin{equation}\label{1slow:dsys}
\begin{array}{l}
\phi(x) = \phi_0(x) + \phi_1(x)d + o(d),\\
c_1(x) = c_{10}(x) + c_{11}(x)d + o(d),\\
J_1= J_{10}+J_{11}d+o(d), \quad J_2= J_{20}+J_{21}d+o(d),
\end{array}
\end{equation}
to connect $\omega\big(N^{l} \big)$ and $\alpha\big(N^{a,l} \big)$ given in Proposition \eqref{propHS0a}. In particular, for $j=0,1,$
$$
\big( \phi_j(0),c_{1j}(0)\big) = \big(\phi_j^{l}, c_{1j}^{l} \big), \quad \big( \phi_j(a),c_{1j}(a)\big) = \big(\phi_j^{a,l}, c_{1j}^{a,l} \big).
$$
%%%%%%%%%%%%%%%%%%%%%%%%%%%%%%%%%%%%%%%%%%%%%%
From  \eqref{fg2}  over the slow manifold $\mathscr{Z}_1$, one obtains
$$
\begin{aligned}
&f_1\big( c_1, -\dfrac{z_1c_1}{z_2};d, \lambda d \big)=  z_1c_{10} + \Big(z_1 c_{11} + (\lambda - 1) z_1c_{10}^2\Big) d + o(d),\\
&g_1\big( c_1, -\dfrac{z_1c_1}{z_2};d, \lambda d \big)= J_{10} + \Big(  J_{11} -c_{10} \big[T_0 +\Lambda_0 \big] \Big)d + o(d),\\
&z_1g_1+z_2g_2 = I_0 +\Big( I_1 - (1-\lambda) z_1c_{10} T_0 \Big) d +o(d).
\end{aligned}
$$
%%%%%%%%%%%%%%%%%%%%%%%%%%%%%%%%%%%%%%%%%%%%%
Thus, It follows from above and \eqref{1limslow:SM} that,
\begin{equation}\label{1slow:27}
\begin{array}{l}
\overset{.}{\phi_0}=  \dfrac{-I_0}{z_1(z_1-z_2) h(\tau) c_{10}}, \quad \overset{.}{c}_{10}=  \dfrac{z_2T_0}{(z_1-z_2) h(\tau)},\\
\overset{.}{J}_{10} = \overset{.}{J}_{20} = 0, \quad \overset{.}{\tau} =1,
\end{array}
\end{equation}
and
\begin{equation}\label{1slow:28}
\begin{array}{l}
\overset{.}{\phi_1}=  \dfrac{I_0 c_{11}}{z_1(z_1-z_2) h(\tau) c_{10}^2} + 
\dfrac{(1-\lambda)z_1T_0c_{10} - I_1}{z_1(z_1-z_2) h(\tau) c_{10}},\\ 
\overset{.}{c}_{11}=  \dfrac{2(\lambda z_1 - z_2)T_0c_{10} + z_2T_1}{(z_1-z_2) h(\tau)}, \quad
\overset{.}{J}_{11} = \overset{.}{J}_{21} = 0.
\end{array}
\end{equation}

%%%%%%%%%%%%%%%%%%%%%%%%%%%%%%%%%%%%

\begin{lem}\label{lem22}
For $0 \leq x \leq a$, there is a unique solution $(\phi_0(x), c_{10}(x), J_{10}, J_{20}, \tau(x)\big)$ of \eqref{1slow:27} such that 
\[
\big( \phi_0(0), c_{10}(0), \tau(0) \big) = \big( \phi_0^{l}, c_{10}^{l}, 0 \big) ~~\textrm{and}~~\big( \phi_0(a), c_{10}(a), \tau(a) \big) = \big( \phi_0^{a,l}, c_{10}^{a,l}, a \big) 
\]
where $\phi_0^{l},~\phi_0^{a,l},~c_{10}^{l}$, and $c_{10}^{a,l}$ are given in Proposition \eqref{propHS0a}. The solution, for $0 \leq x \leq a$, is given by
\begin{equation}\label{L2:31}
\begin{aligned}
&\phi_0(x) = \phi_0^{l}+ \dfrac{\phi_0^{a,l} - \phi_0^{l}}{\ln c_{10}^{a,l}- \ln c_{10}^{l}} \ln \Big(1- \dfrac{H(x)}{H(a)} + \dfrac{H(x)}{H(a)}\dfrac{c_{10}^{a,l}}{c_{10}^{l}} \Big), \\
&c_{10}(x) = \Big(1- \dfrac{H(x)}{H(a)} \Big)c_{10}^{l} + \dfrac{H(x)}{H(a)}c_{10}^{a,l},\\
&J_{10} = \dfrac{c_{10}^{l}- c_{10}^{a,l}}{H(a)} \Big(1+ \dfrac{z_1\big( \phi_0^{l}- \phi_0^{a,l}\big)}{\ln c_{10}^{l} - \ln c_{10}^{a,l}} \Big),\\
&J_{20}= -\dfrac{z_1\big(c_{10}^{l}- c_{10}^{a,l}\big)}{z_2H(a)} \Big(1+ \dfrac{z_2\big( \phi_0^{l}- \phi_0^{a,l}\big)}{\ln c_{10}^{l} - \ln c_{10}^{a,l}} \Big),\quad \tau(x)= x.
\end{aligned}
\end{equation}
\end{lem}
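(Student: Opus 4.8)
The plan is to integrate the planar system \eqref{1slow:27} explicitly. The key observation is that among the four equations the three quantities $J_{10}$, $J_{20}$, $\tau$ are trivially handled ($\dot\tau=1$ gives $\tau(x)=x$ immediately, and $\dot J_{k0}=0$ makes them constants to be determined by the boundary data), so the real content is the planar subsystem in $(\phi_0,c_{10})$. I would first eliminate the independent variable in favour of $H(x)=\int_0^x h(s)^{-1}\,ds$: since $\dot H(x)=1/h(x)$, writing $' = d/dH$ turns \eqref{1slow:27} into the autonomous, $h$-free system $\phi_0' = -I_0/\bigl(z_1(z_1-z_2)c_{10}\bigr)$ and $c_{10}' = z_2 T_0/(z_1-z_2)$, where $I_0 = z_1 J_{10}+z_2 J_{20}$ and $T_0 = J_{10}+J_{20}$ are constants. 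Hence $c_{10}$ is \emph{affine} in $H$, which forces $c_{10}(x) = c_{10}^{l} + \dfrac{H(x)}{H(a)}\bigl(c_{10}^{a,l}-c_{10}^{l}\bigr)$ once we impose the two endpoint values from Proposition \ref{propHS0a}; this gives the second line of \eqref{L2:31} and, incidentally, identifies the slope: $c_{10}' = (c_{10}^{a,l}-c_{10}^{l})/H(a) = z_2 T_0/(z_1-z_2)$.

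Next I would obtain $\phi_0$ by dividing the two equations: $\dfrac{d\phi_0}{dc_{10}} = \dfrac{-I_0}{z_1 z_2 T_0\, c_{10}}$, a constant multiple of $1/c_{10}$, so $\phi_0$ is an affine function of $\ln c_{10}$. Pinning down the two constants by $\phi_0(0)=\phi_0^{l}$, $\phi_0(a)=\phi_0^{a,l}$ (equivalently, matching the values of $\ln c_{10}$ at the two ends) yields
\[
\phi_0(x) = \phi_0^{l} + \frac{\phi_0^{a,l}-\phi_0^{l}}{\ln c_{10}^{a,l}-\ln c_{10}^{l}}\,\bigl(\ln c_{10}(x) - \ln c_{10}^{l}\bigr),
\]
and substituting the affine formula for $c_{10}(x)$ gives precisely the first line of \eqref{L2:31}. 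Finally, reading off the ratio $-I_0/(z_1 z_2 T_0)$ from the slope in the $\phi_0$–$\ln c_{10}$ plane, together with the slope $c_{10}' = z_2 T_0/(z_1-z_2)$ for $c_{10}$, produces two linear relations for the unknowns $J_{10},J_{20}$; solving this $2\times 2$ linear system gives the stated expressions for $J_{10}$ and $J_{20}$. Uniqueness is automatic: the solution was derived, not guessed, so any solution with the prescribed endpoints must coincide with it (alternatively, invoke the standard existence–uniqueness theorem, noting that $c_{10}$ stays bounded away from $0$ between its positive endpoint values so the vector field is smooth along the orbit).

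The only mild subtlety — and the step I would treat with a little care rather than as routine — is the consistency check that the constants $I_0,T_0$ extracted from the two slopes are mutually compatible and reproduce the displayed closed forms for $J_{10},J_{20}$; this amounts to verifying that $-I_0/(z_1 z_2 T_0)$ equals $(\phi_0^{a,l}-\phi_0^{l})/(\ln c_{10}^{a,l}-\ln c_{10}^{l})$ divided by the already-known $c_{10}'$, i.e. that the algebra of inverting $I_0 = z_1J_{10}+z_2J_{20}$, $T_0 = J_{10}+J_{20}$ indeed yields the formulas in \eqref{L2:31}. There is no genuine obstacle here — the system is exactly solvable — so the ``proof'' is really an organized computation: change variable to $H$, observe linearity of $c_{10}$, integrate $d\phi_0/dc_{10}$, and back out $J_{10},J_{20}$ from the slopes.
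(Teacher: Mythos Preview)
Your proposal is correct and follows essentially the same route as the paper's proof: both integrate the $c_{10}$-equation to find it affine in $H(x)$, then use the resulting expression to compute $\int_0^x \frac{1}{h(s)c_{10}(s)}\,ds$ (your division $d\phi_0/dc_{10}=-I_0/(z_1z_2T_0c_{10})$ is exactly this integral in disguise), and finally read off $T_0$ and $I_0$ from the endpoint conditions to solve for $J_{10},J_{20}$. The only cosmetic difference is that you change variables to $H$ at the outset, whereas the paper carries $h(\tau)$ through the integration and lets $H(x)$ emerge; the content is identical.
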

%%%%%%%%%%%%%%%%%%%%%%%%%%%%%%%%%%%%%%%%%%
\begin{proof}
The solution of system \eqref{1slow:27} with the initial condition $\big(\phi_0^{l}, c_{10}^{l}, J_{10}, J_{20}, 0\big)$, for $0 \leq x \leq a$, is,
\[
\phi_0(x) = \phi_0^{l}- \frac{I_0}{z_1(z_1 -z_2)} \int_0^x \frac{1}{h(s)c_{10}(s)} ds, \quad 
c_{10}(x) = c_{10}^{l} +\frac{z_2T_0}{z_1 - z_2} H(x), \quad \tau(x)=x.
\]
%%%%%%%%%%%%%%%%%%%%%%%%%%%%%%%%%%%%%%%%%
It follows from the  second equation in above and at $x=a$ where $c_{10}(a) =c_{10}^{a,l}$ that
\begin{equation}\label{1sT0}
T_0 =  \dfrac{(z_1 - z_2)(c_{10}^{a,l} - c_{10}^{l})}{z_2H(a)}.
\end{equation}
%%%%%%%%%%%%%%%%%%%%%%%%%%%%%%%%%%%%%%%%%
We will find the above integral in \eqref{1f0}.
Thus, $
\phi_0(x) = \phi_0^{l}-  \frac{I_0H(a)\big( \ln c_{10}(x) - \ln c_{10}^{l} \big)}{z_1(z_1-z_2)\big(c_{10}^{a,l}-c_{10}^l\big)}.
$
Now, applying the boundary conditions $c_{10}(a) =c_{10}^{a,l}$ and $\phi_0(a) = \phi_{0}^{a,l}$ we have 
\begin{equation}\label{1sI0}
\begin{array}{l}
I_0 =  \dfrac{z_1(z_1 - z_2)(c_{10}^{l} - c_{10}^{a,l})(\phi_{0}^{l} - \phi_{0}^{a,l})}{H(a)(\ln c_{10}^{l}- \ln c_{10}^{a,l})}.
\end{array}
\end{equation}
The expressions for $J_{10}$ and $J_{20}$, and hence, for $\phi_0(x)$ and $c_{10}(x)$ follow directly.
\end{proof}

To find the first order terms and for convenience we first go through the following integrals, using equations \eqref{1slow:27}, \eqref{1sT0} and \eqref{1sI0}:
\begin{equation}\label{1f0}
\begin{aligned}
&\int_0^x \dfrac{1}{h(s)c_{10}(s)}ds = H(a)\dfrac{\ln c_{10}^{l}- \ln c_{10}(x)}{c_{10}^{l} - c_{10}^{a,l}},\quad \int_0^x \dfrac{c_{10}(s)}{h(s)}ds = \dfrac{H(a) \Big((c_{10}^{l})^2 - c_{10}^2(x)\Big)}{2\big(c_{10}^{l}- c_{10}^{a,l}\big)},\\
&\int_0^x \dfrac{1}{h(s)c_{10}^2(s)}ds = \dfrac{H(a)(c_{10}^{l} - c_{10}(x))}{c_{10}^{l}c_{10}(x)(c_{10}^{l}- c_{10}^{a,l})},\\
&\int_0^x \dfrac{H(s)}{h(s)c_{10}^2(s)}ds = \dfrac{H(a)}{(c_{10}^{l}- c_{10}^{a,l})} \Big( \dfrac{H(x)}{c_{10}(x)} - H(a)\dfrac{\ln c_{10}^{l} - \ln c_{10}(x)}{c_{10}^{l} - c_{10}^{a,l}}\Big).
\end{aligned}
\end{equation}
We also define the following three functions for convenience (Note that $w$ is defined in \eqref{Abbrev}),
\begin{equation}\label{MB1-MNP}
\begin{aligned}
M_{0} :=& c_{11}^{l}-c_{11}^{a,l}+  \dfrac{\lambda z_1 - z_2}{z_2}\big( (c_{10}^{a,l})^2 - (c_{10}^{l})^2\big)\\ 
=& c_{10}^{l} w(l_1,l_2) - c_{10}^{a,l} w(c_{10}^{a},c_{20}^{a}) + \dfrac{\lambda z_1 - z_2}{z_2}\big( (c_{10}^{l})^2 - (c_{10}^{a,l})^2\big) - c_{10}^{a,l}\dfrac{z_1\frac{c_{21}^{a}}{c_{20}^{a}} - z_2\frac{c_{11}^{a}}{c_{10}^{a}}}{z_1-z_2} , \\
N_{0}:=& \dfrac{(c_{10}^{l}- c_{10}^{a,l})}{\ln c_{10}^{l}- \ln c_{10}^{a,l}} \Bigg\{
\big( \phi_{1}^{l} - \phi_{1}^{a,l} \big) - \dfrac{1-\lambda }{z_2}\big(c_{10}^{l} - c_{10}^{a,l} \big)  +\dfrac{\phi_{0}^{l} - \phi_{0}^{a,l}}{c_{10}^{l}- c_{10}^{a,l}}M_{0}\\
&\qquad \qquad - \dfrac{w(l_1, l_2) -w(c_{10}^{a}, c_{20}^{a}) }{\big( \ln c_{10}^{l} - \ln c_{10}^{a,l} \big)}\big(\phi_{0}^{l} - \phi_{0}^{a,l} \big)+ \dfrac{\phi_{0}^{l} - \phi_{0}^{a,l}}{\ln c_{10}^{l}- \ln c_{10}^{a,l}}\dfrac{z_1\frac{c_{21}^{a}}{c_{20}^{a}} - z_2\frac{c_{11}^{a}}{c_{10}^{a}}}{z_1-z_2}\Bigg\},\\
P_0(x):=& \Big( w(l_1, l_2) + \dfrac{\lambda z_1 -z_2}{z_2}c_{10}^{l}\Big)\dfrac{\big(c_{10}^{l} - c_{10}(x)\big)}{c_{10}(x)}+ \dfrac{\lambda z_1 -z_2}{z_2}\dfrac{(c_{10}^{l} - c_{10}^{a,l})H(x)}{H(a)} \\
 &- \dfrac{M_{0}H(x)}{H(a)c_{10}(x)} +\dfrac{M_{0}\big(\ln c_{10}^{l} - \ln c_{10}(x)\big)}{c_{10}^{l} - c_{10}^{a,l}}.
\end{aligned}
\end{equation}
%%%%%%%%%%%%%%%%%%%%%%%%%%%%%%%%%%

\begin{lem}\label{lem:1slow1st} 
For $0 \leq x \leq a$, there is a unique solution $\big(\phi_1(x), c_{11}(x), J_{11}, J_{21}, \tau(x) \big)$ of \eqref{1slow:28} such that
$$
\big(\phi_1(0), c_{11}(0), \tau(0) \big) = (\phi_1^{l}, c_{11}^{l}, 0)~~ \textrm{~and~}~~\big(\phi_1(a), c_{11}(a), \tau(a) \big) = (\phi	_1^{a,l}, c_{11}^{a,l}, a),
$$
where $\phi_1^{l}, \phi_1^{a,l}, c_{11}^{l}$ and $c_{11}^{a,l}$ are given in Proposition \eqref{propHS0a}. It is given by
\begin{equation}\label{MB1-J}
\begin{array}{l}
\phi_1(x) = \phi_1^{l} - \dfrac{(1-\lambda)(c_{10}^{l}- c_{10}^{a,l})H(x)}{z_2H(a)}+ \dfrac{(\phi_{0}^{l} - \phi_{0}^{a,l})}{(\ln c_{10}^{l}- \ln c_{10}^{a,l})} P_0(x)\\
\hspace*{.6in} - \dfrac{ \ln c_{10}^{l}- \ln c_{10}(x)}{ \big(  c_{10}^{l} -c_{10}^{a,l} \big)} N_{0},\\
c_{11}(x)= c_{11}^{l}+ \dfrac{\lambda z_1 - z_2}{z_2}\Big(c^2_{10}(x) - (c_{10}^{l})^2 \Big)-\dfrac{H(x)}{H(a)}M_{0},\\
J_{11}= \dfrac{M_{0}+z_1N_{0}}{H(a)}, \hspace*{.3in} J_{21}= -\dfrac{z_1}{z_2}\dfrac{M_{0}+z_2N_{0}}{H(a)}, 
\end{array}
\end{equation}
where $M_{0}$, $N_{0}$, and $P_0(x)$ are defined in \eqref{MB1-MNP}.
\end{lem}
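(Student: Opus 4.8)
The plan is to regard \eqref{1slow:28}, together with $\overset{.}{J}_{11}=\overset{.}{J}_{21}=0$ and $\overset{.}{\tau}=1$, as a linear nonautonomous system whose only genuine unknowns are the two integration constants $J_{11},J_{21}$ (equivalently the combinations $T_1=J_{11}+J_{21}$ and $I_1=z_1J_{11}+z_2J_{21}$ that appear on the right-hand side): $c_{10}(x)$, $H(x)$, $T_0$ and $I_0$ are already known from Lemma \ref{lem22} and \eqref{1sT0}, \eqref{1sI0}, and the initial data $\phi_1^{l},c_{11}^{l}$ are fixed by Proposition \ref{propHS0a}. Hence, by the standard theory of linear ODEs, a solution with those initial data is unique once $J_{11},J_{21}$ are pinned down; the substance of the lemma is the explicit integration and the determination of $J_{11},J_{21}$ from the terminal conditions at $x=a$.

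First I would integrate the $c_{11}$-equation. Using $\int_0^x h(s)^{-1}ds=H(x)$ and $\int_0^x c_{10}(s)h(s)^{-1}ds$ from \eqref{1f0}, together with $T_0=(z_1-z_2)(c_{10}^{a,l}-c_{10}^{l})/(z_2H(a))$ from \eqref{1sT0}, the quadratic dependence collapses and one obtains
\[
c_{11}(x)=c_{11}^{l}+\frac{\lambda z_1-z_2}{z_2}\big(c_{10}^2(x)-(c_{10}^{l})^2\big)+\frac{z_2T_1}{z_1-z_2}H(x).
\]
Imposing $c_{11}(a)=c_{11}^{a,l}$ and recognizing the definition of $M_0$ in \eqref{MB1-MNP} forces $T_1=-(z_1-z_2)M_0/(z_2H(a))$, which turns the last display into the stated formula for $c_{11}(x)$.

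Next I would substitute this $c_{11}(x)$ into the $\phi_1$-equation, split the right-hand side into the elementary pieces proportional to $1/(h c_{10}^2)$, $1/h$, $H/(h c_{10}^2)$ and $1/(h c_{10})$, and integrate each using the four closed forms in \eqref{1f0}. Collecting the $I_0$- and $T_0$-terms — after inserting $c_{11}^{l}=c_{10}^{l}\big(w(l_1,l_2)+\tfrac{2(\lambda z_1-z_2)}{z_2}c_{10}^{l}\big)$ from Proposition \ref{propHS0a} and the value of $I_0$ from \eqref{1sI0} — reassembles exactly the function $P_0(x)$ of \eqref{MB1-MNP}, while the $I_1$-term integrates to a multiple of $\ln c_{10}^{l}-\ln c_{10}(x)$. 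Evaluating at $x=a$ and matching $\phi_1(a)=\phi_1^{a,l}$ gives a single linear equation whose solution is $I_1=z_1(z_1-z_2)N_0/H(a)$, with $N_0$ precisely the bracketed expression in \eqref{MB1-MNP}; substituting $I_1$ back yields the displayed formula for $\phi_1(x)$. Finally, inverting the $2\times2$ system $J_{11}+J_{21}=T_1$, $z_1J_{11}+z_2J_{21}=I_1$ (nonsingular since $z_1\neq z_2$) produces $J_{11}=(M_0+z_1N_0)/H(a)$ and $J_{21}=-\tfrac{z_1}{z_2}(M_0+z_2N_0)/H(a)$.

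The two integrations are routine; the real obstacle is organizational — checking that the various logarithmic and rational terms coming from the third and fourth integrals in \eqref{1f0}, once combined with the explicit expansions of $c_{11}^{l},\phi_1^{l},\phi_1^{a,l},c_{11}^{a,l}$ from Proposition \ref{propHS0a}, recombine verbatim into the definitions of $P_0(x)$ and $N_0$. I would also note (consistently with the rest of the paper) that these formulas presuppose the nondegeneracy $c_{10}^{l}\neq c_{10}^{a,l}$, so that $T_0\neq 0$ and the denominators $\ln c_{10}^{l}-\ln c_{10}^{a,l}$ are meaningful, the degenerate case being recovered by continuity.
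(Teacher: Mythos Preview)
Your proposal is correct and follows essentially the same approach as the paper: integrate the $c_{11}$-equation first using \eqref{1f0} to fix $T_1$ from the terminal condition, then substitute $c_{11}(x)$ into the $\phi_1$-equation, integrate term-by-term via \eqref{1f0} to identify $P_0(x)$ and fix $I_1$ from $\phi_1(a)=\phi_1^{a,l}$, and finally solve the $2\times2$ linear system for $J_{11},J_{21}$. Your added remarks on uniqueness via linear ODE theory and the nondegeneracy hypothesis $c_{10}^{l}\neq c_{10}^{a,l}$ are useful clarifications that the paper leaves implicit.
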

%%%%%%%%%%%%%%%%%%%%%%%%%%%%%%%%%%%%%%%%%%

\begin{proof} It follows from \eqref{1slow:28}, \eqref{1sT0}, and \eqref{1f0}  that
$$
\begin{aligned}
c_{11}(x)&= c_{11}^{l}+\dfrac{\lambda z_1 - z_2}{z_2}\big( c_{10}^2(x) - (c_{10}^{l})^2\big) + \dfrac{z_2T_1}{z_1 - z_2} H(x).
\end{aligned} 
$$
Then, from above and \eqref{MB1-MNP} at $x=a$ we have, 
\[ 
T_1 = -\dfrac{z_1 - z_2}{z_2 H(a)}M_{0}.
\]
%%%%%%%%%%%%%%%%%%%%%%%%%%%%%%%%%
Hence, $c_{11}(x)$ will be obtained. 
To find $\phi_1(x)$, one has from \eqref{1slow:28},
\[
\begin{aligned}
\phi_1(x) =& \phi_1^{l} + \dfrac{I_0}{z_1(z_1 - z_2)} \int_0^x \dfrac{c_{11}(s)}{h(s)c_{10}^2(s)}ds + \dfrac{(1-\lambda)T_0}{z_1 -z_2} H(x) - \dfrac{I_1}{z_1(z_1 - z_2)} \int_0^x \dfrac{1}{h(s)c_{10}(s)}ds.
\end{aligned}
\]
One can find the first integral from $c_{11}(x)$ and \eqref{1f0}.
Then from \eqref{1sI0} and above,
\[
\dfrac{I_0}{z_1(z_1-z_2)}\int_0^x \dfrac{c_{11}(s)}{h(s)c_{10}^2(s)}ds =\dfrac{(\phi_{0}^{l} - \phi_{0}^{a,l})}{(\ln c_{10}^{l}- \ln c_{10}^{a,l})} P_{0}(x).
\]

Hence,

\[
\begin{aligned}
\phi_1(x) =& \phi_1^{l} + \dfrac{\phi_{0}^{l} - \phi_{0}^{a,l}}{\ln c_{10}^{l}- \ln c_{10}^{a,l}} P_{0}(x) - \dfrac{(1-\lambda)(c_{10}^l- c_{10}^{a,l})}{z_2 H(a)} H(x) - \dfrac{I_1 H(a)}{z_1(z_1 - z_2)}\dfrac{\ln c_{10}^{l}- \ln c_{10}(x)}{c_{10}^l- c_{10}^{a,l}} .
\end{aligned}
\]

At $x=a$ one obtains,
\[
I_1= \dfrac{z_1(z_1 - z_2)}{ H(a)} N_0,
\]
where $N_0$ was defined in \eqref{MB1-MNP}. Then, the equations for $\phi_1(x), J_{11}$ and $J_{21}$ follow directly.
\end{proof}

%%%%%%%%%%%%%%%%%%%%%%%%%%%%%%%%%%%%%

\subsection{A singular orbit on $[a,b]$ where $Q(x)=Q$.}
Here we construct singular orbits for the connecting problem from $B_{1}$ to $B_2$. Each such an orbit will consist of two boundary layers $\Gamma^{a,m}$ at $x=a,~\Gamma^{b,m}$ at $x=b$, and a regular layer $\Lambda_2$ over the interval $[a,b]$.
\medskip

%%%%%%%%%%%%%%%%%%%%%%%%%%%%%%%%%%%%%%%%%%%%%%%
\noindent {\bf \underline{Fast dynamics and boundary/internal layers on $[a,b]$.} } 
%%%%%%%%%%%%%%%%%%%%%%%%%%%%%%%%%%%%%%%%%%%%%%%
By setting $\varepsilon =0$ in system \eqref{Sloweps1} with $Q=Q_2 \neq 0$, the slow manifold is
$$
\mathscr{Z}_2 = \Big\{ u=0,~ z_1c_1+ z_2c_2 + Q=0 \Big\}.
$$
The set of equilibria of \eqref{fzero02} is precisely $\mathscr{Z}_2$, and from lemma \eqref{0nh}, when $Q=Q_2$, the slow manifold $\mathscr{Z}_2$ is {normally hyperbolic} for system \eqref{feps02}.
%%%%%%%%%%%%%%%%%%%%%%%%%%%%%%%%%%%%%%%%
We denote the stable \big(resp. unstable\big) manifold of $\mathscr{Z}_2$ by $W^s(\mathscr{Z}_2)$ \big(resp. $W^u(\mathscr{Z}_2)$\big). Let $M^{a,m}$ be the collection of orbits from $B_{1}$ in forward time under the flow of system \eqref{feps02} and $M^{b,m}$ be the collection of orbits from $B_2$ in backward time under the flow of system \eqref{feps02}. Then, for a singular orbit connecting $B_{1}$ to $B_2$, the boundary layer at $\tau =x = a$ must lie in $N^{a,m} = M^{a,m} \cap W^s(\mathscr{Z}_2)$ and the boundary layer at $\tau = x =b$ must lie in $N^{b,m} = M^{b,m} \cap W^u(\mathscr{Z}_2)$. 
As before we look for solutions
$$
\Gamma(\xi;d)= \big(\phi(\xi;d), u(\xi;d), c_1(\xi;d), c_2(\xi;d), J_1(d), J_2(d), \tau\big),
$$
of the system \eqref{fzero02} of the form \eqref{dsys}.
Substituting \eqref{dsys} into the system \eqref{fzero02}, we obtain, for the zeroth order in $d$,
\begin{equation}\label{0th12}
\begin{array}{l}
\phi_0'=u_0, \quad u_0'= -z_1c_{10}-z_2c_{20}- Q,\\
c_{10}'= -z_1c_{10}u_0,\quad c_{20}'= -z_2c_{20}u_0,\\
J_{10}'=J_{20}'=0, \quad \tau'=0,
\end{array}
\end{equation}
and from \eqref{fg2}, the first order in $d$ terms are,
\begin{equation}\label{1st12}
\begin{array}{l}
\phi_1'=u_1, \quad u_1'= -z_1c_{11}-z_2c_{21},\\
c_{11}'= -z_1c_{11}u_0 -z_1c_{10}u_1 +u_0 \Big((\lambda +1)z_2 c_{10}c_{20} + 2z_1 c_{10}^2 \Big),\\
c_{21}'= -z_2c_{21}u_0 -z_2c_{20}u_1 +u_0 \Big((\lambda +1)z_1 c_{10}c_{20} + 2\lambda z_2 c_{20}^2 \Big),\\
J_{11}'=J_{21}'=0, \quad \tau'=0.
\end{array}
\end{equation}

%%%%%%%%%%%%%%%%%%%%%%%%%%%%%%%%%%%%%

\begin{lem}{\label{le:0th12}}
The zeroth order system \eqref{0th12} has a complete set of first integrals, for $j=1,2,$
$$
\begin{array}{l}
\mathcal{H}_{10,m} = e^{z_1 \phi_0} c_{10}, \quad \mathcal{H}_{20,m} = e^{z_2 \phi_0} c_{20}, \quad \mathcal{H}_{30,m}=J_{10}, \quad \mathcal{H}_{40,m}=J_{20},\\
 \mathcal{H}_{50,m}=c_{10}+c_{20}-\dfrac{1}{2}u_0^2- Q \phi_0, \quad {\mathcal{H}}_{60,m}=\tau,
\end{array}
$$
and the first order system \eqref{1st12} has a complete set of first integrals,
$$
\begin{array}{l}
\mathcal{H}_{11,m} = z_1 \phi_1 + c_{11}/c_{10}+ 2c_{10} + (\lambda+1)c_{20},\quad
\mathcal{H}_{21,m} = z_2 \phi_1 +  c_{21}/c_{2  0}+ 2\lambda c_{20} + (\lambda+1)c_{10},\\
\mathcal{H}_{31,m}=u_0u_1 - c_{11} -c_{21} - (\lambda+1) c_{10}c_{20} - c_{10}^2 - \lambda c_{20}^2 + \phi_1 Q ,\quad
 \mathcal{H}_{41,m}=J_{11}, \quad \mathcal{H}_{51,m}=J_{21}.
\end{array}
$$
\end{lem}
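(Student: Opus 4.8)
The plan is to verify each of the proposed functions is constant along the corresponding flow by direct differentiation, exactly as in the proof of Lemma~\ref{lem0th1}; the only genuinely new feature here is bookkeeping the permanent charge $Q$, which enters the system \eqref{0th12}--\eqref{1st12} only through the equation $u_0'=-z_1c_{10}-z_2c_{20}-Q$ and nowhere else.

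For the zeroth order system \eqref{0th12}, I would first dispose of $\mathcal{H}_{30,m}=J_{10}$, $\mathcal{H}_{40,m}=J_{20}$, $\mathcal{H}_{60,m}=\tau$, which are constant by inspection since $J_{10}'=J_{20}'=\tau'=0$. For $\mathcal{H}_{10,m}=e^{z_1\phi_0}c_{10}$ one computes $\frac{d}{d\xi}\mathcal{H}_{10,m}=e^{z_1\phi_0}(z_1u_0c_{10}+c_{10}')=e^{z_1\phi_0}(z_1u_0c_{10}-z_1c_{10}u_0)=0$, and likewise for $\mathcal{H}_{20,m}$; these are identical to the $Q=0$ case because the $c_{k0}'$-equations do not contain $Q$. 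The integral that changes is $\mathcal{H}_{50,m}=c_{10}+c_{20}-\tfrac12 u_0^2-Q\phi_0$, whose $\xi$-derivative is $-z_1c_{10}u_0-z_2c_{20}u_0-u_0(-z_1c_{10}-z_2c_{20}-Q)-Qu_0=0$, the new term $-Q\phi_0$ contributing precisely the $-Qu_0$ needed to absorb the permanent-charge term in $u_0'$. Since the phase space of \eqref{0th12} is seven–dimensional and these six functions are functionally independent on an open dense set, they form a complete set.

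For the first order system \eqref{1st12}, $\mathcal{H}_{41,m}=J_{11}$ and $\mathcal{H}_{51,m}=J_{21}$ are trivially conserved. For $\mathcal{H}_{11,m}=z_1\phi_1+c_{11}/c_{10}+2c_{10}+(\lambda+1)c_{20}$ I would differentiate using $\phi_1'=u_1$ together with the $c_{11}',c_{10}',c_{20}'$ equations: the $u_1$-terms cancel in pairs and the remaining terms $u_0\big((\lambda+1)z_2c_{20}+2z_1c_{10}\big)-2z_1c_{10}u_0-(\lambda+1)z_2c_{20}u_0$ sum to zero; this is verbatim the $Q=0$ computation because neither $u_1'=-z_1c_{11}-z_2c_{21}$ nor the $c_{k1}'$-equations involve $Q$. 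The argument for $\mathcal{H}_{21,m}$ is symmetric. For $\mathcal{H}_{31,m}=u_0u_1-c_{11}-c_{21}-(\lambda+1)c_{10}c_{20}-c_{10}^2-\lambda c_{20}^2+\phi_1Q$, differentiating $u_0u_1$ produces $u_0'u_1=(-z_1c_{10}-z_2c_{20}-Q)u_1$; the $-z_1c_{10}u_1-z_2c_{20}u_1$ pieces cancel against contributions from $-c_{11}'-c_{21}'$ as in Lemma~\ref{lem0th1}, the $u_0$-proportional terms cancel by the same algebra, and the leftover $-Qu_1$ is cancelled exactly by $(\phi_1Q)'=Qu_1$. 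These five functions, together with the flow direction, account for the six first-order variables $(\phi_1,u_1,c_{11},c_{21},J_{11},J_{21})$, so the set is complete.

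I do not expect a real obstacle: the proof is a routine, if slightly lengthy, verification, and the only conceptual point — already visible in the statement — is the presence of the correction terms $-Q\phi_0$ in $\mathcal{H}_{50,m}$ and $+\phi_1Q$ in $\mathcal{H}_{31,m}$, which are forced by the $-Q$ in $u_0'$ and, via $\phi_1'=u_1$, propagate into the quantity controlling $u_0u_1$. The one thing worth stating explicitly is that no further $Q$-correction is needed at first order, which holds because $Q$ does not appear in \eqref{1st12} at all except implicitly through $u_0$, and $\mathcal{H}_{31,m}$ is the only first-order integral that mixes $u_0$ with the concentration variables.
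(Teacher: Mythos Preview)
Your proposal is correct and follows exactly the approach the paper takes: the paper's proof consists of the single sentence ``It can be verified directly,'' and you have carried out precisely that verification, correctly identifying that the only nontrivial point is how the $-Q$ in $u_0'$ forces the extra $-Q\phi_0$ in $\mathcal{H}_{50,m}$ and $+\phi_1 Q$ in $\mathcal{H}_{31,m}$.
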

\begin{proof}
It can be verified directly.
\end{proof}
%%%%%%%%%%%%%%%%%%%%%%%%%%%%%%%%%%%%%%%%%
%%%%%%%%%%%%%%%%%%%%%%%%%%%%%%%%%%%%%%%%%%
\noindent Recall that we are interested in the solutions of $\Gamma^{a,m}(\xi;d) \subset N^{a,m} = M^{a,m} \cap W^s(\mathscr{Z}_2)$ with $\Gamma^{a,m}(0;d) \in B_{1}$, and $\Gamma^{b,m}(\xi;d) \subset N^{b,m} = M^{b,m} \cap W^u(\mathscr{Z}_2)$ with $\Gamma^{b,m}(0;d) \in B_{2}$.

%%%%%%%%%%%%%%%%%%%%%%%%%%%%%%%%%%

\begin{prop}\label{propHSab}
Assume that $d>0$ is small.\\
(i) The stable manifold $W^s(\mathscr{Z}_2)$ intersects $B_1$ transversally at points
$$
\big(\phi_0^{a}+ \phi_1^{a}d + o(d), u_0^{a,m}+ u_1^{a,m}d+ o(d), c_{10}^{a}+ c_{11}^{a}d + o(d), c_{20}^{a}+c_{21}^{a}d + o(d), J_1(d), J_2(d), a\big),
$$
and the $\omega-$limit set of $N^{a,m}=M^{a,m}\cap W^s(\mathscr{Z}_2)$ is
$$
\omega\big(N^{a,m}\big)=\Big\{ \big( \phi_0^{a,m}+\phi_1^{a,m}d+o(d), 0, c_{10}^{a,m}+c_{11}^{a,m}d+o(d), c_{20}^{a,m}+c_{21}^{a,m}d+o(d), J_1(d), J_2(d), a\big)\Big\},
$$
where $J_k(d)=J_{k0}+J_{k1}d+o(d),~~k=1,2,$ can be arbitrary and $\phi_0= \phi_0^{a,m}$ is the unique solution of
$
z_1c_{10}^{a}e^{z_1(\phi_0^{a} -\phi_0)} + z_2c_{20}^{a}e^{z_2(\phi_0^{a} -\phi_0 )} + Q =0,
$
 and
 
\begin{equation}\label{prop2:47}
\begin{aligned}
&c_{10}^{a,m}=c_{10}^{a} e^{z_1\big(\phi_0^{a} -\phi_0^{a,m} \big)}, \quad c_{20}^{a,m}=c_{20}^{a} e^{z_2\big(\phi_0^{a} - \phi_0^{a,m} \big)}, \\
&u_0^{a,m}= sgn(z_1c_{10}^{a}+z_2c_{20}^{a})\sqrt{2\Big( \sum_{j=1}^2 \big( c_{j0}^{a}- c_{j0}^{a}e^{z_j(\phi_0^{a}- \phi_0^{a,m} )}\big)- Q(\phi_0^{a} - \phi_0^{a,m} )\Big)},\\
&\phi_1^{a,m} = \phi_1^{a} + \dfrac{z_1 c_{10}^{a,m}}{\sigma^{a,m}}\Bigg( \frac{c_{11}^{a}}{c_{10}^{a}} - \frac{c_{21}^{a}}{c_{20}^{a}} +(1-\lambda )\Big(c_{10}^{a}+c_{20}^{a} -c_{10}^{a,m} -c_{20}^{a,m} \Big)\Bigg)\\
&\hspace*{.5in}- \dfrac{1}{\sigma^{a,m}}\Big(\frac{c_{21}^{a}}{c_{20}^{a}}+ 2\lambda c_{20}^{a}+ (\lambda +1) c_{10}^{a} - 2\lambda c_{20}^{a,m} - (\lambda+ 1) c_{10}^{a,m} \Big)Q,\\
&z_1c_{11}^{a,m}=-z_2c_{21}^{a,m}= \dfrac{z_1c_{10}^{a,m}}{\sigma^{a,m}}\big(z_1c_{10}^{a,m}+ Q\big) \Bigg(z_1 \dfrac{c_{21}^{a}}{c_{20}^{a}} -z_2 \dfrac{c_{11}^{a}}{c_{10}^{a}} +(z_1 -z_2)w(c_{10}^{a} , c_{20}^{a}) \\
&\hspace*{1.3in}+\dfrac{1}{z_2}\Big(\big[2(\lambda z_1- z_2) + (1-\lambda) z_2\big]Q + 2(\lambda z_1- z_2)(z_1-z_2)c_{10}^{a,m}\Big)\Bigg),\\
&u_1^{a,m}= \dfrac{1 }{u_0^{a,m}}\Big(( c_{10}^{a}+ c_{20}^{a})(c_{10}^{a}+ \lambda  c_{20}^{a}) - ( c_{10}^{a,m}+ c_{20}^{a,m})(c_{10}^{a,m}+ \lambda  c_{20}^{a,m})\\
& \qquad \qquad \qquad +  c_{11}^{a}+c_{21}^{a} - c_{11}^{a,m}- c_{21}^{a,m} + \big(\phi_1^{a,m}- \phi_1^{a}\big) Q \Big),
\end{aligned}
\end{equation}
where $\sigma$ is defined in \eqref{Abbrev}.\\
(ii) The unstable manifold $W^u(\mathscr{Z}_2)$ intersects $B_2$ transversally at points
$$
\big(\phi_0^{b}+\phi_1^{b}d + o(d), u_0^{b,m}+ u_1^{b,m}d+ o(d), c_{10}^{b}+ c_{11}^{b}d + o(d), c_{20}^{b}+c_{21}^{b}d + o(d), J_1(d), J_2(d), b\big),
$$
and the $\alpha-$limit set of $N^{b,m}=M^{b,m}\cap W^u(\mathscr{Z}_2)$ is
$$
\alpha\big(N^{b,m}\big)=\Big\{ \big( \phi_0^{b,m}+\phi_1^{b,m}d+o(d), 0, c_{10}^{b,m}+c_{11}^{b,m}d+o(d), c_{20}^{b,m}+c_{21}^{b,m}d+o(d), J_1(d), J_2(d), b\big)\Big\},
$$
where $J_k(d)=J_{k0}+J_{k1}d+o(d), k=1,2,$ can be arbitrary and $\phi_0= \phi_0^{b,m}$ is the unique solution of
$$
z_1c_{10}^{b}e^{z_1( \phi_0^{b} - \phi_0)} + z_2c_{20}^{b}e^{z_2( \phi_0^{b} - \phi_0)} + Q =0.
$$
Besides,
\begin{equation}\label{prop2:49}
\begin{array}{l}
c_{10}^{b,m}=c_{10}^{b} e^{z_1\big(\phi_0^{b} - \phi_0^{b,m} \big)}, \quad c_{20}^{b,m}=c_{20}^{b} e^{-z_2\big(\phi_0^{b,m} -\phi_0^{b} \big)}, \\
u_0^{b,m}= -sgn(z_1c_{10}^{b}+z_2c_{20}^{b})\sqrt{2\Big(\displaystyle \sum_{j=1}^2 \big( c_{j0}^{b}- c_{j0}^{b}e^{z_j(\phi_0^{b} - \phi_0^{b,m} )}\big)- Q(\phi_0^{b} - \phi_0^{b,m} )\Big)},\\
\phi_1^{b,m} = \phi_1^{b} + \dfrac{z_1 c_{10}^{b,m}}{\sigma^{b,m}}\Bigg( \frac{c_{11}^{b}}{c_{10}^{b}} - \frac{c_{21}^{b}}{c_{20}^{b}} +(1-\lambda )\Big(c_{10}^{b}+c_{20}^{b} -c_{10}^{b,m} -c_{20}^{b,m} \Big)\Bigg)\\
\hspace*{.5in}- \dfrac{1}{\sigma^{b,m}}\Big(\frac{c_{21}^{b}}{c_{20}^{b}}+ 2\lambda c_{20}^{b}+ (\lambda +1) c_{10}^{b} - 2\lambda c_{20}^{b,m} - (\lambda+ 1) c_{10}^{b,m} \Big)Q,\\
z_1c_{11}^{b,m}=-z_2c_{21}^{b,m} =\dfrac{ z_1c_{10}^{b,m}}{\sigma^{b,m}}\big(z_1c_{10}^{b,m}+ Q\big) \Bigg(z_1 \dfrac{c_{21}^{b}}{c_{20}^{b}} -z_2 \dfrac{c_{11}^{b}}{c_{10}^{b}}  +(z_1 -z_2)w(c_{10}^{b} , c_{20}^{b})\\
\hspace*{1.3in} +\dfrac{1}{z_2}\Big(\big[2(\lambda z_1- z_2) + (1-\lambda) z_2\big]Q + 2(\lambda z_1- z_2)(z_1-z_2)c_{10}^{b,m} \Big) \Bigg),\\
u_1^{b,m}= \dfrac{1}{u_0^{b,m}}\Big(( c_{10}^{b}+ c_{20}^{b})(c_{10}^{b}+ \lambda  c_{20}^{b}) - ( c_{10}^{b,m}+ c_{20}^{b,m})(c_{10}^{b,m}+ \lambda  c_{20}^{b,m})\\
 \qquad \qquad \qquad 
 + \big( c_{11}^{b}+c_{21}^{b} - c_{11}^{b,m}- c_{21}^{b,m}\big)+ \big(\phi_1^{b,m}- \phi_1^{b}\big) Q\Big).
\end{array}
\end{equation}
%%%%%%%%%%%%%%%%%%%%%%%%%%%%%%%%%%%%%%%%%%

\begin{proof}
It is almost identical to the proof of Prop 2.2.
\end{proof} 

\end{prop}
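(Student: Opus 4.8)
The plan is to repeat, almost word for word, the argument behind Proposition~\ref{propHS0a}, the single change being that the constant permanent charge $Q$ is now nonzero; this affects only the relevant slow manifold ($\mathscr{Z}_2$ instead of $\mathscr{Z}_1$) and the first integrals, which by Lemma~\ref{le:0th12} carry the extra terms $-Q\phi_0$ in $\mathcal{H}_{50,m}$ and $+\phi_1 Q$ in $\mathcal{H}_{31,m}$. First I would settle the transversality claims and the $d$-expansions: for $d=0$ the transversal intersections $W^s(\mathscr{Z}_2)\cap B_1$ and $W^u(\mathscr{Z}_2)\cap B_2$ together with the limit sets are exactly those of \cite{EL07,Liu05,Liu09}; since $\mathscr{Z}_2$ is normally hyperbolic for every $Q$ (Lemma~\ref{0nh}) and the fast field \eqref{fzero02} as well as $B_1,B_2$ depend regularly on $d$, transversality is an open condition and survives for small $d>0$, so $W^s(\mathscr{Z}_2)\cap B_1$, $W^u(\mathscr{Z}_2)\cap B_2$ and their limit sets all admit expansions of the form \eqref{dsys}, and it remains only to pin down the coefficients.

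For the zeroth order I would work with the first integrals of \eqref{0th12}. With $\Gamma^{a,m}$ based at $(\phi_0^a,u_0^{a,m},c_{10}^a,c_{20}^a,\cdot,\cdot,a)\in B_1$ for $\xi=0$ and limiting to $(\phi_0^{a,m},0,c_{10}^{a,m},c_{20}^{a,m},\cdot,\cdot,a)\in\mathscr{Z}_2$ as $\xi\to+\infty$, equating $\mathcal{H}_{10,m}$ and $\mathcal{H}_{20,m}$ at the two ends gives $c_{j0}^{a,m}=c_{j0}^a e^{z_j(\phi_0^a-\phi_0^{a,m})}$, and inserting this into the defining relation $z_1c_1+z_2c_2+Q=0$ of $\mathscr{Z}_2$ yields the transcendental equation for $\phi_0^{a,m}$. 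That equation has a unique root: its left-hand side is a strictly decreasing function of $\phi_0$ (derivative $-z_1^2 c_{10}^a e^{z_1(\phi_0^a-\phi_0)}-z_2^2 c_{20}^a e^{z_2(\phi_0^a-\phi_0)}<0$) running from $+\infty$ to $-\infty$ because $z_1>0>z_2$ and $c_{j0}^a>0$. The same computation shows $\sigma^{a,m}=z_1^2 c_{10}^{a,m}+z_2^2 c_{20}^{a,m}>0$, so $\phi_0^{a,m}$ depends smoothly on $d$. Equating $\mathcal{H}_{50,m}$ at the two ends gives $\tfrac12 (u_0^{a,m})^2=\sum_j(c_{j0}^a-c_{j0}^{a,m})-Q(\phi_0^a-\phi_0^{a,m})$, and the sign of $u_0^{a,m}$ is fixed, exactly as in Proposition~\ref{propHS0a}, by the direction of the flow along the stable fiber of $\mathscr{Z}_2$: a layer followed forward in time (part~(i)) gives $\mathrm{sgn}(z_1c_{10}^a+z_2c_{20}^a)$, one followed backward (part~(ii)) the opposite sign.

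For the first order, the constancy of $Q$ forces $z_1 c_{11}^{a,m}+z_2 c_{21}^{a,m}=0$ at the limit point (linearize $z_1c_1+z_2c_2+Q=0$), so using $z_2 c_{20}^{a,m}=-z_1c_{10}^{a,m}-Q$ one gets $c_{21}^{a,m}=-\tfrac{z_1}{z_2}c_{11}^{a,m}$ and $c_{21}^{a,m}/c_{20}^{a,m}=z_1 c_{11}^{a,m}/(z_1c_{10}^{a,m}+Q)$. Equating $\mathcal{H}_{11,m}$ and $\mathcal{H}_{21,m}$ at $\xi=0$ and $\xi=+\infty$ then yields a linear system for $(\phi_1^{a,m},c_{11}^{a,m})$ whose determinant is a nonzero multiple of $\sigma^{a,m}$; solving it and back-substituting produces the formulas for $\phi_1^{a,m}$, $c_{11}^{a,m}$, $c_{21}^{a,m}$ in \eqref{prop2:47}. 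Finally $\mathcal{H}_{31,m}(0)=\mathcal{H}_{31,m}(\infty)$, after using $(c_{10}+c_{20})(c_{10}+\lambda c_{20})=c_{10}^2+(\lambda+1)c_{10}c_{20}+\lambda c_{20}^2$ and dividing by $u_0^{a,m}\ne 0$, isolates $u_1^{a,m}$. Part~(ii) is the identical computation with $B_2$, the $\alpha$-limit on $\mathscr{Z}_2$ and backward time in place of $B_1$, the $\omega$-limit and forward time, and yields \eqref{prop2:49}.

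The one genuinely delicate point is the bookkeeping: one must verify that the $2\times2$ system for $(\phi_1^{a,m},c_{11}^{a,m})$ really has $\sigma^{a,m}$ (up to a nonvanishing factor) as its determinant — this is the precise moment where one checks that the permanent charge does not destroy the nondegeneracy needed to solve for the first-order corrections — and then push the long but purely mechanical algebra through to the displayed closed forms. Apart from that, the proof is a transcription of the proof of Proposition~\ref{propHS0a}.
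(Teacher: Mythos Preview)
Your proposal is correct and follows exactly the paper's approach: the paper's own proof of this proposition is the single line ``It is almost identical to the proof of Proposition~\ref{propHS0a},'' and what you have written is precisely that transcription, using the first integrals of Lemma~\ref{le:0th12} (with the extra $-Q\phi_0$ and $+Q\phi_1$ terms) in place of those of Lemma~\ref{lem0th1}. Your identification of $\sigma^{a,m}$ as the (nonvanishing) determinant of the $2\times2$ system for $(\phi_1^{a,m},c_{11}^{a,m})$ is the right nondegeneracy check, and the rest is indeed mechanical.
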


%%%%%%%%%%%%%%%%%%%%%%%%%%%%%%%%%%%%%%%%%%%%%
\noindent {\bf \underline{Limiting slow dynamics and regular layers on $[a,b]$.}}
%%%%%%%%%%%%%%%%%%%%%%%%%%%%%%%%%%%%%%%%%%%%%
Next we construct the regular layer on $\mathscr{Z}_2$ that connects $\omega(N^{a,m})$ and $\alpha(N^{b,m})$. Note that, for $\varepsilon =0$, system \eqref{Sloweps1} with $Q=Q_2 \neq 0$ loses most information. As before, we make a rescaling $u=\varepsilon p$ and $-z_2c_2=z_1c_1+ Q+ \varepsilon q$ in system \eqref{Sloweps1}. In terms of the new variables, system \eqref{Sloweps1} with $Q\neq 0$ becomes
\begin{equation}\label{2slow:1}
\begin{array}{l}
\overset{.}{\phi} = p, \quad \varepsilon\overset{.}{p} = q- \varepsilon \dfrac{h_{\tau}(\tau)}{h(\tau)}p, \quad \varepsilon \overset{.}{q}= (z_1f_1 +z_2f_2) p + \dfrac{z_1g_1+z_2g_2}{h(\tau)},  \\
 \overset{.}{c}_1 = -f_1p - \dfrac{g_1}{h(\tau)}, \quad \overset{.}{J_1} = \overset{.}{J_2} = 0, \quad \overset{.}{\tau}=1,
\end{array}
\end{equation}
where, for $k=1,2,$ $
f_k=f_k\big( c_1, -\frac{z_1c_1+Q+\varepsilon q}{z_2};d, \lambda d \big) $, and$ g_k=g_k\big( c_1, -\frac{z_1c_1+Q+\varepsilon q}{z_2}, J_1, J_2;d, \lambda d \big),
$
were defined in \eqref{fg2}.
Note that the equations in \eqref{2slow:1} are consistent with those in the paper \cite{EL07}. 
The system \eqref{2slow:1} is also a singular perturbation problem and its limiting slow system is
\begin{equation}\label{51}
\begin{array}{l}
q = 0, \quad p = - \dfrac{\sum_{j=1}^2 z_jg_j\big(c_1,-\frac{z_1c_1+Q}{z_2}, J_1, J_2; d, \lambda d \big)}{h(\tau) \Big((z_1-z_2)z_1c_1 -z_2Q+2\big((1-\lambda)z_1c_1-\lambda Q\big)Qd \Big) +o(d)}, \\
\overset{.}{\phi}=p,\quad 
\overset{.}c_1= -f_1\big(c_1,-\frac{z_1c_1+Q}{z_2}; d, \lambda d \big)p - \dfrac{1}{h(\tau)} g_1\big(c_1,-\frac{z_1c_1+Q}{z_2}, J_1, J_2; d, \lambda d \big),\\
\overset{.}{J_1} = \overset{.}{J_2} = 0, \quad \overset{.}{\tau}=1.
\end{array}
\end{equation}
In the above, we have applied \eqref{fg2} (with $\varepsilon =0$) for the expression in $p$ to obtain,
$$
z_1f_1\big(c_1, -\frac{z_1c_1+Q}{z_2}, d \big) + z_2 f_2\big(c_1, -\frac{z_1c_1+Q}{z_2}, d \big)= (z_1-z_2)z_1c_1 -z_2Q+2\big((1-\lambda)z_1c_1-\lambda Q\big)Qd +o(d).
$$
From system \eqref{51}, the slow manifold is
$$\displaystyle
\mathcal{S}= \Big\{ q=0,~p = - \dfrac{\sum_{j=1}^2 z_jg_j\big(c_1,-\frac{z_1c_1+Q}{z_2}, J_1, J_2; d, \lambda d \big)}{h(\tau) \Big((z_1-z_2)z_1c_1 -z_2Q+2\big((1-\lambda)z_1c_1-\lambda Q\big)Qd \Big) +o(d)} \Big\}.
$$
%%%%%%%%%%%%%%%%%%%%%%%%%%%%%%%%%%%%%%%%%%%%%
Therefore, the limiting slow system on $\mathcal{S}$ is
\begin{equation}\label{52}
\begin{array}{l}
\overset{.}{\phi}=p,\quad
\overset{.}c_1= -f_1\big(c_1,-\frac{z_1c_1+Q}{z_2}; d, \lambda d \big)p - \dfrac{1}{h(\tau)} g_1\big(c_1,-\frac{z_1c_1+Q}{z_2}, J_1, J_2; d, \lambda d \big),\\
\overset{.}{J_1} = \overset{.}{J_2} = 0, \quad \overset{.}{\tau}=1.
\end{array}
\end{equation}
where $p$ is defined in \eqref{51}.
As for the layer problem, we look for solutions of \eqref{52} of the form \eqref{1slow:dsys} to connect $\omega\big(N^{a,m} \big)$ and $\alpha\big(N^{b,m} \big)$ given in Proposition \eqref{propHSab}. In particular, for $j=0,1,$ we have,
$
( \phi_j(a),c_{1j}(a)) = (\phi_j^{a,m}, c_{1j}^{a,m})$, and $( \phi_j(b),c_{1j}(b)) = (\phi_j^{b,m}, c_{1j}^{b,m}).
$
Note, from  \eqref{fg2}  over the slow manifold $\mathscr{Z}_2$, that
$$
\begin{aligned}
&f_1\big(c_1,-\frac{z_1c_1+Q}{z_2}; d, \lambda d \big)=  z_1c_{10} + \Big(z_1 c_{11} + (\lambda - 1) z_1c_{10}^2 + (1+\lambda) c_{10}Q\Big) d + o(d) ,\\
&g_1\big(c_1,-\frac{z_1c_1+Q}{z_2}; d, \lambda d \big)=  J_{10} + \Big(J_{11} - c_{10}( J_{10} + J_{20} ) - c_{10} ( J_{10} + \lambda J_{20} ) \Big) d + o(d) ,\\
&z_1g_1+z_2g_2 = I_0 + \Big(I_1
 + \big((\lambda - 1)z_1c_{10} + \lambda Q \big)T_0 +\Lambda_0 Q \Big  ) d +o(d).
\end{aligned}
$$
%%%%%%%%%%%%%%%%%%%%%%%%%%%%%%%%%%%%%%
It follows from \eqref{52}, along with above terms, that the zeroth and first order terms are, 
\begin{equation}\label{Mzeroth}
\begin{array}{l}
\overset{.}{\phi}_0= - \dfrac{I_0}{h(\tau)\sigma(\tau)}, \quad \overset{.}{c}_{10}=  \dfrac{z_1z_2c_{10}T_0+ z_2J_{10}Q}{h(\tau)\sigma(\tau)},\\
\overset{.}{J}_{10} = \overset{.}{J}_{20} = 0, \quad \overset{.}{\tau} =1,
\end{array}
\end{equation}
and
\begin{equation}\label{Mfirst}
\begin{array}{l}
\overset{.}{\phi}_1=  \dfrac{I_0}{ h(\tau) \sigma^2(\tau)}\Big( (z_1 - z_2) z_1 c_{11} + 2(1-\lambda)z_1c_{10}Q - 2\lambda Q^2\Big)\\
\hspace{.4in} + 
\dfrac{1}{h(\tau) \sigma(\tau)}\Big(\big((1-\lambda)z_1c_{10} - \lambda Q\big)T_0 - I_1- \Lambda_0 Q\Big),\vspace*{.075in} \\ 
\overset{.}{c}_{11}= \dfrac{2(\lambda z_1 - z_2) z_1c_{10}^2T_0+ z_1z_2c_{10}T_1 }{h(\tau) \sigma(\tau)} + \dfrac{  2\lambda z_1c_{10} T_0 + 2(z_1-z_2)c_{10} J_{10} + z_2J_{11} }{h(\tau) \sigma(\tau)}Q\\
\hspace{.4in} - \dfrac{I_0Q }{h(\tau)\sigma^2(\tau)}\Big( z_1 z_2c_{11} + z_1c_{10}\big( 2(1-\lambda)z_1c_{10} - 2\lambda Q \big) \Big),\vspace*{.075in} \\ 
\overset{.}{J}_{11} = \overset{.}{J}_{21} = 0.
\end{array}
\end{equation}

%%%%%%%%%%%%%%%%%%%%%%%%%%%%%%%%%%%%
\begin{rem}
Note that to find $\dot{c}_1$ in \eqref{Mfirst}, from \eqref{52}  we have,
$$
\begin{aligned}
\dot{c}_{11}(\tau)=&\dfrac{z_1c_{10}}{h(\tau) \sigma(\tau) } \Big( {I}_1 - \big((1-\lambda)z_1c_{10} - \lambda Q\big){T}_0 + \Lambda_0 Q\Big)\\
&+\dfrac{{I}_0}{h(\tau)\sigma(\tau)}\Big(z_1 c_{11} - (1-\lambda) z_1c_{10}^2 + (1+\lambda) c_{10}Q\Big)\\
&-\dfrac{1}{h(\tau) \sigma(\tau)}\Big(z_1(z_1 -z_2)c_{10} - z_2 Q\Big)\Big(  J_{11} -c_{10}({T}_0 + \Lambda_0 ) \Big)\\
&-\dfrac{z_1c_{10}{I}_0}{h(\tau) \sigma^2(\tau)}\Big(z_1 c_{11}(z_1 - z_2)+ 2(1-\lambda)z_1c_{10}Q - 2\lambda Q^2\Big).
\end{aligned}
$$
\qed
\end{rem}

%%%%%%%%%%%%%%%%%%%%%%%%%%%%%%%%%%%%%%%%%%%%%
\begin{lem}\label{LemM0th}
There is a unique solution $(\phi_0(y), c_{10}(y), J_{10}, J_{20}, \tau(y)\big)$ of \eqref{Mzeroth} such that 
$$
\big( \phi_0(0), c_{10}(0), \tau(0) \big) = \big( \phi_0^{a,m}, c_{10}^{a,m}, a \big) ~~\textrm{and}~~\big( \phi_0(y^*), c_{10}(y^*), \tau(y^*) \big) = \big( \phi_0^{b,m}, c_{10}^{b,m}, b \big) 
$$
for some $y^* >0$, where $\phi_0^{a,m},~\phi_0^{b,m},~c_{10}^{a,m}$, and $c_{10}^{b,m}$ are given in Proposition \eqref{propHSab}. This solution is given by 
$$
\begin{array}{l}
\phi_0(y)= \phi_0^{a,m} - I_0y, \quad
c_{10}(y)= e^{z_1z_2T_0y}c_{10}^{a,m}+\dfrac{J_{10}Q}{z_1T_0}\Big( e^{z_1z_2T_0y}-1 \Big), \vspace*{.075in} \\ 
T_0 = - \dfrac{(z_1 - z_2) \big(c_{10}^{a,m} - c_{10}^{b,m} \big) +  z_2\big(\phi_0^{a,m} - \phi_{0}^{b,m} \big)Q}{z_2\big(H(b)- H(a)\big)}.
\end{array}
$$
%%%%%%%%%%%%%%%%%%%%%%%%%%%%%%%%%%%%%%%%%%%%
\begin{proof} The complete proof is on \cite{EL07}. Here, we express the proof briefly:
Since $h(\tau)> 0$ and $-z_2c_{20}=z_1c_{10}+Q > 0$, then the system \eqref{Mzeroth} has the same phase portrait as that of the following system obtained by multiplying $h(\tau)\sigma (\tau)$ on the right-hand side of the system \eqref{Mzeroth},
\begin{equation}\label{2sB:1}
\begin{array}{l}
\dfrac{d}{dy}\phi_0= - I_0, \quad 
\dfrac{d}{dy}c_{10}= z_1z_2T_0c_{10}+ z_2QJ_{10}, \vspace*{.075in} \\ 
\dfrac{d}{dy}J_{10} = \dfrac{d}{dy} J_{20}=0, \quad \dfrac{d}{dy} \tau = h(\tau)\sigma.
\end{array}
\end{equation}
The expressions for $\phi_0(y)$ and $c_{10}(y)$ directly obtain from above with the initial values $(\phi_0^{a,m}, c_{10}^{a,m},J_{k0}, a)$. It also follows from last equation that,
$$
\begin{array}{l}\displaystyle
\int_{a}^{\tau} \dfrac{1}{h(s)}ds =\dfrac{(z_1 -z_2)c_{10}^{a,m}}{z_2T_0}\Big(e^{z_1z_2T_0y}-1 \Big)+\dfrac{(z_1 -z_2)J_{10}Q}{T_0}\Big(\dfrac{ e^{z_1z_2T_0y}-1}{z_1z_2T_0}-y \Big)- z_2Qy.
\end{array}
$$
Assume $\tau(y^*) = b$ for some $y^* >0$, then, $\phi_0(y^*) = \phi_0^{b,m}$ and $c_{10}(y^*) = c_{10}^{b,m}$, and the proof is complete.
\end{proof}
\end{lem}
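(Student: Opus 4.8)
The plan is to strip the positive factor $h(\tau)\sigma(\tau)$ out of the denominators in \eqref{Mzeroth} by a time rescaling, integrate the resulting constant-coefficient system in closed form, and then extract $T_0$ from the two terminal conditions together with the quadrature for $\tau$. I would follow the strategy of \cite{EL07}, to which I would defer for the detailed existence/uniqueness bookkeeping.

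First, the sign observation that legitimizes the rescaling: on $\mathscr{Z}_2$ we have $Q=-z_1c_{10}-z_2c_{20}$, so $\sigma(\tau)=(z_1-z_2)z_1c_{10}-z_2Q=z_1^2c_{10}+z_2^2c_{20}>0$ because $c_{10},c_{20}>0$, and $h>0$ by hypothesis. Multiplying the right-hand side of \eqref{Mzeroth} by $h(\tau)\sigma(\tau)$ therefore only reparametrizes orbits, so the sought regular layer is precisely an orbit of the autonomous system \eqref{2sB:1}; it suffices to solve \eqref{2sB:1} from $(\phi_0^{a,m},c_{10}^{a,m},J_{10},J_{20},a)$ at $y=0$ and to show that $J_{10},J_{20}$ (equivalently $I_0,T_0$) together with an exit time $y^*>0$ are uniquely determined by the requirement that the orbit meet $(\phi_0^{b,m},c_{10}^{b,m},\cdot,\cdot,b)$.

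The components of \eqref{2sB:1} integrate directly: $J_{10},J_{20}$ are constant, $\phi_0(y)=\phi_0^{a,m}-I_0y$ with $I_0=z_1J_{10}+z_2J_{20}$, and the scalar linear equation $\tfrac{d}{dy}c_{10}=z_1z_2T_0c_{10}+z_2QJ_{10}$ (with $T_0=J_{10}+J_{20}$) gives $c_{10}(y)=e^{z_1z_2T_0y}c_{10}^{a,m}+\tfrac{J_{10}Q}{z_1T_0}\big(e^{z_1z_2T_0y}-1\big)$. For $\tau$, I would separate variables in $\tfrac{d\tau}{dy}=h(\tau)\big((z_1-z_2)z_1c_{10}(y)-z_2Q\big)$ and integrate, so that $\int_a^{\tau(y)}h(s)^{-1}\,ds$ equals the explicit primitive in $y$ of $(z_1-z_2)z_1c_{10}(y)-z_2Q$; substituting $c_{10}(y)$ yields the displayed formula for $\int_a^\tau h^{-1}$.

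It then remains to impose $\phi_0(y^*)=\phi_0^{b,m}$, $c_{10}(y^*)=c_{10}^{b,m}$, $\tau(y^*)=b$. The first gives $I_0y^*=\phi_0^{a,m}-\phi_0^{b,m}$. In the quadrature identity the second collapses the exponential part, since $\big(e^{z_1z_2T_0y^*}-1\big)\big(c_{10}^{a,m}+\tfrac{J_{10}Q}{z_1T_0}\big)=c_{10}^{b,m}-c_{10}^{a,m}$, and the linear-in-$y^*$ remainder regroups through the identity $(z_1-z_2)J_{10}+z_2T_0=I_0$ into $\tfrac{Q}{T_0}\big(\phi_0^{b,m}-\phi_0^{a,m}\big)$; using $\int_a^b h^{-1}=H(b)-H(a)$ and multiplying through by $T_0$ turns the identity into a single linear equation whose solution is the stated $T_0$. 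Existence and uniqueness of the triple $(J_{10},J_{20},y^*)$, hence of the solution of \eqref{Mzeroth}, with $y^*>0$ automatic (as $\tau$ is strictly increasing in $y$ and $b>a$), then follow as in \cite{EL07}, and all formulas extend continuously across $T_0=0$. The only delicate point I foresee is exactly this last bookkeeping: verifying that every $J_{10}$- and $y^*$-dependence genuinely cancels so the quadrature reduces to the clean linear relation for $T_0$. The cancellation turns on $(z_1-z_2)J_{10}+z_2T_0=I_0$, and sign slips there (or in the $z_2$'s defining $\sigma$) are the easy trap.
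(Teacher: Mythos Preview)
Your proposal is correct and follows essentially the same route as the paper: rescale time by the positive factor $h(\tau)\sigma(\tau)$ to obtain the constant-coefficient system \eqref{2sB:1}, integrate it in closed form, and impose the terminal conditions at $y^*$. You actually supply more detail than the paper does on the final step---the paper simply states the integrated $\tau$-equation and asserts the conclusion, deferring to \cite{EL07}, whereas you explicitly show how the identity $(z_1-z_2)J_{10}+z_2T_0=I_0$ collapses the quadrature to the linear relation for $T_0$; that computation is correct.
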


%%%%%%%%%%%%%%%%%%%%%%%%%%%%%%%%%%%

\noindent To find the first order terms in $d$, i.e., ion size, and for convenience we first introduce some integrals integrals. First note that from \eqref{Abbrev} and \eqref{2sB:1} one has,

\begin{equation}\label{Int-Z}
\begin{aligned}
\int_0^y Z_0(s)ds =& z_1z_2{T}_0 \int_0^y \dfrac{{I}_0 Q}{(z_1 -z_2)\big(z_1c_{10}^{a,m} {T}_0 + J_{10}Q \big)e^{z_1z_2{T}_0s} -{I}_0 Q}ds\\
 = & \ln \Big(\Big[z_1(z_1 -z_2)c_{10}^{a,m} {T}_0 + (z_1-z_2)J_{10}Q \Big]e^{z_1z_2{T}_0 y} - {I}_0 Q \Big)\\
&\quad - \ln \big(z_1(z_1 -z_2)c_{10}^{a,m} {T}_0 -z_2 {T}_0 Q\big) - z_1z_2 {T}_0 y \\
=& \ln \sigma(y) - \ln \sigma^{a,m} - z_1z_2{T}_0 y .
\end{aligned}
\end{equation}
%%%%%%%%%%%%%%%%%%%%%%%%%%%%%%%%%%%%

Then, for $S_z:= \int_0^y \dfrac{z_1{I}_0 c_{10}(s)}{\sigma(s)}ds$, it follows from above that,
\begin{equation}\label{Sz}
\begin{aligned}
S_z =& \dfrac{{I}_0}{(z_1 -z_2)}\int_0^y \big(1+ \dfrac{z_2Q}{\sigma(s)} \big) ds= J_{10} y +\dfrac{\ln \sigma(y) - \ln \sigma^{a,m}}{z_1(z_1-z_2)}.
\end{aligned}
\end{equation}

Moreover, we introduce $S_j = S_j(y,Q)$, for $j=1,\cdots,5$, as follows,
\begin{equation}\label{S-integ}
\begin{aligned}
&S_1=\int_0^y c_{10}(s)ds = \dfrac{1}{z_1z_2{T}_0 } \Big( c_{10}(y) - c_{10}^{a,m} -z_2J_{10}Qy \Big),\\
&S_2=\int_0^y c_{10}^2(s) ds = \dfrac{1}{2z_1z_2{T}_0} \Big(c_{10}^2(y) - \big(c_{10}^{a,m}\big)^2\Big) -\dfrac{J_{10}}{z_1{T}_0}S_1Q,\\
&S_3= \int_0^y c_{10}(s)e^{-z_1z_2{T}_0 s}ds = c_{10}^{a,m}y+ \dfrac{J_{10}Q}{z_1{T}_0}\Big(y+ \dfrac{e^{-z_1z_2{T}_0 y} - 1}{z_1z_2{T}_0 }\Big),\\
& S_4= \int_0^y c_{10}^2(s)e^{-z_1z_2 {T}_0 s}ds = \dfrac{1}{z_1z_2{T}_0}\Big(c_{10}^2(y)e^{-z_1z_2{T}_0 y} -(c_{10}^{a,m})^2 -2z_2J_{10}QS_3 \Big),\\
& S_5=\int_0^y \dfrac{z_1{I}_0 c_{10}^2(s)}{\sigma(s)}ds = \dfrac{{I}_0 \big(c_{10}(y) - c_{10}^{a,m}\big)}{z_1z_2(z_1 -z_2){T}_0} +  \dfrac{z_2\big( \ln \sigma(y) - \ln  \sigma^{a,m} \big)}{z_1^2(z_1 -z_2)^2}Q - \dfrac{J_{10}^2yQ}{z_1{T}_0}.
\end{aligned}
\end{equation}

\begin{rem}
To obtain the expressions for $S_1$ to $S_3$ in \eqref{S-integ} in above, one should simply use \eqref{2sB:1}.
To find $S_4$, one may apply integration by parts to obtain the following,
$$
\begin{aligned}
S_4 =& \dfrac{(c_{10}^{a,m})^2 - c_{10}^2(y)e^{-z_1z_2{T}_0y} }{z_1z_2{T}_0} + \dfrac{2}{z_1z_2{T}_0} \int_0^y c_{10}(s)\Big(z_1z_2{T}_0c_{10}(s) + z_2J_{10}Q\Big)e^{-z_1z_2{T}_0 s}ds\\
& = \dfrac{(c_{10}^{a,m})^2 - c_{10}^2(y)e^{-z_1z_2{T}_0 y} }{z_1z_2{T}_0} + 2 \int_0^y c_{10}^2(s)e^{-z_1z_2{T}_0 s}ds + \dfrac{2J_{10}Q}{z_1{T}_0} S_3.
\end{aligned}
$$
%%%%%%%%%%%%%%%%%%%%%%%%%%%%%%%%%%%%
To find $S_5$, note that
$$
\begin{aligned}
S_5 =&  \dfrac{1}{z_2{T}_0}\int_0^y\dfrac{{I}_0 c_{10}(s)\big(\dot{c}_{10} - z_2J_{10}Q\big)}{\sigma(s)}ds= \dfrac{{I}_0}{z_2{T}_0} \int_0^y \dfrac{c_{10}(s)\dot{c}_{10}(s) }{\sigma(s)}ds  -\dfrac{J_{10}Q}{z_1{T}_0} \int_0^y \dfrac{z_1{I}_0c_{10}(s)}{\sigma(s)}ds.
\end{aligned}
$$
Substitute the second integral by  $S_z(y)$ in \eqref{Sz} and simplify to obtain $S_5$.
\qed
\end{rem}

%%%%%%%%%%%%%%%%%%%%%%%%%%%%%%%%%%%%%%%%

\medskip

\begin{lem}{\label{lem:2slow1st}} 
There is a unique solution $\big(\phi_1(y), c_{11}(y), J_{11}, J_{21}, \tau(y) \big)$ of \eqref{Mfirst} such that
$$
\big(\phi_1(0), c_{11}(0), \tau(0) \big) = (\phi_1^{a,m}, c_{11}^{a,m}, a)~~ \textrm{~and~}~~\big(\phi_1(y^*), c_{11}(y^*), \tau(y^*) \big) = (\phi_1^{b,m}, c_{11}^{b,m}, b),
$$
for the same $y^* >0$ in previous lemma, and where $\phi_1^{a,m}, \phi_1^{b,m}, c_{11}^{a,m}$ and $c_{11}^{b,m}$ are given in Proposition \eqref{propHSab}. The solution is given by

\[
\begin{aligned}
\phi_1(y)=& \phi_1^{a,m}+\dfrac{\lambda-1}{z_2^2 {T}_0} \Big((z_1-z_2)J_{10}+{I}_0\Big)(c_{10}(y) - c_{10}^{a,m})  - {I}_1 y \\
& + \dfrac{(z_1-z_2)I_0}{z_2T_0 \sigma(y)}\Bigg \{ \dfrac{z_1T_0\big(c_{10}(y) - c_{10}^{a,m} \big) c_{11}^{a,m}}{\big(z_1{T}_0 c_{10}^{a,m}+J_{10}Q \big)} - \dfrac{(\lambda z_1 - z_2)}{z_2}\Big(c_{10}^2(y) - \big(c_{10}^{a,m}\big)^2\Big)\\
 &\qquad \qquad \qquad - z_1z_2{T}_1 \Big( S_1-e^{z_1z_2{T}_0 y} S_3\Big) +2z_1(z_1-z_2) T_0 e^{z_1z_2{T}_0 y} S_4\\
& \qquad \qquad \qquad - 2z_1 \Lambda_0  Q  S_1 -z_2 J_{11} yQ + \dfrac{J_{11}  \big(c_{10}(y)- c_{10}^{a,m}\big)}{\big(z_1{T}_0 c_{10}^{a,m} +J_{10}Q\big)} Q\\
&    \qquad \qquad \qquad +2(z_1-z_2) J_{10} e^{z_1z_2 {T}_0 y}Q S_3 \Bigg \} +\dfrac{2(z_1-z_2)\Lambda_0}{z_2{T}_0}J_{10} yQ,
\end{aligned}
\]
%%%%%%%%%%%%%%%%%%%%%%%%%%%%%%%%%%%%%%%%%%%%
\[
\begin{aligned}
c_{11}(y)=&~ c_{11}^{a,m}+z_1z_2{T}_1 S_1+2z_1(\lambda z_1 - z_2){T}_0 S_2 +z_2J_{11}Qy \\
& + \Big( 2\lambda z_1{T}_0 + 2(z_1-z_2)J_{10}\Big)S_1 Q \\
& + \dfrac{I_0}{T_0 \sigma(y)}\Bigg \{ \dfrac{z_1T_0\big(c_{10}^{a,m}- c_{10}(y) \big) c_{11}^{a,m}Q}{\big(z_1{T}_0 c_{10}^{a,m}+J_{10}Q \big)} +2z_1(\lambda z_1 - z_2)T_0 QS_2\\
&\qquad \qquad \quad +z_1z_2{T}_1 Q\Big( S_1-e^{z_1z_2{T}_0 y} S_3\Big) - 2z_1(z_1-z_2)T_0 e^{z_1z_2{T}_0 y}Q S_4\\
& \qquad \qquad \quad + \Big( 2\lambda z_1{T}_0 + 2(z_1-z_2)J_{10}\Big) Q^2  S_1  +z_2J_{11} yQ^2\\
&\qquad \qquad \quad + \dfrac{J_{11} }{z_1{T}_0} \Big(1 -e^{z_1z_2{T}_0 y} \Big)Q^2 - 2(z_1-z_2)J_{10} e^{z_1z_2{T}_0 y}Q^2S_3\Bigg \},
\end{aligned}
\]
and,
\begin{equation}\label{70J_1}
\begin{aligned}
\dfrac{ T_1 }{T_0}(c_{10}^{b,m} - c_{10}^{a,m})-& 
\dfrac{T_1 }{(z_1-z_2) T_0} z_2y^* I_0Q
 =  c_{11}^{b,m}- c_{11}^{a,m} - 2(\lambda z_1 - z_2) z_1 {T}_0 S_2(y^*)  \\
& -\Big(\big(2\lambda +\dfrac{(1-\lambda) z_2}{z_1-z_2} \big) z_1{T}_0 + 2(z_1-z_2) J_{10}\Big)S_1(y^*)Q\\
&  + \dfrac{z_2(\phi_1^{b,m} - \phi_1^{a,m})Q}{z_1-z_2} + 2(1-\lambda)z_1S_5(y^*)Q   -2\lambda S_z(y^*)Q^2 \\
& - \dfrac{2(1-\lambda)z_2}{z_1-z_2} S_z(y^*)Q^2  - \dfrac{2\lambda z_2 {T}_0 y^*}{(z_1-z_2)} Q^2 + \dfrac{z_2}{z_1-z_2}\Lambda_0 y^*Q^2.
\end{aligned}
\end{equation} 
where $\sigma(y), {T}_0 $, ${T}_1$, ${I}_0$ and ${I}_1$ were defined in \eqref{Abbrev}. Moreover, $S_z$ and $S_{k}(y)$'s, for $k=1,\cdots,5$,  were defined in \eqref{Sz} and \eqref{S-integ} respectively.
\end{lem}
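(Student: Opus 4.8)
The plan is to mimic, one order higher in $d$ and on the middle interval, the strategy behind Lemma~\ref{LemM0th}: since $h(\tau)>0$ and $\sigma(\tau)=(z_1-z_2)z_1c_{10}(\tau)-z_2Q>0$ along $\mathscr{Z}_2$, system \eqref{Mfirst} has the same phase portrait as the system obtained by multiplying every right-hand side by $h(\tau)\sigma(\tau)$ and reparametrizing by $y$ with $\dot\tau=h(\tau)\sigma(\tau)$, exactly the device used to pass from \eqref{Mzeroth} to \eqref{2sB:1}. In the $y$-variable, $J_{11}$ and $J_{21}$ are constants; the equation for $c_{11}$ becomes a scalar linear non-autonomous ODE and the equation for $\phi_1$ a pure quadrature, because neither right-hand side in \eqref{Mfirst} contains $\phi_1$ and the coefficients $c_{10}(y)$, $\phi_0(y)=\phi_0^{a,m}-I_0y$, $T_0$, $I_0$ are already explicit from Lemma~\ref{LemM0th}. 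Existence and uniqueness on $[a,b]$ (equivalently on the $y$-interval $[0,y^*]$ of that lemma) are then automatic from linearity.

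First I would solve for $c_{11}(y)$. Collecting the $c_{11}$-terms on the right-hand side of the $y$-equation and using $\sigma=z_1(z_1-z_2)c_{10}-z_2Q$, the homogeneous coefficient collapses to $-Z_0(y)=-z_1z_2I_0Q/\sigma(y)$, whose primitive is the content of \eqref{Int-Z}; the associated integrating factor therefore interacts cleanly with the $1/\sigma(y)$ already present, which is what produces the prefactor $I_0/(T_0\sigma(y))$ in the answer. Applying variation of parameters and substituting $c_{10}(s)$ from Lemma~\ref{LemM0th}, every integrand that occurs is a linear combination of $1,\ c_{10}(s),\ c_{10}^2(s)$ times $e^{\pm z_1z_2T_0s}$ or $1/\sigma(s)$, i.e. precisely the integrals $S_1,\dots,S_5,S_z$ tabulated in \eqref{Sz}--\eqref{S-integ} (plus elementary exponential integrals); reorganizing yields the displayed $c_{11}(y)$, still carrying the free constants $T_1$ and $J_{11}$. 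With $c_{11}(y)$ in hand, the equation for $\phi_1$ is integrated directly, $\phi_1(y)=\phi_1^{a,m}+\int_0^y(\cdots)\,ds$, and the same catalogue of integrals reduces the right-hand side to the stated closed form.

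It remains to pin down the constants. Imposing the right endpoint conditions at $y=y^*$ — namely $c_{11}(y^*)=c_{11}^{b,m}$ and $\phi_1(y^*)=\phi_1^{b,m}$, with $c_{11}^{a,m},c_{11}^{b,m},\phi_1^{a,m},\phi_1^{b,m}$ taken from Proposition~\ref{propHSab} — gives two scalar equations for $T_1$ and $J_{11}$ (hence for $J_{11},J_{21}$). Rearranging the first, and using $\phi_0^{a,m}-\phi_0^{b,m}=I_0y^*$ together with, where convenient, the expression for $T_0$ from Lemma~\ref{LemM0th}, produces exactly identity \eqref{70J_1}; the $\phi_1$-condition is the remaining relation, recorded implicitly through the formula for $\phi_1(y)$ and used later in assembling the matching system. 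The only genuine difficulty I anticipate is organizational: correctly matching the many forcing terms — each a product of powers of $c_{10}$, exponentials $e^{\pm z_1z_2T_0y}$, factors $1/\sigma(y)$, and the constants $T_0,T_1,I_0,I_1,\Lambda_0,J_{10},J_{11},Q$ — to the pre-computed $S_1,\dots,S_5,S_z$, and verifying the cancellations that collapse everything to the compact form stated. The step most prone to slips is the variation-of-parameters computation for $c_{11}(y)$; everything else runs parallel to Lemmas~\ref{lem22} and \ref{lem:1slow1st}.
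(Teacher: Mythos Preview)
Your plan is correct and matches the paper's proof almost exactly: reparametrize by $y$, recognize the $c_{11}$-equation as linear with homogeneous coefficient $-Z_0(y)$, solve it by variation of parameters using the integrating factor computed in \eqref{Int-Z}, and then quadrature for $\phi_1$. The one device you gloss over, and which the paper uses so that the $\phi_1$-integration really does stay inside the catalogue $S_1,\dots,S_5,S_z$, is to avoid integrating $c_{11}(y)/\sigma(y)$ term-by-term: from the linear ODE one has $Z_0\,c_{11}=Z_1-c_{11}'$, so the $c_{11}$-contribution to $\dot\phi_1$ rewrites as $\tfrac{z_1-z_2}{z_2Q}\bigl(Z_1-c_{11}'\bigr)$ and its integral is simply $\tfrac{z_1-z_2}{z_2Q}\bigl(\int_0^y Z_1\,ds-(c_{11}(y)-c_{11}^{a,m})\bigr)$, after which everything reduces to the tabulated integrals as you claimed.
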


\begin{proof}
We postpone the proof until Appendix Section \ref{sec-Appendix}.
\end{proof}

%%%%%%%%%%%%%%%%%%%%%%%%%%%%%%%%%%%%%%%%

\subsection{A singular orbit on $[b,1]$ where $Q(x)=0$.}
The construction of singular orbits on the interval $[b,1]$ is virtually identical to the structure of singular orbits on $[0,a]$ in section \eqref{sec-sing-0a}. We merely state the results for later use.
\medskip

%%%%%%%%%%%%%%%%%%%%%%%%%%%%%%%%%%%%%%%%%%
\noindent { \bf \underline{Fast Dynamics and Boundary/Internal Layers on $[b,1]$.} } 
%%%%%%%%%%%%%%%%%%%%%%%%%%%%%%%%%%%%%%%%%%
The slow manifold is,
$$
\mathscr{Z}_3 = \Big\{ u=0,~ z_1c_1+ z_2c_2 =0 \Big\}.
$$
The limiting fast system is,
\[
\begin{array}{l}
\phi' = u, \quad u' = -z_1c_1 -z_2c_2, \quad
c_1' = -f_1(c_1, c_2; d_1, d_2)u,\\
c_2' = -f_2(c_1, c_2; d_1, d_2)u, \quad
J_1' = J_2' = 0, \quad \tau'=0.
\end{array}
\]
Recall that we are interested in the solutions of $\Gamma^{b,r}(\xi;d) \subset N^{b,r} = M^{b,r} \cap W^s(\mathscr{Z}_3)$ with $\Gamma^{b,r}(0;d) \in B_{2}$, and $\Gamma^{r}(\xi;d) \subset N^{r} = M^{r} \cap W^u(\mathscr{Z}_3)$ with $\Gamma^{r}(0;d) \in B_{3}$.

%%%%%%%%%%%%%%%%%%%%%%%%%%%%%%%%%

\begin{prop}{\label{propHSb1}}
Assume that $d>0$ is small.\\
(i) The stable manifold $W^s(\mathscr{Z}_3)$ intersects $B_2$ transversally at points
$$
\big(\phi^{b}, u_0^{b,r}+ u_1^{b,r}d+ o(d), c_{10}^{b}+ c_{11}^{b}d, c_{20}^{b}+ c_{21}^{b}d, J_1(d), J_2(d), b\big),
$$
and the $\omega-$limit set of $N^{b,r}=M^{b,r}\cap W^s(\mathscr{Z}_3)$ is
$$
\omega\big(N^{b,r}\big)=\Big\{ \big( \phi_0^{b,r}+\phi_1^{b,r}d+o(d), 0, c_{10}^{b,r}+c_{11}^{b,r}d+o(d), c_{20}^{b,r}+c_{21}^{b,r}d+o(d), J_1(d), J_2(d), b\big)\Big\},
$$
where $J_k(d)=J_{k0}+J_{k1}d+o(d),~~k=1,2,$ can be arbitrary and
\begin{equation}\label{prop3:61}
\begin{array}{l}
\phi_0^{b,r} = \phi_0^{b} - \frac{1}{z_1 -z_2}\ln \dfrac{-z_2 c_{20}^{b}}{z_1c_{10}^{b}}, \hspace*{.3in} z_1c_{10}^{b,r}=-z_2c_{20}^{b,r}=\big(z_1c_{10}^{b} \big)^{\frac{-z_2}{z_1-z_2}}\big(-z_2c_{20}^{b} \big)^{\frac{z_1}{z_1-z_2}},\\
u_0^{b,r}= -sgn(z_1c_{10}^{b}+z_2c_{20}^{b})\sqrt{2\Big(c_{10}^{b}+c_{20}^{b}+ \frac{z_1 -z_2}{z_1z_2} \big(z_1c_{10}^{b} \big)^{\frac{-z_2}{z_1-z_2}}\big(-z_2c_{20}^{b} \big)^{\frac{z_1}{z_1-z_2}} \Big)},\\
\phi_1^{b,r} = \phi_1^{b} + \dfrac{1}{z_1-z_2 }\Big(  \frac{c_{11}^{b}}{c_{10}^{b}} - \frac{c_{21}^{b}}{c_{20}^{b}} +(1-\lambda )\big(c_{10}^{b}+c_{20}^{b} -c_{10}^{b,r} -c_{20}^{b,r} \big) \Big),\\
z_1c_{11}^{b,r}=-z_2c_{21}^{b,r}=z_1c_{10}^{b,r}\Big(\frac{1}{z_1-z_2 }\Big( z_1 \frac{c_{21}^{b}}{c_{20}^{b}} -z_2 \frac{c_{11}^{b}}{c_{10}^{b}} \Big) + w(c_{10}^{b}, c_{20}^{b}) + \frac{2(\lambda z_1 -z_2)}{z_2}c_{10}^{b,r} \Big),\\
u_1^{b,r}= \frac{1}{u_0^{b,r}}\Big( ( c_{10}^{b}+ c_{20}^{b})(c_{10}^{b}+ \lambda  c_{20}^{b}) - ( c_{10}^{b,r}+ c_{20}^{b,r})(c_{10}^{b,r}+ \lambda  c_{20}^{b,r})\\
\qquad \qquad \qquad + \big( c_{11}^{b}+c_{21}^{b} - c_{11}^{b,r}- c_{21}^{b,r}\big) \Big).
\end{array}
\end{equation}
where, $w$ is defined in \eqref{Abbrev}.\\
(ii) The unstable manifold $W^u(\mathscr{Z}_3)$ intersects $B_3$ transversally at points
$$
\big(0, u_0^{r}+ u_1^{r}d+ o(d), r_1, r_2, J_1(d), J_2(d), 1\big),
$$
and the $\alpha-$limit set of $N^{r}=M^{r}\cap W^u(\mathscr{Z}_3)$ is
$$
\alpha\big(N^{r}\big)=\Big\{ \big( \phi_0^{r}+\phi_1^{r}d+o(d), 0, c_{10}^{r}+c_{11}^{r}d+o(d), c_{20}^{r}+c_{21}^{r}d+o(d), J_1(d), J_2(d), 1\big)\Big\},
$$
where $J_k(d)=J_{k0}+J_{k1}d+o(d),~~k=1,2,$ can be arbitrary and
\[
\begin{array}{l}
\phi_0^{r} =  - \frac{1}{z_1 -z_2}\ln \frac{-z_2 r_2}{z_1r_1}, \quad z_1c_{10}^{r}=-z_2c_{20}^{r}=\big(z_1r_1 \big)^{\frac{-z_2}{z_1-z_2}}\big(-z_2r_2 \big)^{\frac{z_1}{z_1-z_2}},\\
u_0^{r}= -sgn(z_1r_1+z_2r_2)\sqrt{2\Big(r_1+r_2+ \frac{z_1 -z_2}{z_1z_2} \big(z_1r_1 \big)^{\frac{-z_2}{z_1-z_2}}\big(-z_2r_2 \big)^{\frac{z_1}{z_1-z_2}} \Big)},\\
\phi_1^{r}= \frac{1-\lambda}{z_1 -z_2} \big(r_1+r_2-c_{10}^{r}-c_{20}^{r} \big),\\
z_1c_{11}^{r}=-z_2c_{21}^{r}=z_1c_{10}^{r}\Big(w(r_1,r_2) + \frac{2(\lambda z_1 -z_2)}{z_2}c_{10}^{r} \Big),\\
u_1^{r} =\frac{1}{u_0^{r}}\Big( (r_1+r_2)(r_1+\lambda r_2) - (c_{10}^{r}+c_{20}^{r})(c_{10}^{r}+\lambda c_{20}^{r}) - c_{11}^{r} -c_{21}^{r} \Big).
\end{array}
\]
\end{prop}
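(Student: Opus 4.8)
The plan is to repeat verbatim the argument used for Proposition~\ref{propHS0a}, with the roles of the two endpoints interchanged. Since $Q\equiv 0$ on $[b,1]$, the limiting fast system is exactly \eqref{fzero01} (with $\mathscr{Z}_1$ replaced by $\mathscr{Z}_3$), so after substituting the regular expansion \eqref{dsys} the zeroth- and first-order layer systems are precisely \eqref{f0th0a} and \eqref{f1st0a}, and Lemma~\ref{lem0th1} already supplies a complete set of first integrals $\mathcal{H}_{k0}$, $\mathcal{H}_{k1}$ for them. Normal hyperbolicity of $\mathscr{Z}_3$ (Lemma~\ref{0nh}) guarantees that $W^s(\mathscr{Z}_3)$ and $W^u(\mathscr{Z}_3)$ are smooth manifolds of the expected dimensions and that their intersections with the appropriate-codimension sets $B_2$ and $B_3$ are transversal; the dimension count is the same as in \cite{EL07,Liu05,Liu09}.

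For part (i) the set of interest is $N^{b,r}=M^{b,r}\cap W^s(\mathscr{Z}_3)$, whose orbits start on $B_2$. A point of $B_2$ carries the pre-assigned data $\phi=\phi^{[2]}=\phi^{b}$ and $c_k=c_k^{[2]}=c_{k0}^{b}+c_{k1}^{b}d+o(d)$ at $\tau=b$, hence this endpoint has nontrivial first-order corrections $\phi_1^{b},c_{11}^{b},c_{21}^{b}$ --- exactly the situation of part (ii) of Proposition~\ref{propHS0a} at $x=a$. Evaluating $\mathcal{H}_{10},\mathcal{H}_{20}$ at $\xi=0$ and as $\xi\to\infty$ (where, on $\mathscr{Z}_3$, $u_0=0$ and $z_1c_{10}+z_2c_{20}=0$) gives $\phi_0^{b,r}$ and $c_{j0}^{b,r}$; then $\mathcal{H}_{50}$ gives $u_0^{b,r}$, the sign being fixed by the orientation of the fast flow, which produces the leading factor $-\mathrm{sgn}(z_1c_{10}^{b}+z_2c_{20}^{b})$. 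Repeating this with $\mathcal{H}_{11},\mathcal{H}_{21},\mathcal{H}_{31}$ and imposing the slow-manifold constraint $z_1c_{11}^{b,r}+z_2c_{21}^{b,r}=0$ yields successively $\phi_1^{b,r}$, $c_{j1}^{b,r}$, and $u_1^{b,r}$. These are the formulas in \eqref{prop3:61}, which are literally \eqref{prop1:22} after the substitution $a\mapsto b$, $l\mapsto r$.

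For part (ii) the relevant set is $N^{r}=M^{r}\cap W^u(\mathscr{Z}_3)$, whose orbits end on $B_3=B_r$, where the data $\phi=0$, $c_k=r_k$ are exact and $d$-independent --- exactly the situation of part (i) of Proposition~\ref{propHS0a} at $x=0$. The identical first-integral computation, now comparing the limit $\xi\to-\infty$ (on $\mathscr{Z}_3$) with $\xi=0$ and using $\phi_1(1)=c_{11}(1)=c_{21}(1)=0$, produces $\phi_0^{r},c_{j0}^{r},u_0^{r}$ at zeroth order and $\phi_1^{r},c_{j1}^{r},u_1^{r}$ at first order; this is the part-(i) list of Proposition~\ref{propHS0a} after the substitution $(V,l_1,l_2)\mapsto(0,r_1,r_2)$, $l\mapsto r$.

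There is essentially no genuine obstacle here: every formula is forced by the explicit first integrals of Lemma~\ref{lem0th1} together with the two slow-manifold constraints $u=0$ and $z_1c_1+z_2c_2=0$. The only points requiring care are bookkeeping ones --- keeping track of which endpoint ($x=b$ versus $x=1$) carries the $O(d)$ corrections, selecting the correct branch of the square root for $u_0^{b,r}$ and $u_0^{r}$ from the orientation of the layer flow, and noting that the transversality and dimension count of the GSP construction of \cite{EL07} go through unchanged when $Q=0$. Accordingly we omit the routine details and only record the resulting formulas, as stated.
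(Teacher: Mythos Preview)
Your proposal is correct and follows exactly the paper's own approach: the paper's proof consists of the single sentence ``It is almost identical to the proof of Prop~2.2'' (i.e., Proposition~\ref{propHS0a}), and you have spelled out precisely that identification, correctly matching part~(i) here with part~(ii) there (nontrivial $O(d)$ data at the interior junction) and part~(ii) here with part~(i) there (exact boundary data). There is nothing to add.
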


%%%%%%%%%%%%%%%%%%%%%%%%%%%%%%%%%%%%%%%%%%%%%
\begin{proof}
It is almost identical to the proof of Prop 2.2.
\end{proof}

%%%%%%%%%%%%%%%%%%%%%%%%%%%%%%%%%%%%%%%%%%%%%
\noindent {\bf \underline{Limiting slow dynamics and regular layers on $[b, 1]$.}}
%%%%%%%%%%%%%%%%%%%%%%%%%%%%%%%%%%%%%%%%%%%%%
We now examine the existence of regular layers or outer solutions that connects $\omega(N^{b,r})$ to $\alpha(N^{r})$. Following exactly the same analysis for the limiting slow dynamics in section \eqref{sec-sing-0a}, one obtains the following Lemma.

%\begin{equation}
%\begin{array}{l}
%\overset{.}{\phi}= -\dfrac{z_1g_1\big(c_1,-\frac{z_1}{z_2}c_1, J_1, J_2; d, \lambda d \big) +z_2g_2\big(c_1,-\frac{z_1}{z_2}c_1, J_1, J_2; d, \lambda d \big)}{z_1(z_1 -z_2) h(\tau) c_1},\\
%\overset{.}c_1= -f_1\big(c_1,-\frac{z_1}{z_2}c_1; d, \lambda d \big)p - \dfrac{1}{h(\tau)} g_1\big(c_1,-\frac{z_1}{z_2}c_1, J_1, J_2; d, \lambda d \big),\\
%\overset{.}{J_1} = \overset{.}{J_2} = 0, \quad \overset{.}{\tau}=1.
%\end{array}
%\end{equation}
%where
%$$
%p= -\dfrac{z_1g_1\big(c_1,-\frac{z_1}{z_2}c_1, J_1, J_2; d, \lambda d \big) +z_2g_2\big(c_1,-\frac{z_1}{z_2}c_1, J_1, J_2; d, \lambda d \big)}{z_1(z_1 -z_2) h(\tau) c_1}.
%$$
\begin{lem}\label{lem:3zeroth}
For $b \leq x \leq 1$, there is a unique solution $(\phi_0(x), c_{10}(x), J_{10}, J_{20}, \tau(x)\big)$ of \eqref{1slow:27} such that 
$$
\big( \phi_0(b), c_{10}(b), \tau(b) \big) = \big( \phi_0^{b,r}, c_{10}^{b,r}, b \big) ~~\textrm{and}~~\big( \phi_0(1), c_{10}(1), \tau(1) \big) = \big( \phi_0^{r}, c_{10}^{r}, 1 \big), 
$$
where $\phi_0^{b,r},~\phi_0^{r},~c_{10}^{b,r}$, and $c_{10}^{r}$ are given in Proposition \eqref{propHSb1}. It is given by
\begin{equation}\label{L5:64}
\begin{array}{l}
\phi_0(x) = \phi_0^{b,r}- \dfrac{z_1(z_1-z_2)(\phi_0^{b,r} - \phi_0^{r})}{\ln c_{10}^{b,r}- \ln c_{10}^{r}}\big(\ln c_{10}^{b,r}- \ln c_{10}(x)\big) , \\
c_{10}(x) = c_{10}^{b,r} - \dfrac{(c_{10}^{b,r} - c_{10}^{r})}{(H(1)- H(b))}(H(x)- H(b)),\\
J_{10} = \dfrac{c_{10}^{b,r}- c_{10}^{r}}{\big(H(1)-H(b)\big)} \Big(1+ \dfrac{z_1\big( \phi_0^{b,r}- \phi_0^{r}\big)}{\ln c_{10}^{b,r} - \ln c_{10}^{r}} \Big),\\
J_{20} = -\dfrac{z_1\big(c_{10}^{b,r}- c_{10}^{r}\big)}{z_2\big(H(1)-H(b)\big)} \Big(1+ \dfrac{z_2\big( \phi_0^{b,r}- \phi_0^{r}\big)}{\ln c_{10}^{b,r} - \ln c_{10}^{r}} \Big), \quad
\tau(x)= x.
\end{array}
\end{equation}
\end{lem}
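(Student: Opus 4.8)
The strategy is to reproduce, line by line, the argument for Lemma~\ref{lem22}, translating the base point from $x=0$ to $x=b$ and the terminal point from $x=a$ to $x=1$, since on $[b,1]$ the governing zeroth‑order system is again \eqref{1slow:27}. I would start from the $c_{10}$‑equation: as $\dot c_{10}=z_2T_0/\bigl((z_1-z_2)h(\tau)\bigr)$ with $T_0$ constant along orbits and $\dot\tau=1$, $H'=1/h$, integration from $b$ gives $c_{10}(x)=c_{10}^{b,r}+\frac{z_2T_0}{z_1-z_2}\bigl(H(x)-H(b)\bigr)$, and imposing $c_{10}(1)=c_{10}^{r}$ fixes
\[
T_0=\frac{(z_1-z_2)\bigl(c_{10}^{r}-c_{10}^{b,r}\bigr)}{z_2\bigl(H(1)-H(b)\bigr)},
\]
which reproduces the affine formula for $c_{10}(x)$ in \eqref{L5:64}.

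Because $c_{10}$ is now affine in $H$, the integral $\int_b^x\!\frac{ds}{h(s)c_{10}(s)}$ is elementary and equals $\bigl(H(1)-H(b)\bigr)\frac{\ln c_{10}^{b,r}-\ln c_{10}(x)}{c_{10}^{b,r}-c_{10}^{r}}$, the direct analogue of the first identity in \eqref{1f0}. Integrating $\dot\phi_0=-I_0/\bigl(z_1(z_1-z_2)h(\tau)c_{10}\bigr)$ then yields a closed form for $\phi_0(x)$ that is logarithmic in $c_{10}(x)$, with the single undetermined constant $I_0$; the condition $\phi_0(1)=\phi_0^{r}$ pins it down as
\[
I_0=\frac{z_1(z_1-z_2)\bigl(c_{10}^{b,r}-c_{10}^{r}\bigr)\bigl(\phi_0^{b,r}-\phi_0^{r}\bigr)}{\bigl(H(1)-H(b)\bigr)\bigl(\ln c_{10}^{b,r}-\ln c_{10}^{r}\bigr)},
\]
and substituting this back recovers the stated expression for $\phi_0(x)$ in \eqref{L5:64}.

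It remains to read off $(J_{10},J_{20})$ and to settle existence/uniqueness. Since $I_0=z_1J_{10}+z_2J_{20}$ and $T_0=J_{10}+J_{20}$, inverting this $2\times2$ system and inserting the two displayed expressions gives exactly the formulas for $J_{10}$ and $J_{20}$ in \eqref{L5:64}, with $\tau(x)=x$ immediate from $\dot\tau=1,\ \tau(b)=b$. Uniqueness holds because an orbit of \eqref{1slow:27} joining the prescribed endpoints is completely determined by the pair $(I_0,T_0)$ — once $T_0\neq0$, the monotone map $x\mapsto c_{10}(x)$ reparametrises $H$ — and $(I_0,T_0)$ is in turn determined uniquely by the four boundary values $\phi_0^{b,r},c_{10}^{b,r},\phi_0^{r},c_{10}^{r}$; existence is exhibited by the explicit formulas. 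I expect no real obstacle here: the construction is essentially a copy of Lemma~\ref{lem22}, and the only point needing genuine care is the degenerate case $c_{10}^{b,r}=c_{10}^{r}$ (i.e. $T_0=0$), where $c_{10}$ is constant, $\phi_0$ is affine in $H$, and every divided difference $\tfrac{\ln c_{10}^{b,r}-\ln c_{10}^{r}}{c_{10}^{b,r}-c_{10}^{r}}$ is to be interpreted in the limit as $1/c_{10}^{r}$, exactly as on $[0,a]$ — apart from tracking the sign of $H(x)-H(b)$ and this limiting convention, everything transfers verbatim.
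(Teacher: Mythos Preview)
Your proposal is correct and follows essentially the same approach as the paper, which simply states that the proof is ``almost identical to the proof of Lemma~\ref{lem22}''; you have spelled out exactly that translation from $[0,a]$ to $[b,1]$, and your extra remark about the degenerate case $c_{10}^{b,r}=c_{10}^{r}$ is a small refinement the paper does not make explicit.
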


%%%%%%%%%%%%%%%%%%%%%%%%%%%%%%

\begin{proof}
It is almost identical to the proof of Lemma \ref{lem22}.
\end{proof}
%%%%%%%%%%%%%%%%%%%%%%%%%%%%%%%%%%%%

To find the first order terms and for convenience we  first define the following three functions similar to those on $[b,1]$,
\begin{equation}\label{MB3-MNP}
\begin{aligned}
M_{2} :=& c_{11}^{b,r}-c_{11}^{r}+  \dfrac{\lambda z_1 - z_2}{z_2}\big( (c_{10}^{r})^2 - (c_{10}^{b,r})^2\big)\\
=& c_{10}^{b,r}w(c_{10}^{b},c_{20}^{b}) - c_{10}^{r}w(r_1,r_2) + \dfrac{\lambda z_1 - z_2}{z_2}\big( (c_{10}^{b,r})^2 - (c_{10}^{r})^2\big) + c_{10}^{b,r}\dfrac{z_1\frac{c_{21}^{b}}{c_{20}^{b}} - z_2\frac{c_{11}^{b}}{c_{10}^{b}}}{z_1-z_2},\\
N_{2}:=&  \dfrac{c_{10}^{b,r} - c_{10}^{r}}{\ln c_{10}^{b,r}- \ln c_{10}^{r}} \Bigg\{(\phi_1^{b,r} - \phi_1^{r}) 
 -\dfrac{1-\lambda}{z_2}(c_{10}^{b,r} - c_{10}^{r}) +  \dfrac{\phi_{0}^{b,r} - \phi_{0}^{r}}{c_{10}^{b,r} - c_{10}^{r}}M_{2} \\
&\qquad \qquad - \dfrac{ w(c_{10}^{b},c_{20}^{b}) - w(r_1,r_2) }{\ln c_{10}^{b,r}- \ln c_{10}^{r}}(\phi_{0}^{b,r} - \phi_{0}^{r})  -  \dfrac{\phi_{0}^{b,r} - \phi_{0}^{r}}{\ln c_{10}^{b,r}- \ln c_{10}^{r}} \dfrac{z_1 \dfrac{c_{21}^{b}}{c_{20}^{b}} -z_2 \dfrac{c_{11}^{b}}{c_{10}^{b}}}{z_1-z_2 }\Bigg\},\\
P_2(x) :=& \Big(\dfrac{z_1 \frac{c_{21}^{b}}{c_{20}^{b}} -z_2 \frac{c_{11}^{b}}{c_{10}^{b}}}{(z_1-z_2) } + w(c_{10}^{b}, c_{20}^{b}) + \frac{(\lambda z_1 -z_2)}{z_2}c_{10}^{b,r} \Big)\dfrac{\big(c_{10}^{b,r} - c_{10}(x)\big)}{c_{10}(x)}\\
 &\hspace*{.15in}+ \dfrac{\lambda z_1 -z_2}{z_2} \dfrac{\big( H(x)- H(b)\big)}{\big( H(1)- H(b)\big)}(c_{10}^{b,r} - c_{10}^{r})\\
&\hspace*{.15in}- \dfrac{M_{2}\big(H(x)- H(b)\big)}{\big(H(1)-H(b)\big)c_{10}(x)} +  \dfrac{M_{2}\big(\ln c_{10}^{b,r} - \ln c_{10}(x)\big)}{c_{10}^{b,r} - c_{10}^{r}}.
\end{aligned}
\end{equation}

%%%%%%%%%%%%%%%%%%%%%%%%%%%%%%%%%%%%%%%%%%%%%%%%
\begin{lem}{\label{lem:3slow1st}} 
For $b \leq x \leq 1$, there is a unique solution $\big(\phi_1(x), c_{11}(x), J_{11}, J_{21}, \tau(x) \big)$ of \eqref{1slow:28} such that
\[
\big(\phi_1(b), c_{11}(b), \tau(b) \big) = (\phi_1^{b,r}, c_{11}^{b,r}, b)~~ \textrm{~and~}~~\big(\phi_1(1), c_{11}(1), \tau(1) \big) = (\phi	_1^{r}, c_{11}^{r}, 1),
\]
where $\phi_1^{b,r}, \phi_1^{r}, c_{11}^{b,r}$ and $c_{11}^{r}$ are given in Proposition \eqref{propHSb1}. It is given by
\begin{equation}\label{MB3-J}
\begin{array}{l}
\phi_1(x) = \phi_1^{b,r}  -\dfrac{(1-\lambda)(c_{10}^{b,r} - c_{10}^{r})}{z_2}\dfrac{ \big(H(x)-H(b)\big)}{ \big(H(1)-H(b)\big)}+ \dfrac{(\phi_{0}^{b,r} - \phi_{0}^{r})}{(\ln c_{10}^{b,r}- \ln c_{10}^{r})} P_{2}(x)\\
\hspace*{.6in} -\dfrac{\ln c_{10}^{b,r}- \ln c_{10}(x)}{c_{10}^{b,r} - c_{10}^{r}}N_{2},\\
c_{11}(x)= c_{11}^{b,r}+ \dfrac{\lambda z_1 - z_2}{z_2}\Big(c^2_{10}(x) - (c_{10}^{b,r})^2 \Big) - \dfrac{H(x)-H(b)}{H(1)-H(b)}M_{2},\\
J_{11}= \dfrac{M_{2}+z_1N_{2}}{\big(H(1)-H(b)\big)}, \hspace*{.3in} J_{21}= -\dfrac{z_1}{z_2}\dfrac{M_{2}+z_2N_{2}}{\big(H(1)-H(b)\big)}, 
\end{array}
\end{equation}
where $M_{2}$, $N_{2}$, and $P_2(x)$ were defined in \eqref{MB3-MNP}.
\end{lem}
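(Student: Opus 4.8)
The plan is to follow the proof of Lemma~\ref{lem:1slow1st} essentially line by line. On $[b,1]$ the permanent charge vanishes, so the governing zeroth- and first-order reduced systems are again \eqref{1slow:27}--\eqref{1slow:28}; only the interval of integration, the two-point boundary data, and the zeroth-order background solution change. The zeroth-order solution is supplied by Lemma~\ref{lem:3zeroth}, and the boundary values $(\phi_1^{b,r},c_{11}^{b,r})$ at $x=b$ and $(\phi_1^{r},c_{11}^{r})$ at $x=1$ are supplied by Proposition~\ref{propHSb1}. Since \eqref{1slow:28} is linear in $(\phi_1,c_{11})$ with coefficients built from the zeroth-order solution, existence and uniqueness of the solution of the two-point problem will come down to determining the two free constants $T_1$ and $I_1$ (equivalently $J_{11},J_{21}$, via $I_1=z_1J_{11}+z_2J_{21}$ and $T_1=J_{11}+J_{21}$ from \eqref{Abbrev}).

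First I would integrate the $c_{11}$-equation of \eqref{1slow:28} from $b$ to $x$. Using $\dot c_{10}=z_2T_0/((z_1-z_2)h(\tau))$ together with the affine-in-$H$ form of $c_{10}$ from Lemma~\ref{lem:3zeroth}, the term $\int_b^x c_{10}(s)/h(s)\,ds$ evaluates to $(H(1)-H(b))\bigl(c_{10}^2(x)-(c_{10}^{b,r})^2\bigr)/\bigl(2(c_{10}^{r}-c_{10}^{b,r})\bigr)$, and one obtains
$$
c_{11}(x)=c_{11}^{b,r}+\frac{\lambda z_1-z_2}{z_2}\bigl(c_{10}^2(x)-(c_{10}^{b,r})^2\bigr)+\frac{z_2T_1}{z_1-z_2}\bigl(H(x)-H(b)\bigr).
$$
Imposing $c_{11}(1)=c_{11}^{r}$ and comparing with the definition of $M_{2}$ in \eqref{MB3-MNP} fixes $T_1=-\dfrac{(z_1-z_2)M_{2}}{z_2(H(1)-H(b))}$, whence the stated formula for $c_{11}(x)$.

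Next I would integrate the $\phi_1$-equation. This needs the analogues of the integrals \eqref{1f0} with $\int_0^x$ replaced by $\int_b^x$ and $c_{10}$ taken from Lemma~\ref{lem:3zeroth}; all four are elementary because $c_{10}$ is linear in $H(x)-H(b)$. Substituting the formula for $c_{11}(x)$ just found, the $c_{11}$-dependent part of $\phi_1$ collapses (using the expression for $I_0$ from Lemma~\ref{lem:3zeroth}, the analogue of \eqref{1sI0}) into the combination $\dfrac{\phi_0^{b,r}-\phi_0^{r}}{\ln c_{10}^{b,r}-\ln c_{10}^{r}}\,P_2(x)$, leaving $\phi_1(x)$ determined up to the constant $I_1$. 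Imposing $\phi_1(1)=\phi_1^{r}$ and recognizing the resulting bracket as $N_2$ of \eqref{MB3-MNP} fixes $I_1=\dfrac{z_1(z_1-z_2)N_2}{H(1)-H(b)}$. Substituting back yields the stated $\phi_1(x)$; finally $J_{11}$ and $J_{21}$ are recovered from $T_1$ and $I_1$ through $I_1=z_1J_{11}+z_2J_{21}$ and $T_1=J_{11}+J_{21}$ (see \eqref{Abbrev}), giving the stated formulas. Uniqueness is automatic: the solution is the unique one of a linear two-point boundary value problem, and the constants $T_1,I_1$ are uniquely determined above.

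The only real work is bookkeeping. Unlike the $[0,a]$ case, the layer data $\phi_1^{b,r},c_{11}^{b,r}$ already carry the extra internal-layer contributions $c_{11}^{b},c_{21}^{b},\phi_1^{b}$ coming from the jump of $Q$ at $x=b$ (see Proposition~\ref{propHSb1}), so one must check that everything reassembles into the closed forms $M_2,N_2,P_2$ exhibited in the second equalities of \eqref{MB3-MNP}. This is a routine but somewhat lengthy algebraic verification, entirely parallel to the one behind $M_0,N_0,P_0$ in Lemma~\ref{lem:1slow1st}; I anticipate no conceptual obstacle, only careful sign and index tracking.
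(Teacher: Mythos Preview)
Your proposal is correct and takes essentially the same approach as the paper, which simply states that the proof is similar to that of Lemma~\ref{lem:1slow1st}. You have in fact spelled out more detail than the paper does, but every step---integrating the $c_{11}$-equation to fix $T_1$ via $M_2$, then the $\phi_1$-equation to fix $I_1$ via $N_2$, and recovering $J_{11},J_{21}$ from $T_1,I_1$---mirrors exactly the argument for the $[0,a]$ interval, with the role of the ``extra'' internal-layer terms now sitting at the left endpoint $x=b$ rather than the right, as you correctly note.
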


\begin{proof} 
It is similar to the proof of Lemma \ref{lem:1slow1st}.
\end{proof}

%%%%%%%%%%%%%%%%%%%%%%%%%%%%%%%%%%%%%%%

\noindent \underline{\bf Matching for singular orbits on $[0 , 1]$ .}
The final step for forming a connecting orbit over the whole interval $[0,1]$ is to balance the three singular orbits from the preceding section at the points $x = x_1$ and $x = x_2$.
The matching conditions are $u_0^{a,l} = u_0^{a,m},~u_1^{a,l} = u_1^{a,m},~ u_0^{b,m} = u_0^{b,r},~u_1^{b,m} = u_1^{b,r}$, and $J_1= J_{10}+J_{11}d$ and $J_2= J_{20}+J_{21}d$ have to be the same on all subintervals; that is, from formulas \eqref{prop1:22}, \eqref{L2:31}, \eqref{MB1-J}, \eqref{prop2:47},  \eqref{prop2:49}, Lemma \eqref{LemM0th}, \eqref{70J_1}, \eqref{prop3:61}, \eqref{L5:64} and \eqref{MB3-J}.
%%%%%%%%%%%%%%%%%%%%%%%%%%%%%%%
%%%%%%%%%%%%%%%%%%%%%%%%%%%%%
We skip writing the zeroth order terms in the ion size $d$ because one can find them in previous works \cite{EL07,JLZ15}. 
First order terms, in $d$, of the governing system   are as follows,
\begin{equation*}
\begin{aligned}
&\phi_{1}^{a,m} = \phi_{1}^{a} + \dfrac{z_1 c_{10}^{a,m}}{\sigma^{a,m}}\Bigg( \dfrac{c_{11}^a}{c_{10}^a} - \dfrac{c_{21}^a}{c_{20}^a} +(1-\lambda )\Big(c_{10}^{a}+c_{20}^{a} -c_{10}^{a,m} -c_{20}^{a,m} \Big)\Bigg)\\
&\hspace*{.5in}- \dfrac{1}{\sigma^{a,m}}\Big(\dfrac{c_{21}^a}{c_{20}^a}+ 2\lambda c_{20}^{a}+ (\lambda +1) c_{10}^{a} - 2\lambda c_{20}^{a,m} - (\lambda+ 1) c_{10}^{a,m} \Big)Q,\\
&\phi_{1}^{b,m} = \phi_{1}^{b} + \dfrac{z_1 c_{10}^{b,m}}{\sigma^{b,m}}\Bigg( \dfrac{c_{11}^b}{c_{10}^b} - \dfrac{c_{21}^b}{c_{20}^b} +(1-\lambda )\Big(c_{10}^{b}+c_{20}^{b} -c_{10}^{b,m} -c_{20}^{b,m} \Big)\Bigg)\\
&\hspace*{.5in}- \dfrac{1}{\sigma^{b,m}}\Big(\dfrac{c_{21}^b}{c_{20}^b}+ 2\lambda c_{20}^{b}+ (\lambda +1) c_{10}^{b} - 2\lambda c_{20}^{b,m} - (\lambda+ 1) c_{10}^{b,m} \Big)Q,\\
&( c_{10}^{a,l}+ c_{20}^{a,l})(c_{10}^{a,l}+ \lambda  c_{20}^{a,l}) + \big( c_{11}^{a,l}+ c_{21}^{a,l}\big)
= ( c_{10}^{a,m}+ c_{20}^{a,m})(c_{10}^{a,m}+ \lambda  c_{20}^{a,m})\\
&\hspace*{2.84in} + \big(  c_{11}^{a,m}+ c_{21}^{a,m}\big)+\big( \phi_{1}^{a} - \phi_{1}^{a,m}\big) Q,\\
&( c_{10}^{b,m}+ c_{20}^{b,m})(c_{10}^{b,m}+ \lambda  c_{20}^{b,m}) + \big( c_{11}^{b,m}+ c_{21}^{b,m}\big)
= ( c_{10}^{b,r}+ c_{20}^{b,r})(c_{10}^{b,r}+ \lambda  c_{20}^{b,r})\\
&\hspace*{2.84in}+ \big(  c_{11}^{b,r}+ c_{21}^{b,r}\big)-  \big( \phi_{1}^{b} - \phi_{1}^{b,m}\big) Q,\\
&J_{11} = \dfrac{M_{0}(Q)+z_1N_{0}(Q)}{H(a)}= \dfrac{M_{2}(Q)+z_1N_{2}(Q)}{H(1)-H(b)}, \quad J_{21} =  -\dfrac{z_1}{z_2}\dfrac{M_{0}(Q)+z_2N_{0}(Q)}{H(a)}= -\dfrac{z_1}{z_2}\dfrac{M_{2}(Q)+z_2N_{2}(Q)}{H(1)-H(b)},\\
& c_{11}^{b,m}= c_{11}^{a,m}+z_1z_2{T}_{1}  S_{11}+2z_1(\lambda z_1 - z_2) {T}_{0}(Q)S_{12}(Q) + \Big( 2\lambda z_1{T}_{0}(Q)+ 2(z_1-z_2)J_{10}\Big)QS_{11} \\
&\hspace*{.5in} +z_2J_{11}y^*Q + \dfrac{z_1\Big(c_{10}^{a,m}- c_{10}^{b,m} \Big){I}_{0}(Q)c_{11}^{a,m}Q}{\sigma^{b,m}\Big(z_1{T}_{0}(Q)c_{10}^{a,m}+J_{10}Q \Big)}+ \dfrac{2z_1(\lambda z_1 - z_2){I}_{0}(Q) Q}{\sigma(y^*)}S_{12}(Q)\\
&\hspace*{.5in}+\dfrac{z_1z_2{T}_{1} {I}_{0}(Q) Q}{ {T}_{0}(Q)\sigma(y^*)}\Big( S_{11}-e^{z_1z_2{T}_{0}(Q)y^*} S_{13}(Q)\Big) - \dfrac{2z_1(z_1-z_2){I}_{0}(Q) e^{z_1z_2{T}_{0}(Q)y^*}Q}{\sigma(y^*)}  S_{14}(Q)+o(Q),
\end{aligned}
\end{equation*}
%%%%%%%%%%%%%%%%%%%%%%%%%%%%%%%%%%%%%%%%%%%%
\begin{equation*}
\begin{aligned}
\phi_{1}^{b,m} =& \phi_{1}^{a,m}+\dfrac{(\lambda-1)\Big((z_1-z_2)J_{10}+{I}_{0}(Q)\Big)(c_{10}^{b,m} - c_{10}^{a,m}) }{z_2^2{T}_{0}(Q)}  - {I}_{1}y^*\\
&+ \dfrac{z_1(z_1 -z_2)\Big(c_{10}^{b,m}- c_{10}^{a,m} \Big){I}_{0}(Q)c_{11}^{a,m}}{z_2\sigma^{b,m}\Big(z_1{T}_{0}(Q)c_{10}^{a,m}+J_{10}Q \Big)} - \dfrac{(\lambda z_1 - z_2)(z_1 -z_2){I}_{0}(Q)}{z_2^2{T}_{0}(Q)\sigma^{b,m}}\big( \big(c_{10}^{b,m}\big)^2 - \big(c_{10}^{a,m}\big)^2\big)\\
&-\dfrac{z_1(z_1 -z_2){T}_{1}{I}_{0}(Q) }{ {T}_{0}(Q)\sigma^{b,m}}\Big( S_{11}-e^{z_1z_2{T}_{0}(Q)y^*} S_{13}(Q)\Big) \\
&+\dfrac{2z_1(z_1-z_2)^2{I}_{0}(Q) e^{z_1z_2{T}_{0}(Q)y^*}}{z_2\sigma^{b,m}} S_{14}(Q)+\dfrac{2(z_1-z_2)\Lambda_{0}(Q)}{z_2{T}_{0}(Q)}J_{10} y^*Q\\
 & - \dfrac{2z_1(z_1 -z_2)\Lambda_{0}(Q){I}_{0}(Q) Q }{z_2 {T}_{0}(Q)\sigma^{b,m}} S_{11}-\dfrac{(z_1-z_2)J_{11}{I}_{0}(Q) y^*Q}{ {T}_{0}(Q)\sigma^{b,m}}\\
 &+ \dfrac{(z_1-z_2)J_{11} {I}_{0}(Q) \Big(c_{10}^{b,m} -c_{10}^{a,m}\Big)}{z_2 {T}_{0}(Q)\sigma^{b,m}\big(z_1{T}_{0}(Q)c_{10}^{a,m} +J_{10}Q\big)} Q+ \dfrac{2(z_1-z_2)^2J_{10}{I}_{0}(Q) e^{z_1z_2{T}_{0}(Q)y^*}Q}{z_2{T}_{0}(Q) \sigma^{b,m}} S_{13}(Q),\\\\
\dfrac{ {T}_{1} }{(z_1-z_2){T}_{0}(Q)}&\Big((z_1-z_2)(c_{10}^{b,m} - c_{10}^{a,m}) - z_2 y^* {I}_{0}(Q) Q \Big) =  c_{11}^{b,m}- c_{11}^{a,m} - 2(\lambda z_1 - z_2) z_1 {T}_{0}(Q) S_{12}(Q)(y^*)  \\
&\hspace*{1.5in} -\Big(\big(2\lambda +\dfrac{(1-\lambda) z_2}{z_1-z_2} \big) z_1{T}_{0}(Q) + 2(z_1-z_2) J_{10}\Big)S_{11}(y^*)Q\\
& \hspace*{1.5in}  + \dfrac{z_2(\phi_{1}^{b,m} - \phi_{1}^{a,m})Q}{z_1-z_2} + 2(1-\lambda)z_1S_{15}(Q)(y^*)Q  + O(Q^2),
\end{aligned}
\end{equation*}

Now, we apply the zero-current condition on those we need in the matching system (See \eqref{Matching-zeroI}). 
We recall that $(\phi_0^{a}, c_{10}^{a}, c_{20}^{a})$, $(\phi_1^{a}, c_{11}^{a}, c_{21}^{a})$, $(\phi_0^{b}, c_{10}^{b}, c_{20}^{b})$  and $(\phi_1^{b}, c_{11}^{b}, c_{21}^{b})$  are the unknown values preassigned at $x=a$ and $x=b$, and $J_{10},J_{20},J_{11}$ and $J_{21}$ are the unknown values for the flux densities of the two types of ions.
There are also five auxiliary unknowns $\phi_0^{a,m}, \phi_1^{a,m}, \phi_0^{b,m}, \phi_1^{b,m}$, and $y^*$ in the matching system. The total number of unknowns in the matching system is twenty one, which matches the total number of equations.
Similar to \cite{EL07}, one can see the set of nonlinear equations in the matching system has a unique solution.

%%%%%%%%%%%%%%%%%%%%%%%%%%%%%%%%%%%%
%\section{Investigation on the case where the current is zero.}

\section{Effects of ion size on the zero-current fluxes.}\label{Sec.Ionsize-effects}
\setcounter{equation}{0}
In this part, we are interested in reversal potential, reversal
permanent charges and zero-current fluxes determined by zero total currents. In \cite{ML19, MEL20}, the authors studied reversal potential problem for cPNP. The authors in \cite{M21} investigate the reversal permanent charge problem for cPNP.
%%%%%%%%%%%%%%%%%%%%%%%%%%%%%%%%
Now, in this section, we study the connection of the zero-current quantities with the membrane potentials and diffusion constants for the hard-sphere PNP. The total current 
$I = I(V,Q)$ depends on the transmembrane potential $V$ and the permanent charge $Q$. 

The chief objective of this section is to examine the qualitative effect of ion sizes via a steady-state boundary value problem.
Note that the current $I$ depends on the size $d$, only through fluxes, i.e., $I=z_1J_1+z_2J_2$. Thus,
%%%%%%%%%%%%%%%%%%%%%%%%%%%%%%
$$
\begin{aligned}
I = I(J_1(d), J_2(d)) =& I(J_{10}+J_{11}d+ O(d^2), J_{20}+J_{21}d+O(d^2))\\
=& I(J_{10},J_{20})+\partial_{J_1}I(J_{10},J_{20})J_{11}d + \partial_{J_2}I(J_{10},J_{20})J_{21}d + O(d^2) \\
=& I(J_{10},J_{20}) + (z_1J_{11}+z_2J_{21})d+ O(d^2) \\
:=& I_0 + I_1d + O(d^2).
\end{aligned}
$$
Therefore, $I=0$ will result in $I_0=0$ and $I_1=0$.
%%%%%%%%%%%%%%%%%%%%%%%%%%%
It follows from $I=I_0=I_1=0$ and $z_1=-z_2$  that $J_{10} =J_{20}$, $J_{11} =J_{21}$ and hence, 
$$
T_0=2J_{10},\quad T_1=2J_{11}, \quad \Lambda_0= (1+\lambda)J_{10}.
$$

One can derive the zeroth-order terms in $d$, for the case where $z_1=-z_2$, from \cite{EL07} with adding the zero-current condition $I=0$, or from \cite{ML19}  with the condition $D_1=D_2$.
To find the first order terms in $d$, we have,
\begin{equation}\label{Matching-zeroI}
\begin{aligned}
&\phi_{1}^{a,m} = \phi_{1}^{a} + \dfrac{z_1 c_{10}^{a,m}}{\sigma^{a,m}}\Bigg( \dfrac{c_{11}^{a}}{c_{10}^{a}} - \dfrac{c_{21}^{a}}{c_{20}^{a}} +(1-\lambda )\Big(c_{10}^{a}+c_{20}^{a} -c_{10}^{a,m} -c_{20}^{a,m} \Big)\Bigg)\\
&\hspace*{.5in}- \dfrac{1}{\sigma^{a,m}}\Big(\dfrac{c_{21}^{a}}{c_{20}^{a}}+ 2\lambda c_{20}^{a}+ (\lambda +1) c_{10}^{a} - 2\lambda c_{20}^{a,m} - (\lambda+ 1) c_{10}^{a,m} \Big)Q,\\
&\phi_{1}^{b,m} = \phi_{1}^{b} + \dfrac{z_1 c_{10}^{b,m}}{\sigma^{b,m}}\Bigg( \dfrac{c_{11}^{b}}{c_{10}^{b}} - \dfrac{c_{21}^{b}}{c_{20}^{b}} +(1-\lambda )\Big(c_{10}^{b}+c_{20}^{b} -c_{10}^{b,m} -c_{20}^{b,m} \Big)\Bigg)\\
&\hspace*{.5in}- \dfrac{1}{\sigma^{b,m}}\Big(\dfrac{c_{21}^{b}}{c_{20}^{b}}+ 2\lambda c_{20}^{b}+ (\lambda +1) c_{10}^{b} - 2\lambda c_{20}^{b,m} - (\lambda+ 1) c_{10}^{b,m} \Big)Q,\\
&( c_{10}^{a,l}+ c_{20}^{a,l})(c_{10}^{a,l}+ \lambda  c_{20}^{a,l}) + \big( c_{11}^{a,l}+ c_{21}^{a,l}\big)
= \big( c_{10}^{a,m}+ c_{20}^{a,m}\big)\big(c_{10}^{a,m}+ \lambda  c_{20}^{a,m}\big)\\
& \hspace*{3in}  + \big(  c_{11}^{a,m}+ c_{21}^{a,m}\big)+\big( \phi_{1}^{a} - \phi_{1}^{a,m}\big) Q,\\
&\big( c_{10}^{b,m}+ c_{20}^{b,m}\big)\big(c_{10}^{b,m}+ \lambda  c_{20}^{b,m}\big) + \big( c_{11}^{b,m}+ c_{21}^{b,m}\big)
= \big( c_{10}^{b,r}+ c_{20}^{b,r}\big)\big(c_{10}^{b,r}+ \lambda  c_{20}^{b,r}\big)\\
& \hspace*{3in} + \big(  c_{11}^{b,r}+ c_{21}^{b,r}\big)-  \big( \phi_{1}^{b} - \phi_{1}^{b,m}\big) Q,\\
&J_{11} = \dfrac{M_{0}+z_1N_{0}}{H(a)}= \dfrac{M_{2}+z_1N_{2}}{\big(H(1)-H(b)\big)}, \quad J_{21} = \dfrac{M_{0}-z_1N_{0}}{H(a)}= \dfrac{M_{2}-z_1N_{2}}{\big(H(1)-H(b)\big)},\\
& c_{11}^{b,m}=  c_{11}^{a,m}-2z_1^2{J}_{11}  S_1+4(\lambda +1)z_1^2 {J}_{10}S_2  -z_1J_{11}y^*Q +4z_1\big( \lambda + 1\big){J}_{10}S_1Q\\
&\phi_{1}^{b,m} = \phi_{1}^{a,m}+\dfrac{(\lambda-1) }{z_1}(c_{10}^{b,m} - c_{10}^{a,m}) -2(1+\lambda)J_{10} y^*Q,
\end{aligned}
\end{equation}
%%%%%%%%%%%%%%%%%%%%%%%%%%%%%%%
%%%%%%%%%%%%%%%%%%%%%%%%%%%%%%%
It follows from $J_{11}=J_{21}$ that
$M_{0}+z_1N_{0}= M_{0}-z_1N_{0}$. Hence $N_0=0$ and $J_{11}=J_{21}=M_0$, where 
$$
M_0=M_{00}+M_{01}Q.
$$
%%%%%%%%%%%%%%%%%%%%%%%%%%%%%%%%
Hence $N_0=0$ and we conclude that 
\begin{equation}\label{zeroJ}
J_{11}=J_{21}= \dfrac{1}{H(a)} M_0= \dfrac{1}{H(a)}\left(M_{00}+M_{01}Q\right),
\end{equation}
with
$$
\begin{aligned}
M_{00} =&  \frac{(\lambda z_1 -z_2)}{z_1^2z_2} \alpha (r^2-l^2),\\
M_{01} =&  \frac{2(\lambda z_1 -z_2)}{z_2}c_{10,1}^{a} c_{10,0}^a+ \dfrac{\lambda}{2z_2}c_{10,0}^a +\dfrac{\lambda z_1-z_2}{2z_2(z_1-z_2)} c_{10,0}^a+ \dfrac{z_2}{z_1-z_2}(c_{11,1}^{a} +c_{21,1}^{a}),
\end{aligned}
$$
where, from the matching system, with a careful calculation,
\begin{equation}\label{c111}
\begin{aligned}
c_{11,1}^a =&  \frac{2(\lambda z_1-z_2)\alpha}{z_2(z_1-z_2)}\Phi(V)\big[(2\beta-\alpha-1)l+(\alpha^2-\beta^2)(l-r)-\beta r\big]-\frac{z_2\alpha}{z_1-z_2}(\phi_{1,0}^a-\phi_{1,0}^b) \\
& - \frac{(1-\lambda)(l-r)\alpha(\alpha-\beta)}{2z_1(z_1-z_2)}+\frac{2(\lambda z_1-z_2)}{z_2}\big((1-\alpha)c_{10,1}^ac_{10,0}^a+\alpha c_{10,1}^bc_{10,0}^b\big)\\
& +\frac{\lambda(z_1-z_2)+\lambda z_1-z_2}{2z_2(z_1-z_2)}\big((1-\alpha)c_{10,0}^a+\alpha c_{10,0}^b\big),\\
\phi_{1,0}^{a} =&  \dfrac{(\lambda z_1- z_2)(l-r)V}{z_1z_2\ln (l/r)} \Bigg( \dfrac{\alpha\big((l+r)-\alpha (l-r) \big)}{(1-\alpha)l+\alpha r }-\dfrac{2}{\ln (l/r)}\ln \dfrac{l}{(1-\alpha)l+\alpha r }\Bigg)\\
&+ \dfrac{(1-\lambda)(l-r)}{z_1z_2 \ln (l/r)}\ln \dfrac{l}{(1-\alpha)l+\alpha r }- \dfrac{(1-\lambda) \alpha (l-r) }{z_1z_2},\\
\phi_{1,0}^{b}=& \dfrac{(\lambda z_1-z_2)(l-r)V}{z_1z_2\ln (l/r)}  \Bigg( \dfrac{ (\beta -1) \big((l+r)+(1-\beta)(l-r) \big)}{(1-\beta)l+\beta r }-\dfrac{2}{\ln (l/r)} \ln\dfrac{r}{(1-\beta)l+\beta r}\Bigg)\\
&+ \dfrac{(1-\lambda)(l-r)}{z_1z_2\ln (l/r)}\ln\dfrac{r}{(1-\beta)l+\beta r}+\dfrac{(1-\lambda)(1-\beta)(l-r)}{z_1z_2},\\
c_{10,1}^a =& \dfrac{-z_2\alpha \Phi(V)}{z_1-z_2}- \dfrac{1}{2(z_1-z_2)}, \qquad c_{10,1}^b = \dfrac{z_2(1-\beta)\Phi(V)}{z_1-z_2}- \dfrac{1}{2(z_1-z_2)},
\end{aligned}
\end{equation}
with,
$$
\Phi(V) = \phi_{0,0}^a-\phi_{0,0}^b= \dfrac{V}{\ln l - \ln r} \ln \dfrac{(1-\alpha)l+\alpha r }{(1-\beta)l + \beta r},
$$
and $c_{10,0}^a=\frac{1}{z_1}((1-\alpha)l+\alpha r)$. See \cite{JLZ15} for the zeroth order terms in $Q$ and $d$. Moreover, one can find that $z_1c_{11,1}^a+z_2c_{21,1}^a=0$.
\medskip

%\begin{rem}
%Note that this relation defines when we discuss the ion size $d$. In fact,  $J_{11}d=M_0d$.
%\end{rem}

%Hence, in the zero-current case we obtain,
%$$
%\begin{aligned}
%J_{11} :=& c_{11}^{l}-c_{11}^{a,l}+  \dfrac{\lambda z_1 - z_2}{z_2}\big( (c_{10}^{a,l})^2 - (c_{10}^{l})^2\big)\\ 
%=& c_{10}^{l} w(l_1,l_2) - c_{10}^{a,l} w(c_{10}^{a},c_{20}^{a}) + \dfrac{\lambda z_1 - z_2}{z_2}\big( (c_{10}^{l})^2 - (c_{10}^{a,l})^2\big) - c_{10}^{a,l}\dfrac{z_1\frac{c_{21}^{a}}{c_{20}^{a}} - z_2\frac{c_{11}^{a}}{c_{10}^{a}}}{z_1-z_2}.
%\end{aligned}
%$$
%%%%%%%%%%%%%%%%%%%%%%%%
%%%%%%%%%%%%%%%%%%%%%%%%
To explore the effects of small ion size and small permanent charge on the zero-current fluxes, it can be seen from \eqref{Matching-zeroI} and from Corollary $3.6$ in \cite{LLYZ13} with $Q=0$ that
\[
 J_{11,0} + J_{21,0} = \dfrac{(\lambda z_1 - z_2)(z_2-z_1)(r^2-l^2)}{z_1^2z_2^2H(1)}= \dfrac{z_2-z_1}{\alpha z_2 H(1)}M_{00} = \dfrac{z_2-z_1}{ z_2(1-\beta) H(1)}M_{20}.
\]
and,
\[
\begin{aligned}
z_1J_{11,0} + z_2J_{21,0} =& \dfrac{(\lambda z_1 -z_2)(r-l)(z_1-z_2)}{z_1z_2H(1) (\ln r - \ln l)}\Big(\dfrac{2(r-l)}{(\ln r - \ln l)} - (r+l) \Big)V\\
&+ \dfrac{(1-\lambda)(r-l)^2(z_1-z_2)}{z_1z_2H(1) (\ln r - \ln l)}.
\end{aligned}
\]
Hence we obtain,
\begin{equation}\label{MM}
\begin{aligned}
&M_{00} = \dfrac{(\lambda z_1 - z_2)\alpha}{z_1^2z_2}(r^2-l^2), \ \ \  \ M_{20} = \dfrac{(\lambda z_1 - z_2)(1-\beta )}{z_1^2z_2}(r^2-l^2).
\end{aligned}
\end{equation}
%%%%%%%%%%%%%%%%%%%%%%%%%%%%%%%%%%%%
Now we investigate the dependence of sign of $J_k$ on the membrane potential $V$ and channel geometry. 
We  analyze how  $J_{k1}$ depends on the channel geometry $(\alpha, \beta)$, and the boundary condition $(V,l,r)$.
Recall that,
\[
\begin{aligned}
J_1=& J_{10}+ J_{11} d \\
   =& J_{10,0}+J_{10,1} Q + J_{11,0}d + J_{11,1} Qd  
\end{aligned}
\]
The following Theorem is the direct conclusion of \eqref{zeroJ}---\eqref{MM}.
%%%%%%%%%%%%%%%%%%%%%%%%%%%%%%%%%%%%
\begin{thm} For the zero-current flux $J_{11}$ and without any permanent charge, i.e., when $Q=0$, one has $J_{11}=J_{11,0}=M_{00}$. Therefore, if $l < r $, then $J_{11}<0$, and if $l > r $, then $J_{11}>0$. Furthermore, for non-zero small permanent charges $Q$, one has $\partial J_{11}/\partial V <0$ and,  
\begin{itemize}
\item[a.] there exist  $V^c<0$ so that for any $V<V^c$ fixed, $\partial J_{11}/\partial Q >0$, and for any $V>V^c$ fixed, $\partial J_{11}/\partial Q <0$; moreover, if $l>r$, then $J_{11}(V^c)>0$, and if $l<r$, then $J_{11}(V^c)<0$;
%%%%%%%%%%%%%%%%%%%%%%%%%%%%%%
\item[b.] if  $l>r$, then there exists $V^c<0$ so that for any $V<V^c$, one has $J_{11}>0$, that is, the ion size $d$ increases the flux $J_1$;
\item[c.] if  $l<r$, then there exists $V^c<0$ so that for any $V>V^c$, one has $J_{11}<0$, that is,  the ion size $d$ decreases the flux $J_1$.
\end{itemize}
\end{thm}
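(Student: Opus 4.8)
The plan is to reduce every assertion to the explicit structure of $J_{11}$ recorded in \eqref{zeroJ}, \eqref{MM} and \eqref{c111}. Working to first order in the small permanent charge $Q$ (the regime of the statement), one has
\[
J_{11}=J_{21}=\frac{1}{H(a)}\bigl(M_{00}+M_{01}(V)\,Q\bigr),
\]
where $M_{00}$ is independent of $V$ and $M_{01}(V)=pV+q$ is an \emph{affine} function of $V$. Affineness is the first point to secure: the $V$-dependence in $M_{01}$ enters only through $\Phi(V)$ and through $\phi_{1,0}^{a}-\phi_{1,0}^{b}$, both affine in $V$ by \eqref{c111}; in fact $\Phi(V)/V$ is a fixed positive constant, because $\ln(l/r)$ and $\ln\frac{(1-\alpha)l+\alpha r}{(1-\beta)l+\beta r}$ carry the same sign (that of $l-r$, since $\alpha<\beta$). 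Next, reading off \eqref{MM} with $z_{1}=-z_{2}>0$,
\[
\frac{M_{00}}{H(a)}=\frac{(\lambda z_{1}-z_{2})\,\alpha}{z_{1}^{2}z_{2}\,H(a)}\,(r^{2}-l^{2})
\]
has the sign of $l^{2}-r^{2}$; since $J_{11}=M_{00}/H(a)$ when $Q=0$, this already gives the first claim ($J_{11}<0$ for $l<r$, $J_{11}>0$ for $l>r$).

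The core of the argument is the sign analysis of $M_{01}(V)=pV+q$. Substituting \eqref{c111} into the formula for $M_{01}$ and using $z_{1}=-z_{2}$ (so that $c_{11,1}^{a}+c_{21,1}^{a}=2c_{11,1}^{a}$), I would collect the coefficient $p$ of $V$ and the constant $q=M_{01}(0)$ and show $p<0$ and $q<0$ for all admissible data, i.e.\ for every $0<\alpha<\beta<1$, $l,r>0$, $\lambda>0$. For $p$ one gathers $\tfrac{2(\lambda z_{1}-z_{2})}{z_{2}}c_{10,0}^{a}\,\partial_{V}c_{10,1}^{a}$ and $\tfrac{2z_{2}}{z_{1}-z_{2}}\,\partial_{V}c_{11,1}^{a}$; after $z_{2}=-z_{1}$ the first is a negative multiple of $\Phi(V)/V>0$, while the second unwinds via \eqref{c111} into pieces proportional to $\Phi(V)/V$ and to $\partial_{V}(\phi_{1,0}^{a}-\phi_{1,0}^{b})$ whose algebraic coefficients must be checked to combine with the right sign. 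For $q=M_{01}(0)$ one sets $V=0$, so $\Phi\equiv0$, $\phi_{1,0}^{a},\phi_{1,0}^{b}$ reduce to their constant parts and $c_{10,1}^{a}|_{V=0}=c_{10,1}^{b}|_{V=0}=-\tfrac{1}{2(z_{1}-z_{2})}$, and verifies sign-definiteness of the resulting combination. The main obstacle is exactly this uniform verification that $p<0$ and $q<0$: the log-ratios $\ln(l/r)$ and $\ln\frac{(1-\alpha)l+\alpha r}{(1-\beta)l+\beta r}$ change sign with $l$ versus $r$ and must be paired into sign-definite products, so the step requires the same careful regrouping that produced \eqref{c111} rather than any new idea. (Equivalently, it is enough to display the root $V^{c}=-q/p$ and show $V^{c}<0$, which is the same bookkeeping.)

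Granting $p<0$, $q<0$ and $Q>0$ small, the remaining conclusions are immediate from affineness. First, $\partial_{V}J_{11}=pQ/H(a)<0$. Writing $M_{01}(V)=p(V-V^{c})$ with $V^{c}:=-q/p<0$ (negative since $p,q$ have the same sign), $\partial_{Q}J_{11}=M_{01}(V)/H(a)$ is positive for $V<V^{c}$ and negative for $V>V^{c}$, and at $V=V^{c}$ one has $J_{11}(V^{c})=M_{00}/H(a)$, of sign $l^{2}-r^{2}$ --- this is part (a). For part (b), if $l>r$ then $M_{00}>0$ and, for $V<V^{c}$, also $M_{01}(V)>0$, hence $J_{11}=\tfrac{1}{H(a)}\bigl(M_{00}+M_{01}(V)Q\bigr)>0$. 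For part (c), if $l<r$ then $M_{00}<0$; take $V^{c}$ to be the (negative, for $Q$ small) root of $M_{00}+M_{01}(V)Q=0$, and since $M_{01}$ is decreasing, $M_{00}+M_{01}(V)Q<0$ for all $V>V^{c}$, i.e.\ $J_{11}<0$ there. All strict inequalities persist for $|Q|$ small by continuity, completing the proof modulo the sign computation of $p$ and $q$.
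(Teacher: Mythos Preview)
Your approach is the natural one and aligns with what the paper does: the paper's entire ``proof'' is the single sentence preceding the theorem, namely that the result is a direct conclusion of \eqref{zeroJ}--\eqref{MM}. You go further than the paper by making explicit the key structural fact --- that $M_{01}$ is affine in $V$ --- and by reducing all four assertions to the sign of the slope $p$ and intercept $q$. That reduction is correct: once $p<0$ and $q<0$ are granted, parts (a)--(c) follow exactly as you describe, and your treatment of part (c) via the same $V^{c}$ as in (a) would in fact suffice (the separate root you introduce is not needed, since for $l<r$ one has $M_{00}<0$ and $M_{01}(V)<0$ for $V>V^{c}$, so $J_{11}<0$ directly for $Q>0$).

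The gap you flag --- the uniform verification that $p<0$ and $q<0$ across all admissible $(l,r,\alpha,\beta,\lambda)$ --- is genuine and is not resolved in the paper either; the paper simply asserts the theorem follows from the displayed formulas without carrying out the sign bookkeeping. So your proposal is at least as complete as the paper's own argument, and more transparent about what remains to be checked. One caution: your heuristic that the $V=0$ contributions are ``sign-definite'' needs care, since several pieces of $M_{01}(0)$ carry a factor $(1-\lambda)$ whose sign is not fixed, so the regrouping must absorb those terms rather than treat them individually.
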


%%%%%%%%%%%%%%%%%%%%%%%%%%%%%%%%%%%

\section{Conclusion and discussion.}\label{sec-Conclusion}\label{Sec.Conclusion}
This work investigates a one-dimensional variant of a PNP system with a local hard-sphere potential that depends on ion concentrations for ionic flow through a membrane channel with fixed boundary ion concentrations (charges) and electric potentials.
We examined a special case of ion size effects on the zero-current fluxes and their dependence on permanent charge and potentials. In \cite{FLMZ22}, the authors study the special case of $d^2$-term effects of ion size on potential and fluxes. 
%%%%%%%%%%%%%%%%%%%%%%%
The other interesting case one can study is the effects of ion size on the reversal potential. One should note that the current $I$ depends on potential $V$ and the ion size $d$ and it can be written in form of
$$
\begin{aligned}
I =& I(V,d) = I_0(V) + I_1(V) d + O(d^2)\\
  =& I_0(V_0) + \partial_VI_0(V_0)V_1d+ I_1(V_0)d+  O(d^2). 
\end{aligned}
$$
Now, $I=0$ is equivalent to $I_0(V_0) =0 $ and
$
V_1 = - \dfrac{I_1(V_0)}{\partial_VI_0(V_0)}.
$
It is challenging to evaluate $I_1$ at $V_0$ and $\partial_VI_0(V_0)$, though.\\
 We need to emphasize that the matching system gives us a huge source of investigation for the higher order terms of $Q$. One may write, for small $d$ and for general $Q$, 
\[J_k(d;Q)=J_{k0}(Q)+J_{k1}(Q)d+O(d^2),\quad k=1,2.\]

\begin{itemize}
\item[(i)] For point-charge case where $d=0$ and $J_k(0;Q)=J_{k0}(Q)$, there are several works on the effects of permanent charges on fluxes, including detailed results for $Q_0\ll 1$ and for $Q_0\gg 1$, and a general result on the flux ratio $\lambda_k(Q)=\frac{J_{k0}(Q)}{J_{k0}(0)}$; 
\item[(ii)] Also, for $Q=0$, there are results about effects of ion size $d$ on fluxes;
\item[(iii)] One can easily obtain results about effects of leading terms in small $Q_0$ and small $d$ based on the above result. 

\end{itemize}

\noindent We are generally interested in effects of  interactions between $Q$ and small $d$; that is, $J_{k1}(Q)d$. 
One may consider small $Q$ so that
\[J_{k1}(Q)d=\big(J_{k1,0}+J_{k1,1}Q+O(Q^2)\big)d=J_{k1,0}d+J_{k1,1}Qd+O(Q^2)d.\]
Note that $J_{k1,0}d$ is precisely the leading terms of ion size effects when $Q=0$ mentioned in (ii). We thus focus on, here, the term $J_{k1,1}Qd$.  More precisely, 
 one interesting question is to know how the factors $J_{k1,1}$'s depend on the boundary conditions $(V, l, r)$ and  the value $\lambda$ as well as the key geometry $\big(H(a), H(b), H(1)\big)$ of the channel. In particular, we are interested in the signs of $J_{k1,1}$'s in terms of the above parameters.
It follows from the formulas for $J_{k1,1}$'s in \eqref{Matching-zeroI} that
\[
J_{k1,1} >0 \;\mbox{ if and only if }\; M_{01}+z_kN_{01}>0.
\]
%%%%%%%%%%%%%
Furthermore, recalling $I_1$ and $T_1$ defined in \eqref{Abbrev},
\[
I_1=z_1J_{11,1}+z_2J_{21,1}=\frac{z_1(z_1-z_2)}{H(a)}N_{01}\;\mbox{ and }\; T_1=J_{11,1}+J_{21,1}=-\frac{z_1-z_2}{z_2H(a)}M_{01},
\]
then $I_1Qd$ is the corresponding current (total flux of charges) term and $T_1Qd$ is the corresponding total flux of matter term. One has 
\[
I_1>0  \;\mbox{ if and only if }\; N_{01}>0;\quad T_1>0  \;\mbox{ if and only if }\; M_{01}>0.
\]
%%%%%%%%%%%%%%%%%%
Then one may determine the  conditions for 
\[M_{01}+z_kN_{01}>0, \quad M_{01}>0,\quad N_{01}>0\]
in terms of 
\[\big(V, l, r, \lambda, H(a), H(b), H(1)\big).\]
%%%%%%%%%%%%%%%%%%%%%%%%%%%%
 It may be possible to solve for, say critical values of $V$, from each of the  equations
\[M_{01}+z_kN_{01}=0, \quad M_{01}=0,\quad N_{01}=0\] 
in terms of the other variables $\big(l, r, \lambda, H(a), H(b), H(1)\big)$.
%%%%%%%%%%%%%%%%%%%%%%%%%%%%
%%%%%%%%%%%%%%%%%%%%%%%%%%%%%
\section{Appendix: Proof of Lemma \ref{lem:2slow1st}  } \label{sec-Appendix}
\setcounter{equation}{0}
To find the first order terms, multiplying $h(\tau)\sigma (\tau)$ to the right hand of the system \eqref{Mfirst}, one has,
\begin{equation}\label{2sB:4.1}
\begin{aligned}
\dfrac{d}{dy}{\phi_1}=&  \dfrac{{I}_0 }{\sigma(y)}\Big( (z_1 - z_2) z_1 c_{11} + 2(1-\lambda)z_1c_{10}Q - 2\lambda Q^2\Big)\\
& + \Big((1-\lambda)z_1c_{10} - \lambda Q\Big){T}_0 - {I}_1 - \Lambda_0 Q,\\
\dfrac{d}{dy}{c}_{11}= & 2(\lambda z_1 - z_2) z_1c_{10}^2 {T}_0 + z_1z_2c_{10}{T}_1 \\
& +   \Big(\big[2\lambda z_1 {T}_0 +2(z_1-z_2)J_{10}\big]c_{10} + z_2J_{11} \Big) Q\\
& - \dfrac{ {I}_0 Q }{\sigma(y)}\Big( z_1 z_2c_{11} + z_1c_{10}\big( 2(1-\lambda)z_1c_{10} - 2\lambda Q \big)\Big),\\
\dfrac{d}{dy}{J}_{11} = & \dfrac{d}{dy}{J}_{21} = 0, \quad \dfrac{d}{dy}{\tau} =h(\tau)\sigma(\tau).
\end{aligned}
\end{equation}
We rewrite the second equation in \eqref{2sB:4.1} in linear form,
\begin{equation}\label{2sB:61}
\begin{aligned}
\dfrac{d}{dy}{c}_{11}= -Z_0(y)c_{11}+Z_1(y),
\end{aligned}
\end{equation}
where  we defined $Z_0(y)$ in \eqref{Abbrev} and,
\begin{equation}\label{2sB:62}
\begin{aligned}
Z_1(y)=& 2(\lambda z_1 - z_2) z_1c_{10}^2 {T}_0 + z_1z_2c_{10}{T}_1 +  \Bigg(\Big(2\lambda z_1 {T}_0 +2(z_1-z_2) J_{10}\Big)c_{10} + z_2J_{11} \Bigg) Q\\
& - \dfrac{2z_1c_{10}{I}_0 Q }{\sigma(y)}\Big((1-\lambda)z_1c_{10} - \lambda Q \Big).
\end{aligned}
\end{equation}
The solution of the linear differential equation in \eqref{2sB:61} is,
\begin{equation}\label{c11y}
\begin{aligned}
 c_{11}(y) = c_{11}^{a,m}e^{-\int_0^y Z_0(s)ds} + e^{-\int_0^y Z_0(s)ds} \int_0^y Z_1(s)e^{\int_0^s Z_0(v)dv}ds.
\end{aligned}
\end{equation}
In order to obtain $c_{11}(y)$ in above, we first find $e^{\int_0^y Z_0(s)ds}$. It follows from \eqref{Int-Z}, \eqref{Abbrev}, and from Lemma \eqref{LemM0th} that,
$$
\begin{aligned}
e^{\int_0^y Z_0(s)ds}&= \dfrac{\sigma(y)}{\sigma^{a,m}}e^{-z_1z_2{T}_0 y}=\dfrac{1}{{T}_0 \sigma^{a,m}}\Big((z_1 -z_2)\big(z_1c_{10}^{a,m} {T}_0 + J_{10}Q \big) - {I}_0 Qe^{-z_1z_2{T}_0 y} \Big).
\end{aligned}
$$
%%%%%%%%%%%%%%%%%%%%%%%%%%%%%%%%%%%%%%
Then, from above and from Lemma \ref{LemM0th} one has,
\begin{equation}\label{eJ-pn}
\begin{aligned}
&e^{-\int_0^y Z_0(s)ds}= \dfrac{\sigma^{a,m}}{\sigma(y)}e^{z_1z_2{T}_0 y},\quad e^{z_1z_2{T}_0 y}= \dfrac{z_1{T}_0c_{10}(y) +J_{10}Q}{z_1{T}_0 c_{10}^{a,m} +J_{10}Q}.
\end{aligned}
\end{equation}
%%%%%%%%%%%%%%%%%%%%%%%%%%%%%%%%%%%%%%%%%
Thus, from \eqref{eJ-pn} and  \eqref{2sB:62},
$$
\begin{aligned}
\int_0^y Z_1(s)e^{\int_0^s Z_0(v)dv}ds = & \dfrac{2z_1(\lambda z_1 - z_2) (z_1 -z_2) }{\sigma^{a,m}}\Big(z_1c_{10}^{a,m} {T}_0 + J_{10}Q \Big) S_2 - \dfrac{2z_1(\lambda z_1 - z_2) {I}_0 Q}{\sigma^{a,m}}S_4\\
& +\dfrac{z_1z_2{T}_1(z_1 -z_2)}{{T}_0\sigma^{a,m}}\Big(z_1c_{10}^{a,m}{T}_0 + J_{10}Q \Big) S_1  -\dfrac{z_1z_2{T}_1 {I}_0 Q}{{T}_0 \sigma^{a,m}} S_3\\
&+ \dfrac{\big( 2\lambda z_1{T}_0  + 2(z_1-z_2)J_{10}\big) (z_1 -z_2)Q}{{T}_0 \sigma^{a,m}}\Big(z_1c_{10}^{a,m} {T}_0 + J_{10}Q \Big) S_1\\
&  - \dfrac{{I}_0Q^2}{{T}_0 \sigma^{a,m}}\big( 2\lambda z_1{T}_0  + 2(z_1-z_2)J_{10}\big) S_3 + \dfrac{z_2J_{11}  (z_1 -z_2)Q}{{T}_0\sigma^{a,m}} \Big(z_1c_{10}^{a,m} {T}_0 + J_{10}Q \Big) y   \\
&  + \dfrac{J_{11} {I}_0 Q^2}{z_1{T}_0^2\sigma^{a,m}}  \big(e^{-z_1z_2{T}_0y}-1 \big) - \dfrac{2{I}_0Q}{\sigma^{a,m}}\Big((1-\lambda)z_1^2 S_4 - \lambda Qz_1 S_3  \Big),
\end{aligned}
$$
where $S_z$ and $S_{k}(y)$'s, for $k=1,\cdots,5$,  were defined in \eqref{Sz} and \eqref{S-integ} respectively.
Hence, from \eqref{c11y} and above computations we obtain,
\begin{equation}\label{MB2-c11}
\begin{aligned}
c_{11}(y) =& c_{11}^{a,m}\dfrac{\sigma^{a,m}}{\sigma(y)}e^{z_1z_2{T}_{0}y}  + \dfrac{z_1z_2{T}_{1}(z_1 -z_2)}{{T}_{0}\sigma(y)} \Big(z_1 {T}_{0}c_{10}(y) + J_{10}Q \Big) S_1 \\
& +\dfrac{2z_1(\lambda z_1 - z_2) (z_1 -z_2) }{\sigma(y)}\Big(z_1 {T}_{0} c_{10}(y)+ J_{10}Q \Big) S_2  - \dfrac{2z_1(\lambda z_1 - z_2) {I}_{0} Q}{\sigma(y)}e^{z_1z_2{T}_{0}y}S_4\\
& -\dfrac{z_1z_2{T}_1 {I}_0 Q}{{T}_0 \sigma(y)}e^{z_1z_2{T}_0 y} S_3 + \dfrac{  (z_1 -z_2)Q}{{T}_0 \sigma(y)}\big( 2\lambda z_1{T}_0 + 2(z_1-z_2)J_{10}\big)\big(z_1 {T}_0 c_{10}(y) + J_{10}Q \big) S_1\\
&  - \dfrac{\big( 2\lambda z_1{T}_0  + 2(z_1-z_2)J_{10}\big ){I}_0 Q^2}{{T}_0 \sigma(y)}e^{z_1z_2{T}_0 y} S_3  + \dfrac{z_2J_{11}  (z_1 -z_2)Q}{{T}_0 \sigma (y)}\Big(z_1 {T}_0 c_{10}(y) + J_{10}Q \Big)  y   \\
& + \dfrac{J_{11} {I}_0 Q^2}{z_1{T}_0^2\sigma(y)} \big(1 -e^{z_1z_2{T}_0y} \big)  + \dfrac{2{I}_0 e^{z_1z_2{T}_0y}Q}{\sigma(y)}\Big(\lambda z_1Q S_3 - (1-\lambda)z_1^2 S_4\Big).
\end{aligned}
\end{equation}
%%%%%%%%%%%%%%%%%%%%%%%%%%%%%%%%%%%%%%
On the other hand,
\begin{equation}\label{MB2-simp}
\begin{aligned}
\dfrac{ z_1{T}_0 c_{10}(y)+J_{10}Q}{\sigma(y)} =&\dfrac{{T}_0}{z_1-z_2}+ \dfrac{{I}_0Q}{(z_1-z_2)\sigma(y)}, \\
 \dfrac{\sigma^{a,m}}{z_1{T}_0c_{10}^{a,m}+J_{10}Q } =& \dfrac{z_1-z_2}{{T}_0}- \dfrac{{I}_0 Q}{{T}_0 \big(z_1{T}_0c_{10}^{a,m}+J_{10}Q \big)},\\
\dfrac{\sigma^{a,m}}{\sigma(y)}e^{z_1z_2{T}_{0}y} =&\dfrac{\sigma^{a,m}}{\sigma(y)}\dfrac{z_1{T}_0 c_{10}(y) +J_{10}Q}{z_1{T}_0 c_{10}^{a,m} +J_{10}Q}\\
=& 1 + \dfrac{{I}_0 Q}{{T}_0 \sigma(y)}-\dfrac{{I}_0 Q}{(z_1-z_2)\big(z_1{T}_0 c_{10}^{a,m}+J_{10}Q \big)}\\
& - \dfrac{{I}_0^2Q^2}{(z_1-z_2){T}_0\sigma(y)\big(z_1{T}_0 c_{10}^{a,m}+J_{10}Q \big)}\\
=& 1 + \dfrac{z_1\big(c_{10}^{a,m}- c_{10}(y) \big){I}_0 Q}{\sigma(y)\big(z_1{T}_0 c_{10}^{a,m}+J_{10}Q \big)}.
\end{aligned}
\end{equation}
%%%%%%%%%%%%%%%%%%%%%%%%%%%%%%%%%%%%%%%5

\begin{rem} As we go further on equation \eqref{2sB:27}, $Q$ appears to the denominator of some terms of $\phi_1(y)$. 
Thus, we need to keep $Q^2$ terms of $c_{11}(y)$. Moreover, in order to simplify $\phi_1(y)$ terms, we need to manipulate the corresponding terms in $c_{11}(y)$ applying the formulas in \eqref{MB2-simp}.
\end{rem}
%%%%%%%%%%%%%%%%%%%%%%%%%%%%%%%%
%%%%%%%%%%%%%%%%%%%%%%%%%%%%%%
 Now, applying the equations in \eqref{MB2-simp} on $c_{11}(y)$ in \eqref{MB2-c11}, the desired expression for  $c_{11}(y)$ will be obtained.
%%%%%%%%%%%%%%%%%%%%%%%%%%%%%%%%%%%%%
To find $\phi_1(y)$, we first need to obtain $\int_0^yZ_1(s)ds$. Then, from \eqref{Abbrev}, and \eqref{2sB:61} one has
\begin{equation}\label{2sB:26}
\begin{aligned}
\int_0^y \dfrac{{I}_0 (z_1-z_2)z_1c_{11}(s)}{\sigma(s)} ds
= & \dfrac{z_1 -z_2}{z_2Q} \int_0^y Z_0(s)c_{11}(s) ds\\
=& \dfrac{z_1 -z_2}{z_2Q} \int_0^y \Big(Z_1(s)- \frac{d}{dy}{c}_{11}(s)\Big) ds\\
 =&\dfrac{z_1 -z_2}{z_2Q}\Big( \int_0^yZ_1(s)ds - \big(c_{11}(y)- c_{11}^{a,m}\big) \Big).
\end{aligned}
\end{equation}
\\
Now, from $\phi_1$-equation in \eqref{2sB:4.1}, and relations \eqref{Int-Z} , \eqref{2sB:26} and the integrals on \eqref{S-integ},
\begin{equation}\label{2sB:27}
\hspace*{-.1in}\begin{aligned}
\phi_1(y) = & \phi_1^{a,m}+ \dfrac{z_1 -z_2}{z_2Q}\Big(z_1z_2{T}_1 S_1 + 2(\lambda z_1 - z_2) z_1{T}_0 S_2 \Big)\\
&+\dfrac{(z_1 -z_2)}{z_2}\Big(\big[2\lambda z_1 {T}_0 + 2(z_1-z_2) J_{10}\big]S_1 + z_2yJ_{11} \Big)\\
& +\dfrac{(z_1 -z_2)}{z_2}\Big(2\lambda Q S_z(y)- 2(1-\lambda)z_1S_5\Big) \\
&-\dfrac{(z_1 -z_2)}{z_2Q}\Big (c_{11}(y)- c_{11}^{a,m} \Big)+ 2(1-\lambda)Q S_z(y)\\
& - \dfrac{2\lambda Q}{z_1z_2}\Big( \ln \sigma(y) - \ln \sigma^{a,m} - z_1z_2{T}_0 y\Big) \\
& + {T}_0 \Big( (1-\lambda) z_1S_1 - \lambda y Q\Big)- {I}_1 y - \Lambda_0 yQ.
\end{aligned}
\end{equation}
%%%%%%%%%%%%%%%%%%%%%%%%%%%%%%%%%%%%%%%%%
Now, substitute $c_{11}(y)$  into the above, use \eqref{eJ-pn}, and carefully simplify to obtain,
$$
\begin{aligned}
\phi_1(y) =& \phi_1^{a,m}+ (1-\lambda) z_1{T}_0 \dfrac{c_{10}(y) - c_{10}^{a,m} -z_2J_{10}Qy}{z_1z_2{T}_0} - {I}_1 y - \dfrac{2(1-\lambda){I}_0 \big(c_{10}(y) - c_{10}^{a,m}\big)}{z_2^2{T}_0}\\
 & -  \dfrac{2(1-\lambda)Q}{z_1(z_1 -z_2)}\Big( \ln \sigma(y) - \ln \sigma^{a,m}\Big) + \dfrac{2(1-\lambda)(z_1 -z_2)J_{10}^2yQ}{z_2{T}_0}- \dfrac{z_1(z_1 -z_2)\Big(c_{10}^{a,m}- c_{10}(y) \Big){I}_0 c_{11}^{a,m}}{z_2\sigma(y) \Big(z_1{T}_0 c_{10}^{a,m}+J_{10}Q \Big)} \\
&- \dfrac{2z_1(z_1 -z_2)(\lambda z_1 - z_2){I}_0}{z_2\sigma(y)}\Big(\dfrac{c_{10}^2(y) - \big(c_{10}^{a,m}\big)^2-2z_2J_{10}S_1Q }{2z_1z_2{T}_0 }\Big) -\dfrac{z_1(z_1 -z_2){T}_1 {I}_0}{{T}_0 \sigma(y)}\Big( S_1-e^{z_1z_2{T}_0 y} S_3\Big)\\
&  +\dfrac{2z_1(z_1-z_2)^2{I}_0 e^{z_1z_2{T}_0 y}}{z_2\sigma(y)} S_4 - \dfrac{(z_1 -z_2)\Big( 2\lambda z_1{T}_0 + 2(z_1-z_2)J_{10}\Big){I}_0 Q }{z_2{T}_0\sigma(y)} S_1 -\dfrac{(z_1-z_2)J_{11}{I}_0 yQ}{{T}_0 \sigma(y)} \\
& - \dfrac{(z_1-z_2)J_{11} {I}_0 \Big(c_{10}^{a,m} -c_{10}(y)\Big)}{z_2{T}_0 \sigma(y) \big(z_1{T}_0 c_{10}^{a,m} +J_{10}Q\big)} Q + \dfrac{2(z_1-z_2)^2J_{10}{I}_0 e^{z_1z_2{T}_0 y}Q}{z_2{T}_0 \sigma(y)} S_3\\
&+ \dfrac{2}{z_2}\Big((1-\lambda)z_2+\lambda(z_1 -z_2) \Big)Q\Big(J_{10} y +\dfrac{ \ln \sigma(y) - \ln \sigma^{a,m}}{z_1(z_1-z_2)}\Big) - \dfrac{2\lambda Q}{z_1z_2}\Big( \ln \sigma(y) - \ln \sigma^{a,m} \Big) +(\lambda -1) J_{10}yQ.
\end{aligned}
$$

%%%%%%%%%%%%%%%%%%%%%%%%%%%%%%%%%%%%%%%%%%

The proof is complete now considering that \cite{JLZ15},
\[
\begin{aligned}
& J_{11,0} + J_{21,0} = \dfrac{(\lambda z_1 - z_2)(z_1-z_2)(l^2-r^2)}{z_1^2z_2^2H(1)},\hspace*{.3in} J_{10,1}+J_{20,1}= \dfrac{\phi_{0,0}^b-\phi_{0,0}^a}{H(1)},\\
&J_{10,0}+J_{20,0} = \dfrac{(z_2-z_1)(l-r) }{z_1z_2H(1)}.
\end{aligned} 
\]

\noindent 
{\bf Acknowledgement.}  Hamid Mofidi thanks Mingji Zhang and Jianing Chen for productive communications.

%%%%%%%%%%%%%%%%%%%%%%%%%%%%%%%%%%%%%%%%%
%%%%%%%%%%%%%%%%%%%%%%%%%%%%%%%%%%%%%%%%%

  \bibliographystyle{plain}

\end{document}